
\RequirePackage{fix-cm}
\documentclass[smallextended, envcountsame]{svjour3}
\smartqed  

\usepackage[T1]{fontenc}
\usepackage[utf8]{inputenc}

\usepackage{calc}
\usepackage{amsmath}
\usepackage{amssymb}
\usepackage{mathtools}
\usepackage{bm}
\usepackage{csquotes}
\usepackage{mathrsfs}
\usepackage{varwidth}
\usepackage{caption}
\usepackage{subcaption}
\usepackage{rotating}
\usepackage{enumitem}
\usepackage{adjustbox}

\usepackage{xspace}

\newcommand*{\GAut}{$\mathscr{G}$-automaton\xspace}
\newcommand*{\GAuta}{$\mathscr{G}$-automata\xspace}

\usepackage{tikz}
\usetikzlibrary{automata, positioning, matrix, calc, shapes, decorations.pathreplacing}
\pgfdeclaredecoration{pencilline}{initial}{
  \state{initial}[width=+\pgfdecoratedinputsegmentremainingdistance,auto corner on length=1mm,]{
    \pgfpathcurveto%
    {
      \pgfqpoint{\pgfdecoratedinputsegmentremainingdistance}
      {\pgfdecorationsegmentamplitude}
    }
    {
      \pgfmathrand
      \pgfpointadd{\pgfqpoint{\pgfdecoratedinputsegmentremainingdistance}{0pt}}
      {\pgfqpoint{-\pgfdecorationsegmentaspect\pgfdecoratedinputsegmentremainingdistance}%
        {\pgfmathresult\pgfdecorationsegmentamplitude}
      }
    }
    {
      \pgfpointadd{\pgfpointdecoratedinputsegmentlast}{\pgfpoint{1pt}{1pt}}
    }
  }
  \state{final}{}
}

\spdefaulttheorem{fact}{Fact}{\bfseries}{\itshape}

\usepackage{tabularx}
\renewcommand{\problem}[3][]{%
  \par\vspace{0.125cm plus 0.1cm minus 0.05cm}\noindent\hskip\parindent\begin{tabularx}{\textwidth-\parindent}{@{}lX}%
    \if\relax\detokenize{#1}\relax%
    \else%
    \textnormal{\textbf{Constant:}}&#1\\%
    \fi%
    \textnormal{\textbf{Input:}}&#2\\%
    \textnormal{\textbf{Question:}}&#3\\%
  \end{tabularx}\vspace{0.125cm plus 0.1cm minus 0.05cm}\par%
}
\newcommand{\function}[2]{%
  \par\vspace{0.125cm plus 0.1cm minus 0.05cm}\begin{tabularx}{\textwidth-2\parindent}{lX}%
    \textnormal{\textbf{Input:}}&#1\\%
    \textnormal{\textbf{Output:}}&#2\\%
  \end{tabularx}\vspace{0.125cm plus 0.1cm minus 0.05cm}\par%
}

\newlength{\edgelength}
\newcommand{\trans}[4]{%
  \begin{tikzpicture}[auto, shorten >=1pt, >=latex, baseline=(l.base), inner sep=0pt, outer xsep=0.3333em]
    \node[outer xsep=0pt] (l) {\ensuremath{#1}};%
    \setlength{\edgelength}{\widthof{\scriptsize\ensuremath{#2/#3}}+0.5cm}%
    \node[base right=\edgelength of l] (r) {\ensuremath{#4}};%
    \path[->] (l.mid east) edge node[inner sep=0pt] {\scriptsize\ensuremath{#2/#3}} (r.mid west);%
    \node[inner sep=0pt] at (l.south west) {};
  \end{tikzpicture}%
}

\newcommand{\transa}[3]{%
  \begin{tikzpicture}[auto, shorten >=1pt, >=latex, baseline=(l.base), inner sep=0pt, outer xsep=0.3333em]
    \node (l) {\ensuremath{#1}};%
    \setlength{\edgelength}{\widthof{\scriptsize\ensuremath{#2}}+0.5cm}%
    \node[base right=\edgelength of l] (r) {\ensuremath{#3}};%
    \path[->] (l.mid east) edge node[inner xsep=0pt, inner ysep=0.2em] {\scriptsize\ensuremath{#2}} (r.mid west);%
  \end{tikzpicture}%
}

\newsavebox{\blankboxdisplay}
\savebox{\blankboxdisplay}{\hspace{0.1ex}\tikz[baseline=0.1em]{%
    \node [shape=rectangle, anchor=south, draw, solid, inner sep=0pt, minimum width=1ex, minimum height=0.9em] (char) {};}%
  \hspace{0.1ex}}
\newsavebox{\blankboxtext}
\savebox{\blankboxtext}{\hspace{0.1ex}\tikz[baseline=0.1em]{%
    \node [shape=rectangle, anchor=south, draw, solid, inner sep=0pt, minimum width=1ex, minimum height=0.9em] (char) {};}%
  \hspace{0.1ex}}
\newsavebox{\blankboxscript}
\savebox{\blankboxscript}{\scriptsize\hspace{0.1ex}\tikz[baseline=0.1em]{%
    \node [shape=rectangle, anchor=south, draw, solid, inner sep=0pt, minimum width=1ex, minimum height=0.9em] (char) {};}%
  \hspace{0.1ex}}
\newsavebox{\blankboxscriptscript}
\savebox{\blankboxscriptscript}{\tiny\hspace{0.1ex}\tikz[baseline=0.1em]{%
    \node [shape=rectangle, anchor=south, draw, solid, inner sep=0pt, minimum width=1ex, minimum height=0.9em] (char) {};}%
  \hspace{0.1ex}}
\newcommand{\blank}{\makeatletter%
  \ensuremath{\mathchoice%
    {\text{\usebox\blankboxdisplay}}%
    {\text{\usebox\blankboxtext}}%
    {\text{\usebox\blankboxscript}}%
    {\text{\usebox\blankboxscriptscript}}}%
  \makeatother}

\newsavebox{\circledOnebox}
\savebox{\circledOnebox}{\tikz[baseline=(s.base)]{\node[draw, circle, solid, inner sep=0.1mm] (s) {$1$};}}
\newcommand*{\circledOne}{\usebox{\circledOnebox}}

\newsavebox{\circledZerobox}
\savebox{\circledZerobox}{\tikz[baseline=(s.base)]{\node[draw, circle, solid, inner sep=0.1mm, outer sep=0pt] (s) {$0$};}}
\newcommand*{\circledZero}{\usebox{\circledZerobox}}

\newcommand*{\inbox}[1]{\tikz[baseline=(s.base)]{\node[draw, rectangle, solid, inner sep=0.5mm] (s) {$#1$};}}
\newsavebox{\boxedOnebox}
\savebox{\boxedOnebox}{\inbox{1}}
\newcommand*{\boxedOne}{\usebox{\boxedOnebox}}

\newsavebox{\boxedZerobox}
\savebox{\boxedZerobox}{\inbox{0}}
\newcommand*{\boxedZero}{\usebox{\boxedZerobox}}

\newcommand{\boxedAlph}{\{ \, \boxedZero, \boxedOne \, \}}

\newcommand{\checksub}[1]{\ensuremath{\checkmark_{\kern-0.35em#1}}}
\newcommand{\eqA}{=_{\kern-0.2emA_5}}
\newcommand{\sub}[1]{_{\kern-0.25ex #1}}

\newcommand*{\ComplexityClass}[1]{\ensuremath{\mathsf{#1}}}
\newcommand*{\PSPACE}{\ComplexityClass{PSPACE}\xspace}
\newcommand*{\EXPSPACE}{\ComplexityClass{EXPSPACE}\xspace}
\newcommand{\NC}{\ComplexityClass{NC}\xspace}

\newcommand{\LOGSPACE}{\ComplexityClass{LOGSPACE}\xspace} %
\newcommand{\DLINTIME}{\ComplexityClass{DLINTIME}\xspace}
\newcommand{\DLINSPACE}{\ComplexityClass{DLINSPACE}\xspace}

\newcommand*{\DecProblem}[1]{\texttt{\textsc{#1}}}

\DeclareMathOperator{\id}{id}
\usepackage{bbold}
\newcommand{\idGrp}{\mathbb{1}}
\newcommand{\revbin}{\overleftarrow{\operatorname{bin}}}
\newcommand{\PPre}{\operatorname{PPre}}

\journalname{Theory of Computing Systems}

\spnewtheorem*{xproof}{}{\itshape}{\rmfamily}

\renewenvironment{proof}[1][\proofname]
{\xproof}
{\endxproof}

\usepackage[
  pdfpagelabels,
  hypertexnames=true,
  plainpages=false,
  naturalnames=false]{hyperref}
\usepackage{color}
\definecolor{darkblue}{rgb}{0,0.1,0.5}
\hypersetup{colorlinks,
  linkcolor=darkblue,
  anchorcolor=darkblue,
  citecolor=darkblue}

\usepackage[nameinlink, capitalise]{cleveref}
\makeatletter
\newcommand{\labelx}[1]{
  \relax
  \ifmmode
    \label{#1}
  \else
    \ifnum\pdfstrcmp{\@currenvir}{document}=0
      \label{#1}
    \else
      \label[\@currenvir]{#1}
    \fi
  \fi
}
\makeatother
\makeatletter
\if@cref@capitalise
\crefname{fact}{Fact}{Facts}
\else
\crefname{fact}{fact}{facts}
\fi
\makeatother
\Crefname{fact}{Fact}{Facts}

\begin{document}
  \title{An Automaton Group with \PSPACE-Complete Word Problem}
  
  \author{Jan Philipp Wächter \and
    Armin Weiß\thanks{The second author was funded by DFG project DI 435/7-1.}
  }
  
  
  \institute{Jan Philipp Wächte \and Armin Weiß \at
    Universität Stuttgart,
    Institut für Formale Methoden der Informatik (FMI),
    Universitätsstraße 38,
    70569 Stuttgart, Germany\\
    \email{$\{ \text{jan-philipp.waechter}, \text{armin.weiss} \}$@fmi.uni-stuttgart.de}
  }
  
  \date{Received: date / Accepted: date}
  
  \maketitle
  
  \begin{abstract}
    We construct an automaton group with a $\PSPACE$-complete word problem, proving a conjecture due to Steinberg. Additionally, the constructed group has a provably more difficult, namely $\EXPSPACE$-complete, compressed word problem and acts over a binary alphabet. Thus, it is optimal in terms of the alphabet size. Our construction directly simulates the computation of a Turing machine in an automaton group and, therefore, seems to be quite versatile. It combines two ideas: the first one is a construction used by D'Angeli, Rodaro and the first author to obtain an inverse automaton semigroup with a $\PSPACE$-complete word problem and the second one is to utilize a construction used by Barrington to simulate Boolean circuits of bounded degree and logarithmic depth in the group of even permutations over five elements.
    \keywords{automaton group \and word problem \and \PSPACE \and compressed word problem}
    \subclass{20F10 \and 68Q17 \and 68Q45}
    \CRclass{F.4.m \and F.2.2}
  \end{abstract}

  \begin{section}{Introduction}
    The word problem is one of Dehn's fundamental algorithmic problems in group theory \cite{dehn11}: given a word over a finite set of generators for a group, decide whether the word represents the identity in the group. While, in general, the word problem is undecidable \cite{nov55,boone59}, many classes of groups have a decidable word problem. Among them is the class of automaton groups.
    In this context, the term \emph{automaton} refers to finite state, letter-to-letter transducers. In such automata, every state $q$ induces a length-preserving, prefix-compatible action on the set of words, where an \emph{input word} $u$ is mapped to the \emph{output word} obtained by reading $u$ starting in $q$. The group or semigroup generated by the automaton is the closure under composition of the actions of the different states and a (semi)group arising in this way is called an \emph{automaton (semi)group}.

    The interest in automaton groups was stirred by the observation that many groups with interesting properties are automaton groups. Most prominently, the class contains the famous Grigorchuk group \cite{Grigorchuk80} (which is the first example of a group with sub-exponential but super-polynomial growth and admits other peculiar properties, see \cite{grigorchuk2008groups} for an accessible introduction). 
    There is also a quite extensive study of algorithmic problems in automaton (semi)groups: the conjugacy problem and the isomorphism problem (here the automaton is part of the input) -- the other two of Dehn's fundamental problems -- are undecidable for automaton groups \cite{sunic2012conjugacy}. Moreover, for automaton semigroups, the order problem could be proved to be undecidable \cite[Corollary~3.14]{gillibert2014finiteness}. Recently, this could be extended to automaton groups \cite{gillibert2018automaton} (see also \cite{bartholdi2017wordAndOrderProblems}). On the other hand, the undecidability result for the finiteness problem for automaton semigroups \cite[Theorem~3.13]{gillibert2014finiteness} could not be lifted to automaton groups so far.

    The undecidability results show that the presentation of groups using automata is still quite powerful. Nevertheless, it is not very difficult to see that the word problem for automaton groups is decidable. One possible way is to show an upper bound on the length of an input word on which a state sequence\footnote{In order to avoid ambiguities, we call a word over the states of the automaton a \emph{state sequence}. So, in our case the input for the word problem is a state sequence.} not representing the identity of the group acts non-trivially. In the most general setting, this bound is $|Q|^n$ where $Q$ is the state set of the automaton and $n$ is the length of the state sequence. 
    Another viewpoint is that one can use a non-deterministic guess and check algorithm to solve the word problem. This algorithm uses linear space proving that the word problem for automaton (semi)groups is in \PSPACE. This approach seems to be mentioned first by Steinberg \cite[Section~3]{steinberg2015some} (see also \cite[Proposition~2 and~3]{dangeli2017complexity}).
    In some special cases, better algorithms or upper bounds are known: for example, for contracting automaton groups (and this includes the Grigorchuk group), the witness length is bounded logarithmically \cite{nekrashevych2005self} and the problem, thus, is in \LOGSPACE; other examples of classes with better upper bounds or algorithms include automata with polynomial activity \cite{bondarenko2012growth} or Hanoi Tower groups \cite{bondarenko2014wordProblem}.
    On the other hand, Steinberg conjectured that there is an automaton group with a \PSPACE-complete word problem \cite[Question~5]{steinberg2015some}. As a partial solution to his problem, an inverse automaton semigroup with a \PSPACE-complete word problem has been constructed in \cite[Proposition~6]{dangeli2017complexity}\footnote{In fact, the semigroup is generated by a partial, invertible automaton. A priori, this seems to be a stronger statement than that the semigroup is inverse and also an automaton semigroup. That is why the cited paper uses the term \enquote{automaton-inverse semigroup}. Only later, it was shown that the two concepts actually coincide \cite[Theorem~25]{structurePart}.}. In this paper, our aim is to finally prove the conjecture for groups.
    
    First, however, we give a simpler proof for the weaker statement that the uniform word problem for automaton groups (where the group, represented by its generating automaton, is part of the input) is $\PSPACE$-complete in \cref{sec:uniformWP}. This simpler proof uses the same ideas as the main proof and should facilitate understanding the latter.

    For the main result, \cref{thm:nonuniformPSPACE}, we adopt the construction used by D'Angeli, Rodaro and the first author from \cite[Proposition~6]{dangeli2017complexity}. This construction uses a master reduction and directly encodes a Turing machine into an automaton. Already in \cite[Proposition~6]{dangeli2017complexity}, it was also used to show that there is an automaton group whose word problem with a rational constraint (which depends on the input) is \PSPACE-complete. To get rid of this rational constraint, we apply an idea used by Barrington \cite{barrington89boundedWidth} to transform $\NC^1$-circuits (circuits of bounded fan-in and logarithmic depth) into bounded-width polynomial-size branching programs. Similar ideas predating Barrington have been attributed to Gurevich (see \cite{mak84}) and given by Mal'cev \cite{malcev62} but also by Mauerer and Rhodes \cite{maurer1965property} as well as Krohn, Maurer and Rhodes \cite{krohn1966realizing}. Nevertheless, this paper is fully self-contained and no previous knowledge of either \cite{dangeli2017complexity} or \cite{barrington89boundedWidth} is needed. Barrington's construction uses the group $A_5$ of even permutations over five elements. However, there is a wide variety of other groups that can be used in a similar way. For example the Grigorchuk group (see \cite{BartholdiFLW20}) or the free group in three generators (see \cite{Robinson93phd}). Both examples have the advantage that they are -- in contrast to $A_5$ -- generated by an automaton over a binary alphabet (see \cite{Aleshin83,VorobetsV07} for the free group). We will describe our construction in a general way (independent of the group). Afterwards, we can plug in, for example, the mentioned free group in three generators and obtain an automaton group over a binary alphabet with a \PSPACE-complete word problem.

    Finally, in \cref{sec:compressedWP}, we also investigate the compressed word problem for automaton groups. Here, the (input) state sequence is given as a so-called \emph{straight-line program} (a context-free grammar which generates exactly one word). Alternatively, the compressed word problem can also be considered as a circuit value problem where the input gates of the circuit are labeled by group elements and the inner gates compute the product of group elements. See \cite{lohrey2014compressed} or \cite[Chapter~4]{bassino2020complexity} for more information on the compressed word problem. By uncompressing the input sequence and applying the above mentioned non-deterministic linear-space algorithm, one can see that the compressed word problem can be solved in \EXPSPACE. Thus, the more interesting part is to prove that this algorithm cannot be improved significantly: we show that there is an automaton group with an \EXPSPACE-hard compressed word problem. This result is interesting because, by taking the disjoint union of the two automata, we obtain a group whose (ordinary) word problem is \PSPACE-complete and whose compressed word problem is \EXPSPACE-complete and, thus, provably more difficult by the space hierarchy theorem \cite[Theorem~6]{stearns1965hierarchies} (or e.\,g.\ \cite[Theorem~7.2, p.~145]{papadimitriou97computational} or \cite[Theorem~4.8]{arora2009computational}). To the best of our knowledge, this was the first known example of such a group. Meanwhile, however, Bartholdi, Figelius, Lohrey and the second author found other examples of groups whose compressed word problem is provably harder than their (ordinary) word problem \cite{BartholdiFLW20}. These example include the Grigorchuk group, Thompsons' groups and several others defined in terms of wreath products.
    
    Other explicit previous results on the compressed word problem for automaton groups do not seem to exist. However, it was observed by Gillibert \cite{gillibert2019personal} that the proof of \cite[Proposition~6]{dangeli2017complexity} also yields an automaton semigroup with an \EXPSPACE-complete compressed word problem in a rather straightforward manner. For the case of groups, it is possible to adapt the construction used by Gillibert to prove the existence of an automaton group with an undecidable order problem \cite{gillibert2018automaton} slightly to obtain an automaton group with a \PSPACE-hard compressed word problem \cite{gillibert2019personal}.
    
    This work is an extended version of the conference paper \cite{WachterW20}. The most notable extensions are the simpler proof for the uniform word problem (which was omitted from the conference paper for space reasons) and the extension of the construction to use a binary alphabet instead of one with five elements (which is novel).
    Parts of the presentation are taken from the first author's doctoral thesis \cite{Wachter20diss}.
  \end{section}

  \begin{section}{Preliminaries}
    \paragraph{Words and Alphabets with Involution.}
    In this paper, we use $A \uplus B$ for the disjoint union of the sets $A$ and $B$. Additionally, we use common notations from formal language theory. In particular, we use $\Sigma^*$ to denote the set of words over an alphabet $\Sigma$ including the empty word $\varepsilon$. If we want to exclude the empty word, we write $\Sigma^+$. For any alphabet $Q$, we define a natural involution between $Q$ and a disjoint copy $Q^{-1} = \{ q^{-1} \mid q \in Q \}$ of $Q$: it maps $q \in Q$ to $q^{-1} \in Q^{-1}$ and vice versa. In particular, we have $(q^{-1})^{-1} = q$. The involution extends naturally to words over $Q^{\pm 1} = Q \cup Q^{-1}$: for $q_1, \dots, q_n \in Q^{\pm 1}$, we set $(q_n \dots q_1)^{-1} = q_1^{-1} \dots q_n^{-1}$. This way, the involution is equivalent to taking the group inverse if $Q$ is a generating set of a group. To simplify the notation, we write $Q^{\pm *}$ for $(Q^{\pm 1})^*$.
    
    A word $u$ is a \emph{prefix} of a word $v$ if there is some word $x$ with $v = ux$. It is a \emph{proper} prefix if $x$ is non-empty. The set of proper prefixes of some word $w$ is $\PPre w$ and $\PPre L$ for some language $L$ is $\PPre L = \bigcup_{w \in L} \PPre w$.
    
    \paragraph{Turing Machines and Complexity.}
    We assume the reader to be familiar with basic notions of complexity theory (see \cite{papadimitriou97computational} or \cite{arora2009computational} for standard text books on complexity theory) such as configurations for Turing machines, computations and reductions in logarithmic space (\LOGSPACE) as well as complete and hard problems for \PSPACE and the class \EXPSPACE. For the class of problems (or functions) solvable (or computable) in deterministic linear space, we write \DLINSPACE. 
    We only consider deterministic, single-tape machines and write their configurations as words $c_0 \dots c_{i - 1} p c_i \dots c_{n - 1}$ where the $c_j$ are symbols from the tape alphabet and $p$ is a state. In this configuration, the machine is in state $p$ and its head is over the symbol $c_i$. By convention, we assume that the tape symbols at positions $-1$ and $n$ are blank symbols.
    
    Using suitable normalizations, we can assume that every Turing machine admits a simple function which describes its transitions:
    \begin{fact}[Folklore]\labelx{fct:turingNormalization}
      Consider a deterministic Turing machine with state set $P$ and tape alphabet $\Delta$. After a straightforward transformation of the transition function and states, we can assume that the symbol $\gamma_i^{(t + 1)}$ at position $i$ of the configuration at time step $t + 1$ only depends on the symbols $\gamma_{i - 1}^{(t)}, \gamma_i^{(t)}, \gamma_{i + 1}^{(t)} \in \Gamma$ at position $i - 1$, $i$ and $i + 1$ at time step $t$. Thus, we may always assume that there is a function $\tau: (P \uplus \Delta)^3 \to P \uplus \Delta$ mapping the symbols $\gamma_{i - 1}^{(t)}, \gamma_i^{(t)}, \gamma_{i + 1}^{(t)} \in P \uplus \Delta$ to the uniquely determined symbol $\gamma_i^{(t + 1)}$ for all $i$ and $t$.
    \end{fact}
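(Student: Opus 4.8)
The plan is to read $\tau$ off directly from the (transformed) transition function $\delta$, viewing one step of the machine as a cellular‑automaton‑like rewriting of the configuration word $\gamma_0^{(t)} \cdots \gamma_m^{(t)}$. The facts that make this work are: every configuration word contains exactly one symbol from $P$, placed just left of the scanned cell; $P$ and $\Delta$ are disjoint, so this state symbol is recognisable locally; one step moves this symbol by at most one position; and a cell at distance $\geq 2$ from the head is untouched. First I would fix a position $i$ and go through where a state symbol may sit in the window $(\gamma_{i-1}^{(t)}, \gamma_i^{(t)}, \gamma_{i+1}^{(t)})$: if there is none, $\gamma_i^{(t+1)} = \gamma_i^{(t)}$; if $\gamma_i^{(t)} \in P$ or $\gamma_{i-1}^{(t)} \in P$, then the scanned symbol also lies in the window, and the applicable transition (written symbol, new state, direction) determines $\gamma_i^{(t+1)}$; the case $\gamma_{i+1}^{(t)} \in P$ is similar except for one sub‑case discussed next. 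One also checks that applying this rule simultaneously at every position of a configuration word produces exactly the successor configuration — in particular it keeps the number of state symbols equal to one — so $\tau$ is well defined and describes a step correctly.

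The hard part — really the only point with content — is that for the unmodified machine a radius‑$1$ window does not suffice in one sub‑case, which is precisely why the statement permits a transformation first. If $\gamma_{i+1}^{(t)} = p \in P$ and the transition of $p$ on the scanned symbol $\gamma_{i+2}^{(t)}$ is a left move, then the state symbol lands on position $i$, so $\gamma_i^{(t+1)}$ must equal the new state $q$; but $q$ depends on $\gamma_{i+2}^{(t)}$, which is outside the window of position $i$ (all other cases are genuinely local). I would remove this obstruction by a straightforward preprocessing of the machine: replace every left‑moving transition by two steps — a non‑moving step that writes the new tape symbol and passes to an auxiliary state $p'_q$ recording the eventual target state $q$, followed by a step in which $p'_q$ moves left and switches to $q$ regardless of what it scans. (Equivalently, one could enlarge the state set to $P \times \Delta$ and mark the state together with the scanned cell by a single combined symbol.) After this change, whenever a state symbol first appears at a position, the corresponding new state is already encoded in a symbol present in that position's window, so $\tau$ can extract it, and every case is local.

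What remains is bookkeeping, which I would not spell out. Padding the configuration word with blanks on both ends (the convention that positions $-1$ and $n$ carry blanks) keeps every window well defined and shows a step lengthens the word by at most one blank; on triples in $(P \uplus \Delta)^3$ that never occur as the window of a reachable configuration, $\tau$ is defined arbitrarily. The preprocessing at most doubles the number of steps and uses no additional tape cells, so it changes neither the space used nor the accepted language (or computed function) of the machine, which is all the later applications rely on.
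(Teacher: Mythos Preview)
Your proposal is correct and follows essentially the same route as the paper: you identify the one genuinely non-local case (a left move makes $\gamma_i^{(t+1)}$ depend on the scanned symbol $\gamma_{i+2}^{(t)}$) and eliminate it by splitting each left-moving transition into a stationary step into an intermediate state followed by a left move that depends only on the state. The paper's proof is just a sketch of this same idea; your write-up is more detailed but not different in substance.
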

    \begin{proof}[Proof Idea]
      The only problem appears if the machine moves to the left: if we have the situation $abp\textcolor{gray}{c}$ or $abp\textcolor{gray}{d}$ (where the positions $i-1$, $i$ and $i + 1$ are in black) and the machine moves to the left in state $p$ when reading a $c$ but does not move when reading a $d$, then the new value for the second symbol does not only depend on the symbols right next to it; we can either be in the situation $\textcolor{gray}{a}p'\textcolor{gray}{bc'}$ or $\textcolor{gray}{a}b\textcolor{gray}{p'd'}$ (where position $i$ is in black). To circumvent the problem, we can introduce intermediate states. Now, instead of moving to the left, we go into an intermediate state (without movement). In the next step, we move to the left (but this time the movement only depends on the state and not on the current symbol).\qed
    \end{proof}
    
    \paragraph{Automata.}
    We use the word \emph{automaton} to denote what is more precisely called a letter-to-letter, finite state transducer. Formally, an automaton is a triple $\mathcal{T} = (Q, \Sigma, \delta)$ consisting of a finite set of \emph{states} $Q$, an input and output alphabet $\Sigma$ and a set $\delta \subseteq Q \times \Sigma \times \Sigma \times Q$ of \emph{transitions}. For a transition $(p, a, b, q) \in Q \times \Sigma \times \Sigma \times Q$, we usually use the more graphical notation $\trans{p}{a}{b}{q}$ and, additionally, the common way of depicting automata
    \begin{center}
      \begin{tikzpicture}[auto, shorten >=1pt, >=latex]
        \node[state] (p) {$p$};
        \node[state, right=of p] (q) {$q$};
        
        \draw[->] (p) edge node {$a / b$} (q);
      \end{tikzpicture}
    \end{center}
    where $a$ is the \emph{input} and $b$ is the \emph{output}.
    We will usually work with \emph{deterministic} and \emph{complete} automata, i.\,e.\ automata where we have
    \[
      d_{p, a} = \left| \{ \trans{p}{a}{b}{q} \in \delta \mid b \in \Sigma, q \in Q \} \right| = 1
    \]
    for all $p \in Q$ and $a \in \Sigma$. In other words, for every $a \in \Sigma$, every state has exactly one transition with input $a$.
    
    A \emph{run} of an automaton $\mathcal{T} = (Q, \Sigma, \delta)$ is a sequence 
    \begin{center}
      \begin{tikzpicture}[auto, shorten >=1pt, >=latex, baseline=(q0.base)]
        \node (q0) {$q_0$};
        \node[right=of q0] (q1) {$q_1$};
        \node[right=of q1] (dots) {$\dots$};
        \node[right=of dots] (qn) {$q_n$};
        
        \path[->] (q0) edge node {$a_1 / b_1$} (q1)
                  (q1) edge node {$a_2 / b_2$} (dots)
                  (dots) edge node {$a_n / b_n$} (qn)
        ;
      \end{tikzpicture}
    \end{center}
    of transitions from $\delta$. It \emph{starts} in $q_0$ and \emph{ends} in $q_n$. Its \emph{input} is $a_1 \dots a_n$ and its \emph{output} is $b_1 \dots b_n$. If $\mathcal{T}$ is complete and deterministic, then, for every state $q \in Q$ and every word $u \in \Sigma^*$, there is exactly one run starting in $q$ with input $u$. We write $q \circ u$ for its output and $q \cdot u$ for the state in which it ends. This notation can be extended to multiple states. To avoid confusion, we usually use the term \emph{state sequence} instead of \enquote{word} (which we reserve for input or output words) for elements $\bm{q} \in Q^*$. Now, for states $q_1, q_2, \dots, q_\ell \in Q$, we set $q_\ell \dots q_2 q_1 \circ u = q_\ell \dots q_2 \circ (q_1 \circ u)$ inductively. If the state sequence $\bm{q} \in Q^*$ is empty, then $\bm{q} \circ u$ is simply $u$.
    
    This way, every state $q \in Q$ (and even every state sequence $\bm{q} \in Q^*$) induces a map $\Sigma^* \to \Sigma^*$ and every word $u \in \Sigma^*$ induces a map $Q \to Q$. If all states of an automaton induce bijective functions, we say it is \emph{invertible} and call it a \emph{\GAut}. In this case, we let the function induced by the inverse $q^{-1}$ of a state $q$ be the inverse of the function induced by $q$. For a \GAut $\mathcal{T}$, all bijections induced by the states generate a group (with composition as operation), which we denote by $\mathscr{G}(\mathcal{T})$. A group is called an \emph{automaton group} if it arises in this way. Clearly, $\mathscr{G}(\mathcal{T})$ is generated by the maps induced by the states of $\mathcal{T}$ and, thus, finitely generated.
    
    \begin{example}\labelx{ex:addingMachine}
      The typical first example of an automaton generating a group is the \emph{adding machine} $\mathcal{T} = (\{ q, \id \}, \{ 0, 1 \}, \delta)$:
      \begin{center}
        \marginbox{0pt 1ex 0pt 1ex}{%
        \begin{tikzpicture}[auto, shorten >=1pt, >=latex, baseline=(id.base)]
          \node[state] (q) {$q$};
          \node[state, right=of q] (id) {$\id$};
          \draw[->] (q) edge[loop left] node {$1/0$} (q)
                        edge node {$0/1$} (id)
                    (id) edge[loop right] node[align=left] {$0/0$\\$1/1$} (id);
        \end{tikzpicture}}
      \end{center}
      It obviously is deterministic and complete and, therefore, we can consider the map induced by the state $q$. We have $q^3 \circ 000 = q^2 \circ 100 = q \circ 010 = 110$. From this example, it is easy to see that the action of $q$ is to increment the input word (which is interpreted as a reverse/least significant bit first binary representation $\revbin(n)$ of a number $n$). The inverse is accordingly to decrement the value. As the other state $\id$ acts like the identity, we obtain that the group $\mathscr{G}(\mathcal{T})$ generated by $\mathcal{T}$ is isomorphic to the infinite cyclic group.
    \end{example}
    
    Similar to extending the notation $\bm{q} \circ u$ to state sequences, we can also extend the notation $q \cdot u$. For this, it is useful to introduce \emph{cross diagrams}, another notation for transitions of automata. For a transition $\trans{p}{a}{b}{q}$ of an automaton, we write the cross diagram given in \cref{fig:crossDiagram}. Multiple cross diagrams can be combined into a larger one. For example, the cross diagram in \cref{fig:combinedCrossDiagram} indicates that there is a transition $\trans{q_{i,j-1}}{a_{i - 1, j}}{a_{i,j}}{q_{i,j}}$ for all $1 \leq i \leq n$ and $1 \leq j \leq m$. Typically, we omit unneeded names for states and abbreviate cross diagrams. Such an abbreviated cross diagram is depicted in \cref{fig:abbreviatedCrossDiagram}. If we set $\bm{p} = q_{n, 0} \dots q_{1, 0}$, $u = a_{0, 1} \dots a_{0, m}$, $v = a_{n, 1} \dots a_{n, m}$ and $\bm{q} = q_{n, m} \dots q_{1, m}$, then it indicates the same transitions as in \cref{fig:combinedCrossDiagram}. It is important to note here, that the right-most state in $\bm{p}$ is actually the one to act first.
    \begin{figure}%
      \centering%
      \begin{subfigure}[t]{0.15\linewidth}%
        \centering%
        \begin{tikzpicture}[baseline=(m-2-1.base)]
          \matrix[matrix of math nodes, text height=1.25ex, text depth=0.25ex] (m) {
            & a & \\
            p & & q \\
            & b & \\
          };
          \foreach \i in {1} {
            \draw[->] let
              \n1 = {int(2+\i)}
            in
              (m-2-\i) -> (m-2-\n1);
            \draw[->] let
              \n1 = {int(1+\i)}
            in
              (m-1-\n1) -> (m-3-\n1);
          };
        \end{tikzpicture}%
        \caption{Cross diagrams}\label{fig:crossDiagram}%
      \end{subfigure}\hfill%
      \begin{subfigure}[t]{0.6\linewidth}%
        \centering%
        \begin{tikzpicture}[baseline=(m-4-1.east)]
          \matrix[matrix of math nodes, text height=1.25ex, text depth=0.25ex, ampersand replacement=\&] (m) {
                     \& a_{0, 1}     \&          \& \dots \&              \& a_{0, m}     \&     \\
            q_{1, 0} \&              \& q_{1, 1} \& \dots \& q_{1, m - 1} \&              \& q_{1, m} \\
                     \& a_{1, 1}     \&          \&       \&              \& a_{1, m}     \&     \\
              \vdots \& \vdots       \&          \&       \&              \& \vdots       \& \vdots \\
                     \& a_{n - 1, 1} \&          \&       \&              \& a_{n - 1, m} \&     \\
            q_{n, 0} \&              \& q_{n, 1} \& \dots \& q_{n, m - 1} \&              \& q_{n, m} \\
                     \& a_{n, 1}     \&          \& \dots \&              \& a_{n, m}     \&     \\
          };
          \foreach \j in {1, 5} {
            \foreach \i in {1, 5} {
              \draw[->] let
                \n1 = {int(2+\i)},
                \n2 = {int(1+\j)}
              in
                (m-\n2-\i) -> (m-\n2-\n1);
              \draw[->] let
                \n1 = {int(1+\i)},
                \n2 = {int(2+\j)}
              in
                (m-\j-\n1) -> (m-\n2-\n1);
            };
          };
        \end{tikzpicture}%
        \caption{Multiple crosses combined in one diagram}\label{fig:combinedCrossDiagram}%
      \end{subfigure}\hfill%
      \begin{subfigure}[t]{0.15\linewidth}%
        \centering%
        \begin{tikzpicture}[baseline=(m-2-1.base)]
          \matrix[matrix of math nodes, text height=1.25ex, text depth=0.25ex, ampersand replacement=\&] (m) {
                   \& u \& \\
            \bm{p} \&   \& \bm{q} \\
                   \& v \& \\
          };
          \foreach \i in {1} {
            \draw[->] let
              \n1 = {int(2+\i)}
            in
              (m-2-\i) -> (m-2-\n1);
            \draw[->] let
              \n1 = {int(1+\i)}
            in
              (m-1-\n1) -> (m-3-\n1);
          };
        \end{tikzpicture}%
        \caption{Abbreviated cross diagram}\label{fig:abbreviatedCrossDiagram}%
      \end{subfigure}
      \caption{Combined and abbreviated cross diagrams}
    \end{figure}
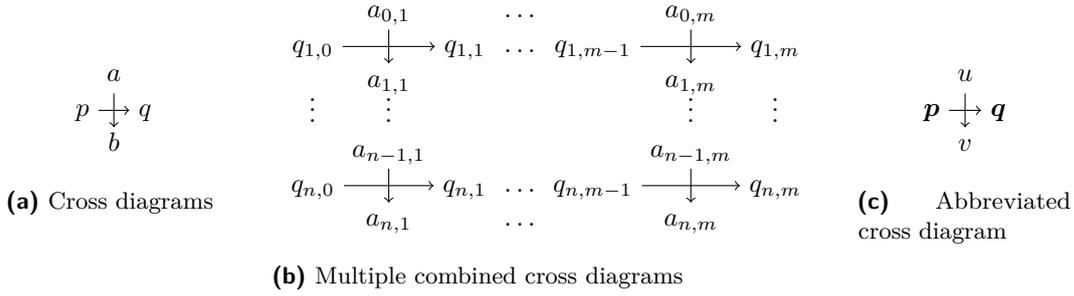
    
    If we have the cross diagram from \cref{fig:abbreviatedCrossDiagram}, we set $\bm{p} \cdot u = \bm{q}$. This is the same, as setting $q_n \ldots q_1 \cdot u = [q_n \dots q_2 \cdot (q_1 \circ u)] (q_1 \cdot u)$ inductively and, with the definition from above, we already have $\bm{p} \circ u = v$.
    
    \paragraph{Group Theory and Word Problem.}
    For the neutral element of a group, we write $\idGrp$. We write $\bm{p} =_G \bm{q}$ or $\bm{p} = \bm{q}$ in $G$ if two words $\bm{p}$ and $\bm{q}$ over the generators (and their inverses) of a group evaluate to the same group element. Typically, $G$ will be an automaton group generated by a \GAut $\mathcal{T} = (Q, \Sigma, \delta)$ and $\bm{p}$ and $\bm{q}$ are state sequences from $Q^{\pm *}$.
    
    The \emph{word problem} of a group $G$ generated by a finite set $Q$ is the decision problem:
    \problem
      [the group $G$]
      {$\bm{q} \in Q^{\pm *}$}
      {is $\bm{q} = \idGrp$ in $G$?}\noindent
    In addition, if $\mathcal{C}$ is a class of groups, we also consider the  \emph{uniform} word problem for $\mathcal{C}$. Here, the group $G \in \mathcal{C}$ is part of the input (in a suitable representation).
    
    \paragraph{Balanced Iterated Commutators.}
    For elements $h$ and $g$ of a group $G$, we write $g^h$ for the conjugation $h^{-1} g h$ of $g$ with $h$ and $[h, g]$ for the commutator $h^{-1} g^{-1} hg$. Both notations extend naturally to words over the group generators: if $G$ is generated by $Q$, then we let $\bm{p}^{\bm{q}} = \bm{q}^{-1} \bm{p} \bm{q}$ and $[\bm{q}, \bm{p}] = \bm{q}^{-1} \bm{p}^{-1} \bm{q} \bm{p}$ for $\bm{p}, \bm{q} \in Q^{\pm *}$.
    
    Commutators can be used to simulate logical conjunctions in groups since we have $[g, h] = \idGrp$ if $g = \idGrp$ or $h = \idGrp$. To create a $D$-ary logical conjunction, we can nest multiple binary conjunctions in a tree of logarithmic depth and we can use the same idea with commutators.\footnote{The usage of commutators to compute Boolean functions has been formalized by the notion of $G$-programs (see \cite{barrington89boundedWidth}). Here, we do not use this formalism because it cannot be applied to the compressed word problem.}
    
    \begin{definition}\labelx{def:balancedCommutator}
      Let $Q$ be an alphabet and $\alpha, \beta: \mathbb{N} \to Q^{\pm *}$. For $\bm{p}_0, \dots, \bm{p}\sub{D - 1} \in Q^{\pm*}$ with $D = 2^d$, we define the word $B_{\beta, \alpha}[\bm{p}\sub{D - 1}, \dots, \bm{p}_0]$ inductively on $d$ by
      \begin{align*}
        B_{\beta, \alpha}[\bm{p}_0] &= \bm{p}_0 \\
        B_{\beta, \alpha}[\bm{p}_{2D - 1}, \dots, \bm{p}_0] &= \left[ B_{\beta, \alpha}[\bm{p}_{2D - 1}, \dots, \bm{p}\sub{D}]^{\beta(d)},\: B_{\beta, \alpha}[\bm{p}\sub{D - 1}, \dots, \bm{p}_{0}]^{\alpha(d)} \right] \text{.}
      \end{align*}
    \end{definition}
    We also write $B_{\beta, \alpha}$ for $\alpha \in Q^{\pm *}$ or $\beta \in Q^{\pm *}$, where we identify $\alpha$ and $\beta$ with the constant maps $n \mapsto \alpha$ and $n \mapsto \beta$, respectively.
    
    One part of using $B_{\beta, \alpha}[\bm{p}\sub{D - 1}, \dots, \bm{p}_0]$ as a logical conjunction is that it collapses to the neutral element if one of the $\bm{p}_i$ is equal to the neutral element.
    \begin{fact}\labelx{fct:commutatorCollapses}
      Let $G$ be a group generated by some (finite) set $Q$, $\alpha, \beta: \mathbb{N} \to Q^{\pm *}$ and $\bm{p}_0, \dots, \bm{p}\sub{D - 1} \in Q^{\pm*}$ for some $D = 2^d$. If there is some $i$ with $\bm{p}_i = \idGrp$ in $G$, we have $B_{\beta, \alpha}[\bm{p}\sub{D - 1}, \dots, \bm{p}_0] = \idGrp$ in $G$.
    \end{fact}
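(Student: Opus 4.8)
The plan is to proceed by induction on $d$, where $D = 2^d$. For the base case $d = 0$ we have $D = 1$ and $B_{\beta, \alpha}[\bm{p}_0] = \bm{p}_0$ by definition, so the only possible index is $i = 0$ and the hypothesis $\bm{p}_0 = \idGrp$ in $G$ immediately gives $B_{\beta, \alpha}[\bm{p}_0] = \idGrp$ in $G$.

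For the inductive step, suppose the claim holds for $d$ and consider $2D = 2^{d+1}$ state sequences $\bm{p}_0, \dots, \bm{p}_{2D-1}$ with $\bm{p}_i = \idGrp$ in $G$ for some $i$. I would split into the two cases $i < D$ and $i \geq D$. In the first case, $\bm{p}_i$ appears among the arguments of the ``right'' inner block $B_{\beta, \alpha}[\bm{p}_{D-1}, \dots, \bm{p}_0]$, which is $\idGrp$ in $G$ by the induction hypothesis; in the second case $\bm{p}_i$ appears among the arguments of the ``left'' inner block $B_{\beta, \alpha}[\bm{p}_{2D-1}, \dots, \bm{p}_D]$, which is likewise $\idGrp$ in $G$. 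In either case one of the two inner blocks represents $\idGrp$.

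It then remains to observe the two elementary facts that conjugation and commutators preserve triviality: for any $g, h \in G$ we have $\idGrp^{h} = h^{-1} \idGrp h = \idGrp$, and if $g = \idGrp$ then $[h, g] = h^{-1} \idGrp^{-1} h \idGrp = \idGrp$ while if $h = \idGrp$ then $[h, g] = \idGrp g^{-1} \idGrp g = \idGrp$ (recall $[h, g] = h^{-1} g^{-1} h g$). Applying the first fact to whichever inner block is trivial (conjugated by $\beta(d+1)$ or $\alpha(d+1)$) and then the second fact to the outer commutator in \autoref{def:balancedCommutator} yields $B_{\beta, \alpha}[\bm{p}_{2D-1}, \dots, \bm{p}_0] = \idGrp$ in $G$, completing the induction.

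I do not expect a genuine obstacle here; the only point requiring a little care is correctly matching the index $i$ to the half of the argument list in which it occurs, so that the induction hypothesis is applied to the right inner block.
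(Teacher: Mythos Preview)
Your proof is correct and follows exactly the same approach as the paper's: induction on $d$, split on which half contains the trivial $\bm{p}_i$, apply the induction hypothesis to that half, and observe the outer commutator collapses. The only slip is an indexing one: when passing from $D=2^d$ to $2D=2^{d+1}$, the conjugating elements in \autoref{def:balancedCommutator} are $\beta(d)$ and $\alpha(d)$, not $\beta(d+1)$ and $\alpha(d+1)$; this does not affect the argument since conjugation by any element preserves triviality.
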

    \begin{proof}
      We show the fact by induction on $d$. For $d = 0$ (or, equivalently, $D = 1$), there is nothing to show. So, consider the step from $d$ to $d + 1$ (or, equivalently, from $D$ to $2D$). If we have $0 \leq i < D$, then, by induction, we have
      \begin{align*}
        B_{\beta, \alpha}[\bm{p}_{2D - 1}, \dots, \bm{p}_0]
        &= \bigl[ 
          {\underbrace{B_{\beta, \alpha}[\bm{p}_{2D - 1}, \dots, \bm{p}\sub{D} ]}_{{}= \bm{p}'}}^{\beta(d)},
          {\underbrace{B_{\beta, \alpha}[\bm{p}\sub{D - 1}, \dots, \bm{p}_{0}]}_{{} =_G \, \idGrp}}^{\alpha(d)} \bigr] \\
        &=_G
          \begin{array}[t]{c@{\;}c@{}l}
            \beta(d)^{-1} \bm{p}'^{-1} \beta(d) & \alpha(d)^{-1} \idGrp \alpha(d) \\ \beta(d)^{-1} \bm{p}' \beta(d) & \alpha(d)^{-1} \idGrp \alpha(d) &{}=_G \idGrp \text{.}
          \end{array}
      \end{align*}
      The case $D \leq i < 2D$ is symmetric.\qed
    \end{proof}
    
    The other part of using $B_{\beta, \alpha}[\bm{p}\sub{D - 1}, \dots, \bm{p}_0]$ as a logical conjunction~-- namely that $B_{\beta, \alpha}[\bm{p}\sub{D - 1}, \dots, \bm{p}_0]$  is nontrivial for proper choices of the $\bm{p}_{i}$ -- depends on the actual underlying group.
    \begin{example}\labelx{ex:conjunctionInA5}
      A simple way to use $B_{\beta, \alpha}$ as a logical conjunction is the group $A_5$ of even permutations over five elements. It was used by Barrington \cite{barrington89boundedWidth} to convert logical circuits of bounded fan-in and logarithmic depth (so-called \emph{$\NC^1$-circuits}) to bounded-width, polynomial-size branching programs.
      
      Since $A_5$ is finite, we can use the entire group as a finite generating set. We let $\sigma = (1 3 2 5 4)$, $\alpha = (2 3) (4 5)$ and $\beta = (2 4 5)$. A straightforward calculation shows that we have
      \[
        \sigma = [ \sigma^\beta, \sigma^\alpha ]
      \]
      for this choice (compare to \cite[Lemma~1 and~3]{barrington89boundedWidth}), which allows us to use $B_{\beta, \alpha}$ as a $D$-ary logical conjunction. In fact, it is even quite simple, because $\alpha$ and $\beta$ are constant (and do not depend on the level $d$).
      
      The idea is that we consider $\idGrp$ as \texttt{false} and $\sigma$ as \texttt{true}. Now, to use $B_{\beta, \alpha}$ as a logical conjunction, we show
      \[
       B_{\beta, \alpha}[g\sub{D - 1}, \dots, g_0] \eqA
       \begin{cases}
         \sigma & \text{if } g_0 = \dots = g\sub{D - 1} = \sigma\\
         \idGrp & \text{otherwise}
       \end{cases}
      \]
      for all $g_0, \dots, g\sub{D - 1} \in \{ \id, \sigma \} \subseteq A_5$ where $D = 2^d$. We show the case $g_0 = \dots = g\sub{D - 1} = \sigma$ by induction on $d$ and the case $g_i = \idGrp$ for some $i$ follows by \cref{fct:commutatorCollapses}. For $d = 0$ (or, equivalently, $D = 1$), there is nothing to show. So, consider the step from $d$ to $d + 1$ (or, equivalently, from $D$ to $2D$), where we have
      \[
        B_{\beta, \alpha}[g_{2D - 1}, \dots, g_0] = \bigl[ {\underbrace{B_{\beta, \alpha}[g_{2D - 1}, \dots, g\sub{D}]}_{{} \eqA \sigma}}^\beta, {\underbrace{B_{\beta, \alpha}[g\sub{D - 1}, \dots, g_{0}]}_{{} \eqA \sigma}}^\alpha \bigr] \eqA \sigma
      \]
      by induction and the choice of $\sigma$, $\alpha$ and $\beta$.
    \end{example}
    
    Normal commutators and conjugation are compatible in the sense that we have $[h, g]^k = [h^k, g^k]$ for all elements $g, h, k$ of a group. The commutators from \cref{def:balancedCommutator} satisfy a similar compatibility that we will exploit below.
    \begin{fact}\labelx{fct:conjugatedCommutator}
      Let $G$ be a group generated by some (finite) set $Q$, $\alpha, \beta: \mathbb{N} \to Q^{\pm *}$ and $\gamma \in Q^{\pm *}$ such that $\gamma$ commutes with $\alpha(d)$ and $\beta(d)$ in $G$ for all $d$. Then, we have for all $\bm{p}_0, \dots, \bm{p}\sub{D - 1} \in Q^{\pm *}$ and with $D = 2^d$
      \[
        B_{\beta, \alpha}[\bm{p}\sub{D - 1}, \dots, \bm{p}_0]^\gamma = B_{\beta, \alpha}[\bm{p}\sub{D - 1}^\gamma, \dots, \bm{p}_0^\gamma] \text{ in } G \text{.}
      \]
    \end{fact}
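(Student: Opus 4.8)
The plan is a routine induction on $d$, in the same style as \autoref{fct:commutatorCollapses} and the computation in \autoref{ex:conjunctionInA5}. The base case $d = 0$ (equivalently $D = 1$) is immediate from $B_{\beta, \alpha}[\bm{p}_0] = \bm{p}_0$, which gives $B_{\beta, \alpha}[\bm{p}_0]^\gamma = \bm{p}_0^\gamma = B_{\beta, \alpha}[\bm{p}_0^\gamma]$.

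For the step from $d$ to $d+1$ (equivalently, from $D = 2^d$ to $2D$), I would start from the defining recursion
\[
  B_{\beta, \alpha}[\bm{p}\sub{2D - 1}, \dots, \bm{p}_0]^\gamma
  = \Bigl[ B_{\beta, \alpha}[\bm{p}\sub{2D - 1}, \dots, \bm{p}\sub{D}]^{\beta(d)},\; B_{\beta, \alpha}[\bm{p}\sub{D - 1}, \dots, \bm{p}_{0}]^{\alpha(d)} \Bigr]^\gamma
\]
and push the outer conjugation by $\gamma$ inside using two elementary group identities. First, $[h, g]^k = [h^k, g^k]$ lets $\gamma$ distribute over the commutator onto its two arguments. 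Second, since $\gamma$ commutes with $\beta(d)$ and with $\alpha(d)$ in $G$ by hypothesis, conjugations may be reordered: $x^{\beta(d)\gamma} = x^{\gamma \beta(d)} = (x^\gamma)^{\beta(d)}$ in $G$, and likewise for $\alpha(d)$. Together these rewrite the right-hand side as
\[
  \Bigl[ \bigl(B_{\beta, \alpha}[\bm{p}\sub{2D - 1}, \dots, \bm{p}\sub{D}]^{\gamma}\bigr)^{\beta(d)},\; \bigl(B_{\beta, \alpha}[\bm{p}\sub{D - 1}, \dots, \bm{p}_{0}]^{\gamma}\bigr)^{\alpha(d)} \Bigr] \text{ in } G \text{.}
\]

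Each of the two $\gamma$-conjugated inner blocks now involves only $D = 2^d$ of the $\bm{p}_i$, so the induction hypothesis applies and turns them into $B_{\beta, \alpha}[\bm{p}\sub{2D - 1}^\gamma, \dots, \bm{p}\sub{D}^\gamma]$ and $B_{\beta, \alpha}[\bm{p}\sub{D - 1}^\gamma, \dots, \bm{p}_{0}^\gamma]$, respectively. Substituting these back, the bracket is precisely the recursive definition of $B_{\beta, \alpha}[\bm{p}\sub{2D - 1}^\gamma, \dots, \bm{p}_{0}^\gamma]$, which closes the induction.

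I do not expect a genuine obstacle. The only points needing (minor) care are keeping track of the level parameter $d$ through the recursion and invoking the commutation hypothesis at the correct level: it is used exactly once per step, for the pair $\alpha(d), \beta(d)$ sitting at the top of the $2D$-block, while the hypotheses for all smaller levels are consumed inside the recursive call. Everything else reduces to the identity $[h, g]^k = [h^k, g^k]$ and the fact that conjugations by commuting elements commute.
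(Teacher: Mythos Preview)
Your proposal is correct and follows essentially the same approach as the paper: an induction on $d$, with the inductive step using $[h,g]^k = [h^k, g^k]$, then the commutation of $\gamma$ with $\alpha(d)$ and $\beta(d)$ to reorder the conjugations, then the induction hypothesis on each half, and finally the recursive definition to reassemble. The paper's proof is the same chain of equalities written out explicitly.
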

    \begin{proof}
      We prove the statement by induction on $d$. For $d = 0$ (or, equivalently, $D = 1$), we have $B_{\beta, \alpha}[\bm{p}_0]^\gamma = \bm{p}_0^\gamma = B_{\beta, \alpha}[\bm{p}_0^\gamma]$ and, for the step from $d$ to $d + 1$ (or, equivalently, from $D$ to $2D$), we have in $G$:
      \begin{multline*}
        B_{\beta, \alpha}[\bm{p}_{2D - 1}, \dots, \bm{p}_0]^\gamma \\
        \begin{alignedat}[t]{2}
          &= \left[ B_{\beta, \alpha}[\bm{p}_{2D - 1}, \dots, \bm{p}\sub{D}]^{\beta(d)},\: B_{\beta, \alpha}[\bm{p}\sub{D - 1}, \dots, \bm{p}_{0}]^{\alpha(d)} \right]^\gamma &\text{(by definition)} \\
          &= \left[ B_{\beta, \alpha}[\bm{p}_{2D - 1}, \dots, \bm{p}\sub{D}]^{\beta(d) \gamma},\: B_{\beta, \alpha}[\bm{p}\sub{D - 1}, \dots, \bm{p}_{0}]^{\alpha(d) \gamma} \right] &\quad \text{($[h, g]^k = [h^k, g^k]$)} \\
          &= \left[ B_{\beta, \alpha}[\bm{p}_{2D - 1}, \dots, \bm{p}\sub{D}]^{\gamma \beta(d)},\: B_{\beta, \alpha}[\bm{p}\sub{D - 1}, \dots, \bm{p}_{0}]^{\gamma \alpha(d)} \right] \\
          &&\llap{\text{($\gamma$ comutes with $\alpha(d), \beta(d)$)}} \\
          &= \left[ B_{\beta, \alpha}[\bm{p}_{2D - 1}^\gamma, \dots, \bm{p}\sub{D}^\gamma]^{\beta(d)},\: B_{\beta, \alpha}[\bm{p}\sub{D - 1}^\gamma, \dots, \bm{p}_{0}^\gamma]^{\alpha(d)} \right] & \text{(by induction)} \\
          &= B_{\beta, \alpha}[\bm{p}_{2D - 1}^\gamma, \dots, \bm{p}_0^\gamma] &\text{\qed}
        \end{alignedat}
      \end{multline*}
    \end{proof}
    
    For our later reductions, it will be important to compute $B_{\beta, \alpha}[\bm{p}\sub{D - 1}, \dots, \bm{p}_0]$ in logarithmic space, which is possible due to the logarithmic nesting depth.
    \begin{lemma}\labelx{fct:BIsLogspaceComputable}
      If the functions $\alpha$ and $\beta$ are computable in \DLINSPACE (where the input is given in binary), we can compute $B_{\beta, \alpha}[\bm{p}\sub{D - 1}, \dots, \bm{p}_0]$ on input of $\bm{p}_0, \dots, \bm{p}\sub{D - 1}$ in logarithmic space.
    \end{lemma}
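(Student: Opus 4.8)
The plan is to exhibit a deterministic transducer that, given $\bm{p}_0, \dots, \bm{p}\sub{D - 1}$ with total input length $N$ (so $N \ge D = 2^d$, since the input lists $D$ state sequences), writes $B_{\beta, \alpha}[\bm{p}\sub{D - 1}, \dots, \bm{p}_0]$ onto its (write-only) output tape using only $O(\log N)$ cells of work tape; logarithmic-space computability in the stated sense then follows, noting that the output has length $\operatorname{poly}(N)$ and so does not count against the space bound.

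First I would unfold \autoref{def:balancedCommutator} as a depth-first, left-to-right traversal of the complete binary tree $T$ of depth $d = \log_2 D$: its $i$-th leaf carries $\bm{p}_i$, and an internal node $v$ whose subtree has $2^{\ell + 1}$ leaves (so $v$ corresponds to an instance $B_{\beta, \alpha}[\bm{p}_{a + 2^{\ell + 1} - 1}, \dots, \bm{p}_a]$) expands, by the recursion and $[h, g] = h^{-1} g^{-1} h g$, to
\[
  \beta(\ell)^{-1}\, S_1^{-1}\, \beta(\ell)\; \alpha(\ell)^{-1}\, S_0^{-1}\, \alpha(\ell)\; \beta(\ell)^{-1}\, S_1\, \beta(\ell)\; \alpha(\ell)^{-1}\, S_0\, \alpha(\ell)
\]
where $S_1$ and $S_0$ are the expansions of the high-index and the low-index child of $v$. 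Since $\ell \le d = \log_2 D \le \log_2 N$, the traversal maintains a stack of at most $d = O(\log N)$ frames.

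The crucial trick is to realise the \emph{inverted} copies $S_1^{-1}$ and $S_0^{-1}$ without recomputing or storing them. For a tree-structured expression, forming the inverse amounts exactly to traversing the subtree in the \emph{reversed} order while flipping a \emph{sign bit} (which, at each leaf, turns the output of $\bm{p}_i$ into $\bm{p}_i^{-1}$, emitted letter by letter from right to left with each letter replaced by its inverse, and, at each internal node, reverses the order of the four blocks above and swaps the two commutator arguments). Hence each stack frame only needs: one bit recording whether it is the high- or the low-index child of its parent (so the index $i$ of a leaf can be read off from the stack), a constant-size \enquote{program counter} marking which of its constantly many blocks is currently being emitted, and the current sign bit; the whole stack therefore fits in $O(\log N)$ cells. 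At a leaf the relevant factor $\bm{p}_i$ of the input is found by a single scan with an $O(\log N)$-bit counter (and scanned again, right-to-left, when the sign bit is set). At an internal node at level $\ell$ one must output $\alpha(\ell)^{\pm 1}$ or $\beta(\ell)^{\pm 1}$; since $\ell \le \log_2 N$, its binary encoding has only $O(\log\log N)$ bits, so invoking the assumed \DLINSPACE transducer for $\alpha$ or $\beta$ costs just $O(\log\log N)$ space, its output has length $2^{O(\log\log N)} = \operatorname{polylog}(N)$, and the inverted variants are produced as for the leaves (compute the length, then emit the letters from last to first, each replaced by its inverse) using an extra $O(\log\log N)$-bit counter. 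Since the composition of logarithmic-space transducers is again logarithmic space, these subroutines plug in and the whole construction runs in space $O(\log N)$.

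The main obstacle I anticipate is precisely the bookkeeping for the inverted copies: one has to argue carefully that inversion of a fully-parenthesised commutator expression is captured by \enquote{reverse the traversal order and invert each literal}, so that an inverted subexpression is still produced by a single left-to-right pass carrying only a propagated sign bit and a constant-size state per level. Granting that, everything else — locating input factors, invoking $\alpha$ and $\beta$, and composing transducers — is routine, and one can alternatively phrase the same algorithm as computing the $k$-th output symbol directly by decoding $k$ through the recursion.
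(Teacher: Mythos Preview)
Your proposal is correct and is essentially the same argument as the paper's: the paper phrases it as computing the $k$-th output symbol by recursively locating $k$ in one of the four blocks while maintaining two pointers \texttt{left}/\texttt{right}, an inverse bit, and the current level $d$ (stored in $O(\log\log D)$ bits), which is exactly the per-position variant you mention in your final sentence. Your streaming DFS with a depth-$d$ stack of constant-size frames is just the other natural presentation of the same traversal, and your treatment of inverted subexpressions via a propagated sign bit matches the paper's single \enquote{inverse/non-inverse block} flag.
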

    \begin{proof}
      We give a sketch for a (deterministic) algorithm which computes the symbol at position $i$ of $B_{\beta, \alpha}[\bm{p}\sub{D - 1}, \dots, \bm{p}_0]$ (as a word where we consider $\alpha(d)$, $\beta(d)$, the $\bm{p}_i$ and their inverses as letters) in logarithmic space. Later, we can expand this by filling in the actual state sequences over $Q^{\pm *}$ (which can be computed in space logarithmic in $D$). For $D = 2^d$ (with $d > 0$), we have
      \[
        B_{\beta, \alpha}[\bm{p}\sub{D - 1}, \dots, \bm{p}_0] =
          \begin{array}[t]{r@{\,}c@{\,}l}
            \beta(d - 1)^{-1} & B_{\beta, \alpha}[\bm{p}\sub{D - 1}, \dots, \bm{p}_{\frac{D}{2}}]^{-1} & \beta(d - 1) \\
            \alpha(d - 1)^{-1} & B_{\beta, \alpha}[\bm{p}_{\frac{D}{2} - 1}, \dots, \bm{p}_{0}]^{-1} & \alpha(d - 1) \\
            \beta(d - 1)^{-1} & B_{\beta, \alpha}[\bm{p}\sub{D - 1}, \dots, \bm{p}_{\frac{D}{2}}] & \beta(d - 1) \\
            \alpha(d - 1)^{-1} & B_{\beta, \alpha}[\bm{p}_{\frac{D}{2} - 1}, \dots, \bm{p}_{0}] & \alpha(d - 1)
          \end{array}
      \]
      and the length $\ell(D)$ of $B_{\beta, \alpha}[\bm{p}\sub{D - 1}, \dots, \bm{p}_0]$ (again as a word where we consider $\alpha(d)$, $\beta(d)$, the $\bm{p}_i$ and their inverses as letters) is given by $\ell(1) = 1$ and $\ell(D) = 8 + 4 \ell(\frac{D}{2})$. This yields
      \[
        \ell(D) = \left( \sum_{i = 0}^{d - 1} 4^i \cdot 8 \right) + 4^d = 8 \frac{4^d - 1}{3} + 4^d = \frac{8}{3} (D^2 - 1) + D^2 = \frac{11}{3} D^2 - \frac{8}{3}
      \]
      and, thus, that the length of $B_{\beta, \alpha}[\bm{p}\sub{D - 1}, \dots, \bm{p}_0]$ is polynomial in $D$. Therefore, we can iterate the algorithm for all positions $1 \leq i \leq \ell(D)$ to output $B_{\beta, \alpha}[\bm{p}\sub{D - 1}, \dots, \bm{p}_0]$ entirely.
      
      To compute the symbol at position $i$, we first check whether $i$ is the first or last position (notice that we need the exact value of $\ell(D)$ for testing the latter). In this case, we know that it is $\beta(d - 1)^{-1}$ or $\alpha(d - 1)$, respectively. Similarly, we can do this for the positions in the middle and at one or three quarters. If the position falls into one of the four recursion blocks, we use two pointers into the input: \texttt{left} and \texttt{right}. Depending on the block, \texttt{left} and \texttt{right} either point to $\bm{p}_0$ and $\bm{p}_{\frac{D}{2} - 1}$ or to $\bm{p}_{\frac{D}{2}}$ and $\bm{p}\sub{D - 1}$. Additionally, we also store whether we are in an inverse block or a non-inverse block and keep track of $d$ as a binary number. We need to decrease $d$ by one in every recursion step. Note that $d = \log D$ as a binary number has length $\log \log D$ (up to constants) and can, thus, certainly be stored in logarithmic space.\footnote{In fact, it would suffice if $\alpha$ and $\beta$ were computable in space $2^n$ here. However, we will not need this more general statement and the hypothesis is not sufficient when we consider the compressed word problem later on.} From now on, we disregard the input left of \texttt{left} and right of \texttt{right} (and do appropriate arithmetic on $i$) and can proceed recursively. If we need to perform another recursive step, we update the variables \texttt{left} and \texttt{right} (instead of using new ones). Therefore, the whole recursion can be done in logarithmic space.\qed
    \end{proof}
    
    Normally, we cannot simply apply $B_{\beta, \alpha}$ to a cross diagram as the output interferes and we could get into different states. However, it is possible if the rows of the cross diagram act like the identity (as we then can clearly re-order rows without interference):
    \begin{fact}\labelx{fct:commutatorInCrossDiagrams}
      Let $\mathcal{T} = (Q, \Sigma, \delta)$ be some \GAut, $u \in \Sigma^*$, $\alpha, \beta: \mathbb{N} \to Q^{\pm *}$ and $\bm{p}_0, \dots, \bm{p}\sub{D - 1} \in Q^{\pm *}$ be state sequences with $D = 2^d$ such that we have the cross diagrams
      \begin{center}
        \begin{tikzpicture}
          \matrix[matrix of math nodes, text height=1.25ex, text depth=0.25ex] (m) {
                     & u & \\
            \bm{p}_0 &   & \bm{q}_0 \\
                     & u & \\
             \vdots  & \vdots  & \vdots \\
                     & u & \\
            \bm{p}\sub{D - 1} &   & \bm{q}\sub{D - 1} \\
                     & u & \\
          };
          
          \foreach \j in {1, 5} {
            \foreach \i in {1} {
              \draw[->] let
                \n1 = {int(2+\i)},
                \n2 = {int(1+\j)}
              in
                (m-\n2-\i) -> (m-\n2-\n1);
              \draw[->] let
                \n1 = {int(1+\i)},
                \n2 = {int(2+\j)}
              in
                (m-\j-\n1) -> (m-\n2-\n1);
            };
          };
          
          \matrix[matrix of math nodes, text height=1.25ex, text depth=0.25ex, right=4cm of m.north, anchor=north] (m2) {
                   & u & \\
            \alpha(d) &   & \alpha'(d) \\
                   & u & \\
          };
          \draw[->] (m2-2-1) -> (m2-2-3)
                    (m2-1-2) -> (m2-3-2);

          \matrix[matrix of math nodes, text height=1.25ex, text depth=0.25ex, right=4cm of m.south, anchor=south] (m3) {
                  & u & \\
            \beta(d) &   & \beta'(d) \\
                  & u & \\
          };
          \draw[->] (m3-2-1) -> (m3-2-3)
                    (m3-1-2) -> (m3-3-2);
                    
          \node[anchor=base] at ($(m2-3-2.south)!0.5!(m3-1-2)$) {and};
        \end{tikzpicture}
      \end{center}
      for all $d$ where $\alpha'(d), \beta'(d), \bm{q}_0, \dots, \bm{q}\sub{D - 1} \in Q^{\pm *}$.
      Then, we also have
      \begin{center}
        \begin{tikzpicture}[baseline=(m-2-1)]
          \matrix[matrix of math nodes, text height=1.25ex, text depth=0.25ex] (m) {
            & u & \\
            B_{\beta, \alpha}[\bm{p}\sub{D - 1}, \dots, \bm{p}_0] &   & B_{\beta', \alpha'}[\bm{q}\sub{D - 1}, \dots, \bm{q}_0] \text{.} \\
            & u & \\
          };
          
          \foreach \j in {1} {
            \foreach \i in {1} {
              \draw[->] let
                \n1 = {int(2+\i)},
                \n2 = {int(1+\j)}
              in
                (m-\n2-\i) -> (m-\n2-\n1);
              \draw[->] let
                \n1 = {int(1+\i)},
                \n2 = {int(2+\j)}
              in
                (m-\j-\n1) -> (m-\n2-\n1);
            };
          };
        \end{tikzpicture}
      \end{center}
    \end{fact}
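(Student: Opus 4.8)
The plan is to prove the statement by induction on $d$ (equivalently, on $D = 2^d$). For the proof write $\mathrm{Stab}(u) = \{\, \bm{s} \in Q^{\pm *} \mid \bm{s} \circ u = u \,\}$ for the set of state sequences whose induced map fixes $u$, and record two elementary facts about it. \emph{(i)} $\mathrm{Stab}(u)$ is closed under concatenation, and $\bm{s}\bm{t} \cdot u = (\bm{s} \cdot u)(\bm{t} \cdot u)$ for all $\bm{s}, \bm{t} \in \mathrm{Stab}(u)$; this is immediate from the inductive definition $\bm{a}\bm{b} \cdot u = (\bm{a} \cdot (\bm{b} \circ u))(\bm{b} \cdot u)$ together with $\bm{b} \circ u = u$. \emph{(ii)} $\mathrm{Stab}(u)$ is closed under the involution, and $\bm{s}^{-1} \cdot u = (\bm{s} \cdot u)^{-1}$ for all $\bm{s} \in \mathrm{Stab}(u)$; this is the only place where invertibility of the \GAut is used -- reading the unique run of $\bm{s}$ on input $u$ backwards through the inverse automaton shows that $\bm{s}^{-1}$ fixes $u$ and ends in the reversed, letter-wise inverted state sequence, which is exactly $(\bm{s} \cdot u)^{-1}$. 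Both facts I would dispatch in a sentence each.

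The base case $d = 0$, i.e.\ $D = 1$, is immediate: $B_{\beta, \alpha}[\bm{p}_0] = \bm{p}_0$, $B_{\beta', \alpha'}[\bm{q}_0] = \bm{q}_0$, and the asserted cross diagram is precisely the hypothesis $\bm{p}_0 \circ u = u$, $\bm{p}_0 \cdot u = \bm{q}_0$. For the step from $D$ to $2D$, set $\bm{L} = B_{\beta, \alpha}[\bm{p}_{D-1}, \dots, \bm{p}_0]$ and $\bm{R} = B_{\beta, \alpha}[\bm{p}_{2D-1}, \dots, \bm{p}_{D}]$. Applying the induction hypothesis to the $D$ blocks $\bm{p}_0, \dots, \bm{p}_{D-1}$ and again to $\bm{p}_{D}, \dots, \bm{p}_{2D-1}$ (the hypotheses on $\alpha$ and $\beta$ carry over unchanged) gives $\bm{L}, \bm{R} \in \mathrm{Stab}(u)$ with $\bm{L} \cdot u = B_{\beta', \alpha'}[\bm{q}_{D-1}, \dots, \bm{q}_0]$ and $\bm{R} \cdot u = B_{\beta', \alpha'}[\bm{q}_{2D-1}, \dots, \bm{q}_{D}]$. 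Unfolding the outermost commutator and conjugation in \autoref{def:balancedCommutator},
\[
  B_{\beta, \alpha}[\bm{p}_{2D-1}, \dots, \bm{p}_0] = \beta(d)^{-1} \bm{R}^{-1} \beta(d)\, \alpha(d)^{-1} \bm{L}^{-1} \alpha(d)\, \beta(d)^{-1} \bm{R}\, \beta(d)\, \alpha(d)^{-1} \bm{L}\, \alpha(d).
\]
Since $\alpha(d), \beta(d) \in \mathrm{Stab}(u)$ by assumption and $\bm{L}, \bm{R} \in \mathrm{Stab}(u)$, every factor on the right-hand side lies in $\mathrm{Stab}(u)$ by \emph{(i)} and \emph{(ii)}, hence so does the whole word; this yields the top row $B_{\beta, \alpha}[\bm{p}_{2D-1}, \dots, \bm{p}_0] \circ u = u$. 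Now apply $\cdot\, u$ to both sides, distribute it over the factors on the right using \emph{(i)} and \emph{(ii)}, and substitute $\bm{R}\cdot u$, $\bm{L}\cdot u$, $\alpha(d) \cdot u = \alpha'(d)$ and $\beta(d) \cdot u = \beta'(d)$: every symbol on the right gets replaced by its image, so the result is exactly the analogous unfolding of
\[
  \bigl[\, B_{\beta', \alpha'}[\bm{q}_{2D-1}, \dots, \bm{q}_{D}]^{\beta'(d)},\; B_{\beta', \alpha'}[\bm{q}_{D-1}, \dots, \bm{q}_{0}]^{\alpha'(d)} \,\bigr] = B_{\beta', \alpha'}[\bm{q}_{2D-1}, \dots, \bm{q}_0],
\]
which is the bottom row. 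This closes the induction.

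The same argument can be phrased without induction: fully expanded, $B_{\beta, \alpha}[\bm{p}_{D-1}, \dots, \bm{p}_0]$ is a word over the letters $\bm{p}_i^{\pm 1}$ and $\alpha(e)^{\pm 1}, \beta(e)^{\pm 1}$, all of which lie in $\mathrm{Stab}(u)$, and since \autoref{def:balancedCommutator} builds $B$ solely from concatenation, the involution and commutators -- operations that the map $\bm{s} \mapsto \bm{s} \cdot u$ respects on $\mathrm{Stab}(u)$ by \emph{(i)} and \emph{(ii)} -- feeding $u$ through $B_{\beta, \alpha}[\bm{p}_{D-1}, \dots, \bm{p}_0]$ yields output $u$ and ending state sequence $B_{\beta', \alpha'}[\bm{q}_{D-1}, \dots, \bm{q}_0]$. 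I do not expect a genuine obstacle here: the only points needing care are the ordering convention for state sequences and cross diagrams (the right-most state acts first, which is why $\cdot\, u$ is compatible with concatenation in \emph{(i)}) and the role of invertibility of the \GAut, without which \emph{(ii)} -- and with it the manipulation of the inverses appearing in the commutator -- would not even make sense.
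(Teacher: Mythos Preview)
Your proof is correct. The paper does not actually prove this Fact; it only supplies the one-line justification preceding the statement, namely that ``it is possible if the rows of the cross diagram act like the identity (as we then can clearly re-order rows without interference)''. Your argument is a faithful and complete formalization of exactly this idea: your $\mathrm{Stab}(u)$ and properties \emph{(i)}, \emph{(ii)} make precise that rows fixing $u$ can be stacked, permuted and inverted freely, with $\bm{s} \mapsto \bm{s} \cdot u$ behaving as a monoid-with-involution homomorphism on $\mathrm{Stab}(u)$; the induction then just tracks this through the recursive definition of $B_{\beta,\alpha}$. There is nothing to add.
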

  \end{section}
  
  \begin{section}{Uniform Word Problem}\label{sec:uniformWP}
    We start by showing that the uniform word problem for automaton groups is \PSPACE-complete. Although this also follows from the non-uniform case proved below, it uses the same ideas but allows for a simpler construction. This way, it serves as a good starting point and makes understanding the more complicated construction below easier.
    
    The general idea is to reduce the $\PSPACE$-complete \cite[Lemma~3.2.3]{kozen77lower}\footnote{That the alphabet may be assumed to contain exactly four elements can be seen easily. In fact, we may even assume it to be of size three or -- with suitable encoding -- size two.} \DecProblem{DFA Intersection Problem}\footnote{Typically, DFA is an abbreviation for \enquote{deterministic finite automaton}. However, as it is common in the setting of this paper, we use the term \emph{automaton} to refer to what is more precisely a transducer (we have an output). Therefore, we use the term \emph{acceptor} to refer to automata without output.}
    \problem
      [$\Gamma = \{ a_1, \dots, a_4 \}$]
      {$D \in \mathbb{N}$ and deterministic finite acceptors\newline
       $\mathcal{A}_0 = (P_0, \Gamma, \tau_0, p_{0, 0}, F_0), \dots,$\newline
       $\mathcal{A}\sub{D - 1} = (P\sub{D - 1}, \Gamma, \tau\sub{D - 1}, p_{0, D - 1}, F\sub{D - 1})$}
      {is $\bigcap_{i = 0}^{D - 1} L(\mathcal{A}_i) = \emptyset$?}\noindent
    in logarithmic space to the uniform word problem for automaton groups. The output automaton basically consists of the input acceptors and obviously stores in the states the information whether an input word was accepted or not. Finally, we use a logical conjunction based on the commutator $B_{\beta, \alpha}$ to extract whether there is an input word that gets accepted by all acceptors. While we could use other groups for the logical conjunction, we will stick for now to the group $A_5$ from \cref{ex:conjunctionInA5} for the sake of simplicity.
    
    \begin{theorem}\labelx{thm:uniformPSPACE}
      The uniform word problem for automaton groups
      \problem
        [$\Sigma = \{ a_1, \dots, a_5 \}$]
        {a \GAut $\mathcal{T} = (Q, \Sigma, \delta)$, $\bm{p} \in Q^{\pm *}$}
        {is $\bm{p} = \idGrp$ in $\mathscr{G}(\mathcal{T})$?}
      \noindent{}(even over a fixed alphabet with five elements) is \PSPACE-complete.
    \end{theorem}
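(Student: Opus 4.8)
I would treat the two directions separately. Membership in \PSPACE is exactly the nondeterministic linear‑space guess‑and‑check argument recalled in the introduction: guess an input word letter by letter while storing only the current states of the $\abs{\bm{p}}$ tracks of $\bm{p}$ together with a polynomially bounded step counter, and accept once an output letter disagrees with the corresponding input letter. The automaton being part of the input does not change this, so by Savitch's theorem the uniform word problem lies in \PSPACE, and the substance of the proof is \PSPACE‑hardness.

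\textbf{The reduction.} For hardness I would reduce from \DecProblem{DFA Intersection Problem}. After a \LOGSPACE preprocessing step I may assume that $D=2^d$ is a power of two (pad with copies of the one‑state acceptor for $\Gamma^*$, which does not change the intersection) and that $\Gamma=\{a_1,a_2\}$ (a standard binary re‑encoding of the input letters, which preserves \PSPACE‑completeness and is \LOGSPACE‑computable). From the acceptors I then build a \GAut $\mathcal{T}=(Q,\Sigma,\delta)$ over the fixed alphabet $\Sigma=\{a_1,\dots,a_5\}$ with three kinds of states: an identity state $e$ together with a \emph{permutation layer} $\bar\sigma,\bar\alpha,\bar\beta$ which on each letter $a_i$ output $a_{\pi(i)}$ and move to $e$, where $\pi$ is $\sigma=(1\,3\,2\,5\,4)$, $\alpha=(2\,3)(4\,5)$, resp.\ $\beta=(2\,4\,5)$ from \autoref{ex:conjunctionInA5}; \emph{forwarding states} $\widehat\alpha,\widehat\beta$ which are the identity on $a_1,a_2$ and stay put, but on the separator $a_3$ output $a_3$ and switch to $\bar\alpha$ resp.\ $\bar\beta$ (on $a_4,a_5$ they move to $e$); and, for every acceptor $\mathcal{A}_i$ and every $p\in P_i$ (the $P_i$ taken disjoint), a state, also called $p$, that is the identity on $a_1,a_2$ while applying $\tau_i$ and on $a_3$ outputs $a_3$ and moves to $\bar\sigma$ if $p\in F_i$ and to $e$ otherwise (on $a_4,a_5$ it moves to $e$). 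One checks that every state induces a bijection of $\Sigma^*$, so $\mathcal{T}$ is a \GAut; moreover $\bar\sigma,\bar\alpha,\bar\beta$ generate a faithful copy of $\langle\sigma,\alpha,\beta\rangle\le A_5$ inside $\mathscr{G}(\mathcal{T})$. The reduction returns $(\mathcal{T},\bm{p})$ with $\bm{p}=B_{\widehat\beta,\widehat\alpha}[p_{0,D-1},\dots,p_{0,0}]$, the $p_{0,i}$ being the initial states of the $\mathcal{A}_i$ regarded as states of $\mathcal{T}$. By \autoref{fct:BIsLogspaceComputable} (applied to the constant maps $\widehat\alpha,\widehat\beta$) and since $\mathcal{T}$ is plainly \LOGSPACE‑constructible, this is a \LOGSPACE reduction.

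\textbf{Correctness.} To establish correctness I would fix an arbitrary $u\in\Gamma^*$ and look at the common prefix $w=u\,a_3$. Every acceptor state $p$ of $\mathcal{A}_i$ reads $w$ without changing it and ends in $\bar\sigma$ exactly if $\mathcal{A}_i$ accepts $u$ (and in $e$ otherwise), while $\widehat\alpha,\widehat\beta$ read $w$ unchanged and end in $\bar\alpha,\bar\beta$. Hence \autoref{fct:commutatorInCrossDiagrams} applies to $\bm{p}$ with input $w$ and yields that $\bm{p}$ reads $w$ unchanged and ends in the state sequence $B_{\bar\beta,\bar\alpha}[\bm{q}\sub{D-1},\dots,\bm{q}_0]$, where $\bm{q}_i=\bar\sigma$ if $u\in L(\mathcal{A}_i)$ and $\bm{q}_i=e$ otherwise. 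Since $\bar\sigma,\bar\alpha,\bar\beta$ realize $\sigma,\alpha,\beta$ faithfully in $\mathscr{G}(\mathcal{T})$, the computation from \autoref{ex:conjunctionInA5} (with $e=\idGrp$ playing \texttt{false} and $\bar\sigma$ playing \texttt{true}) together with \autoref{fct:commutatorCollapses} gives $B_{\bar\beta,\bar\alpha}[\bm{q}\sub{D-1},\dots,\bm{q}_0]=\bar\sigma$ in $\mathscr{G}(\mathcal{T})$ if $u\in\bigcap_i L(\mathcal{A}_i)$ and $=\idGrp$ otherwise. Consequently $\bm{p}\circ(u\,a_3\,a_1)=u\,a_3\,a_3$ (note $\sigma(1)=3$) precisely when $u\in\bigcap_i L(\mathcal{A}_i)$, whereas on every word not of the form $u\,a_3\,v$ with $u\in\Gamma^*$ and $v\neq\varepsilon$ all the states constituting $\bm{p}$, hence $\bm{p}$ itself, act as the identity. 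Therefore $\bm{p}\neq\idGrp$ in $\mathscr{G}(\mathcal{T})$ if and only if $\bigcap_{i=0}^{D-1}L(\mathcal{A}_i)\neq\emptyset$; since this problem (and its complement) is \PSPACE‑complete, the uniform word problem is \PSPACE‑hard, which finishes the proof.

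\textbf{Main obstacle.} The step I expect to require the most care is getting the two layers to cooperate: to be allowed to invoke \autoref{fct:commutatorInCrossDiagrams} the states $p_{0,i},\widehat\alpha,\widehat\beta$ must act as the identity on the whole prefix $u\,a_3$, yet the conjugating elements must nevertheless still realize the identity $\sigma=[\sigma^\beta,\sigma^\alpha]$ once the separator $a_3$ has been consumed — which is precisely why the forwarding states are kept distinct from the permutation‑layer states into which they forward. The remaining points (that all states, including the acceptor states with their finalizing behaviour on $a_3$, induce bijections so that $\mathcal{T}$ really is a \GAut, and that $\langle\sigma,\alpha,\beta\rangle$ acts faithfully via permutations of the first letter) are routine but should be spelled out.
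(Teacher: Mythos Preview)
Your proposal is correct and follows essentially the same route as the paper: reduce from the \DecProblem{DFA Intersection Problem}, let the acceptor states act as the identity on the input letters and branch to $\sigma$/$\id$ on a separator, introduce forwarding states that become $\alpha,\beta$ only after the separator, and output the balanced iterated commutator of the initial states, with correctness via \autoref{fct:commutatorInCrossDiagrams} and \autoref{fct:commutatorCollapses}. The only differences are cosmetic: you take the acceptor alphabet to be binary and use $a_4,a_5$ as padding (the paper uses a four-letter acceptor alphabet with a single extra separator $\$$), and your permutation-layer states $\bar\sigma,\bar\alpha,\bar\beta$ apply their permutation once and drop to $e$, whereas the paper's $\sigma,\alpha,\beta$ loop and apply the permutation at every position; either choice gives a faithful copy of $A_5$ inside $\mathscr{G}(\mathcal{T})$, so both arguments go through unchanged.
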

    \begin{proof}
      It is known that the uniform word problem for automaton groups\footnote{In fact, even the corresponding problem for automaton semigroups is in \PSPACE.} is in \PSPACE (using a guess and check algorithm; see \cite{steinberg2015some} or \cite[Proposition 2]{dangeli2017complexity}). Thus, we only have to show that the problem is \PSPACE-hard. As already mentioned, we reduce the \DecProblem{DFA Intersection Problem}
      \problem
        [$\Gamma = \{ a_1, \dots, a_4 \}$]
        {$D \in \mathbb{N}$ and deterministic finite acceptors\newline
         $\mathcal{A}_0 = (P_0, \Gamma, \tau_0, p_{0, 0}, F_0), \dots,$\newline
         $\mathcal{A}\sub{D - 1} = (P\sub{D - 1}, \Gamma, \tau\sub{D - 1}, p_{0, D - 1}, F\sub{D - 1})$}
        {is $\bigcap_{i = 0}^{D - 1} L(\mathcal{A}_i) = \emptyset$?}\noindent
      to the uniform word problem for automaton groups in logarithmic space. As mentioned above, Kozen \cite[Lemma~3.2.3]{kozen77lower} showed that this problem is \PSPACE-hard.
      
      For the reduction, we need to map the acceptors $\mathcal{A}_0 = (P_0, \Gamma, \tau_0, p_{0, 0}, F_0),\allowbreak \dots,\allowbreak \mathcal{A}\sub{D - 1} = (P\sub{D - 1}, \Gamma, \tau\sub{D - 1}, p_{0, D - 1}, F\sub{D - 1})$ to an automaton $\mathcal{T} = (Q, \Sigma, \delta)$ and a state sequence $\bm{p} \in Q^{\pm *}$. Without loss of generality, we can assume $D = 2^d$ here for some $d$ (otherwise, we can just duplicate one of the input acceptors until we reach a power of two, which can be done in logarithmic space).
      
      We assume the state sets $P_i$ to be pairwise disjoint and set $P = \{ \id, \sigma \} \uplus \bigcup_{i = 0}^{D - 1} P_i$ and $F = \bigcup_{i = 0}^{D - 1} F_i$. Additionally, we set $\Sigma = \{ a_1, \dots, a_4 \} \uplus \{ \$ \}$ and assume that the elements $\sigma, \alpha, \beta \in A_5$ (from \cref{ex:conjunctionInA5}) act as the corresponding permutations on $\Sigma$. For the transitions, we set
      \begin{align*}
        \delta ={}& \bigcup_{i = 0}^{D - 1} \left\{ \trans{p}{a}{a}{q} \mid \transa{p}{a}{q} \in \tau_i \right\} \cup \\
                  & \left\{ \trans{r}{\$}{\$}{\id} \mid r \in P \setminus F \right\} \cup \left\{ \trans{f}{\$}{\$}{\sigma} \mid f \in F \right\} \cup \\
                  & \left\{ \trans{\id}{a}{a}{\id} \mid a \in \Sigma \right\} \cup \left\{ \trans{\sigma}{a}{\sigma(a)}{\sigma} \mid a \in \Sigma \right\}
        \text{.}
      \end{align*}
      Thus, we take the union of the acceptors and extend it into an automaton by letting all states act like the identity. With the new letter $\$$ (the ``end-of-word'' symbol), we go to $\id$ for non-accepting states and to $\sigma$ for accepting ones. Finally, we have the state $\id$, which acts like the identity, and the state $\sigma$, whose action is to apply the permutation $\sigma$ to all letters of the input word, justifying the re-use of the name $\sigma$. Finally, we define $\mathcal{T}$ as the (disjoint) union of the just defined automaton $(P, \Sigma, \delta)$ with the automaton
      \begin{center}
        \begin{tikzpicture}[auto, shorten >=1pt, >=latex, baseline=(h0.base)]
          \node[state] (k0) {$\alpha_0$};
          \node[state, right=of k0] (k) {$\alpha$};
          \node[state, right=2cm of k] (h0) {$\beta_0$};
          \node[state, right=of h0] (h) {$\beta$};
          
          \draw[->] (k0) edge[loop below] node[align=center] {$a_1 / a_1$\\$\dots$\\$a_4 / a_4$} (k0)
                    (k0) edge node {$\$ / \$$} (k)
                    (k) edge[loop below] node[align=center] {$a_1 / \alpha(a_1)$\\$\dots$\\$a_4 / \alpha(a_4)$} (k)
                    (k) edge[loop above] node {$\$ / \alpha(\$)$} (k)
                    (h0) edge[loop below] node[align=center] {$a_1 / a_1$\\$\dots$\\$a_4 / a_4$} (h0)
                    (h0) edge node {$\$ / \$$} (h)
                    (h) edge[loop below] node[align=center] {$a_1 / \beta(a_1)$\\$\dots$\\$a_4 / \beta(a_4)$} (h)
                    (h) edge[loop above] node {$\$ / \beta(\$)$} (h)
          ;
        \end{tikzpicture}.
      \end{center}
      Notice that $\mathcal{T}$ is deterministic, complete and invertible and that all states except $\sigma, \alpha$ and $\beta$ act like the identity on words not containing $\$$. Also note that on input of $\mathcal{A}_0, \dots, \mathcal{A}\sub{D - 1}$, the automaton can clearly be computed in logarithmic space.
      
      For the state sequence, we set $\bm{p} = B_0[p_{0, D - 1}, \dots, p_{0, 0}]$ where we use $B_0$ as a short-hand notation for the balanced commutator $B_{\beta_0, \alpha_0}$ defined in \cref{def:balancedCommutator}. Observe that, by \cref{fct:BIsLogspaceComputable}, we can compute $\bm{p}$ in logarithmic space ($\alpha_0$ and $\beta_0$ are even constant functions in our setting).
      
      This completes our description of the reduction and it remains to show its correctness. If there is some $w \in \bigcap_{i = 1}^{\ell} L(\mathcal{A}_i)$, we have to show $\bm{p} \neq_{\mathscr{G}(\mathcal{T})} \idGrp$. We have
      \begin{center}
        \begin{tikzpicture}[baseline=(m-6-5.base)]
          \matrix[matrix of math nodes, text height=1.25ex, text depth=0.25ex] (m) {
                        & w      &             & \$ &        \\
            p_{0, 0}    &        & q_{f, 0}    &    & \sigma \\
                        & w      &             & \$ &        \\
            \vdots      & \vdots &             & \vdots &    \\
                        & w      &             & \$ &        \\
            p_{0, D - 1} &        & q_{f, D - 1} &    & \sigma \\
                        & w      &             & \$ &        \\
          };
          \foreach \j in {1, 5} {
            \foreach \i in {1, 3} {
              \draw[->] let
                \n1 = {int(2+\i)},
                \n2 = {int(1+\j)}
              in
                (m-\n2-\i) -> (m-\n2-\n1);
              \draw[->] let
                \n1 = {int(1+\i)},
                \n2 = {int(2+\j)}
              in
                (m-\j-\n1) -> (m-\n2-\n1);
            };
          };
        \end{tikzpicture}
      \end{center}
      where all $q_{f, i} \in F_i$ are final states. Thus, by \cref{fct:commutatorInCrossDiagrams}, we also have
      \begin{center}
        \begin{tikzpicture}[baseline=(m-2-5.base)]
          \matrix[matrix of math nodes, text height=1.25ex, text depth=0.25ex] (m) {
                        & w      &             & \$ &        \\
            \bm{p} = B_0[p_{0, D - 1}, \dots, p_{0, 0}] &        & B_0[q_{f, D - 1}, \dots, q_{f, 0}] &    & B[\underbrace{\sigma, \dots, \sigma}_{D \text{ times}}] \\
                        & w      &             & \$ &        \\
          };
          \foreach \j in {1} {
            \foreach \i in {1, 3} {
              \draw[->] let
                \n1 = {int(2+\i)},
                \n2 = {int(1+\j)}
              in
                (m-\n2-\i) -> (m-\n2-\n1);
              \draw[->] let
                \n1 = {int(1+\i)},
                \n2 = {int(2+\j)}
              in
                (m-\j-\n1) -> (m-\n2-\n1);
            };
          };
        \end{tikzpicture},
      \end{center}
      where we used $B$ as an abbreviation for $B_{\beta, \alpha}$. Without loss of generality, we may assume $\sigma(a_1) \neq a_1$ and, since we have $B[\sigma, \dots, \sigma] = \sigma$ in $A_5$ and also in $\mathscr{G}(\mathcal{T})$ (see \cref{ex:conjunctionInA5}), we obtain $\bm{p} \circ w \$ a_1 = w \$ \sigma(a_1) \neq w \$ a_1$.
      
      If, on the other hand, $\bigcap_{i = 0}^{D - 1} L(\mathcal{A}_i) = \emptyset$, we have to show $\bm{p} =_{\mathscr{G}(\mathcal{T})} \idGrp$. For this, let $w \in \Sigma^*$ be arbitrary. If $w$ does not contain any $\$$, we do not need to show anything since, by construction, only the states $\sigma$, $\alpha$ and $\beta$ act non-trivially on these words and they can only be reached after reading a $\$$. If $w$ contains $\$$, we can write $w = u \$ v$ with $u \in \{ a_1, \dots, a_4 \}^*$. Since the intersection is empty, there is some $i \in \{ 0, \dots, D - 1 \}$ with $u \not \in L(\mathcal{A}_i)$ and we obtain
      \begin{center}
        \begin{tikzpicture}[baseline=(m-6-5.base)]
          \matrix[matrix of math nodes, text height=1.25ex, text depth=0.25ex] (m) {
                        & u      &          & \$ &        \\
            p_{0, 0}    &        & q_0      &    & g_0 \\
                        & u      &          & \$ &        \\
            \vdots      & \vdots &          & \vdots &    \\
                        & u      &          & \$ &        \\
            p_{0, i}    &        & q_i      &    & g_i\rlap{${} = \id$} \\
                        & u      &          & \$ &        \\
            \vdots      & \vdots &          & \vdots &    \\
                        & u      &          & \$ &        \\
            p_{0, D - 1} &        & q_{D - 1} &    & g_{D - 1} \\
                        & u      &          & \$ &        \\
          };
          \foreach \j in {1,5,9} {
            \foreach \i in {1, 3} {
              \draw[->] let
                \n1 = {int(2+\i)},
                \n2 = {int(1+\j)}
              in
                (m-\n2-\i) -> (m-\n2-\n1);
              \draw[->] let
                \n1 = {int(1+\i)},
                \n2 = {int(2+\j)}
              in
                (m-\j-\n1) -> (m-\n2-\n1);
            };
          };
        \end{tikzpicture}
      \end{center}
      where $g_0, \dots, g\sub{D - 1} \in \{ \id, \sigma \}$. Again, we also obtain the cross diagram\newline
      {\centering
       \resizebox{\linewidth}{!}{%
        \begin{tikzpicture}[baseline=(m-2-5.base)]
          \matrix[matrix of math nodes, text height=1.25ex, text depth=0.25ex, ampersand replacement=\&] (m) {
                        \& u      \&             \& \$ \&        \\
            \bm{p} = B_0[p_{0, D - 1}, \dots, p_{0, 0}] \&        \& B_0[q\sub{D - 1}, \dots, q_i, \dots, q_0] \&    \& B[g\sub{D - 1}, \dots, g_i, \dots, g_0] \\
                        \& u      \&             \& \$ \&        \\
          };
          \foreach \j in {1} {
            \foreach \i in {1, 3} {
              \draw[->] let
                \n1 = {int(2+\i)},
                \n2 = {int(1+\j)}
              in
                (m-\n2-\i) -> (m-\n2-\n1);
              \draw[->] let
                \n1 = {int(1+\i)},
                \n2 = {int(2+\j)}
              in
                (m-\j-\n1) -> (m-\n2-\n1);
            };
          };
        \end{tikzpicture}}
      }
      by \cref{fct:commutatorInCrossDiagrams} but, this time, we have $B[g\sub{D - 1}, \dots, g_i, \dots, g_0] = \idGrp$ in $A_5$ since $g_i = \id$ (see \cref{fct:commutatorCollapses}). Accordingly, we have $\bm{p} \circ u \$ v = u \$ v$.\qed
    \end{proof}
  \end{section}

  \begin{section}{Non-Uniform Word Problem}\label{sec:nonuniformWP}
    In this section, we are going to lift the result from the previous section to the non-uniform case. We show:
    \begin{theorem}\labelx{thm:nonuniformPSPACE}
      There is an automaton group with a \PSPACE-complete word problem:
      \problem
        [a \GAut $\mathcal{T} = (Q, \Sigma, \delta)$ with $|\Sigma| = 2$]
        {$\bm{q} \in Q^{\pm *}$}
        {is $\bm{q} = \idGrp$ in $\mathscr{G}(\mathcal{T})$?}
    \end{theorem}
    In order to prove this theorem, we are going to adapt the construction used in \cite[Proposition~6]{dangeli2017complexity} to show that there is an inverse automaton semigroup with a \PSPACE-complete word problem and that there is an automaton group whose word problem with a single rational constraint is \PSPACE-complete. The main idea is to do a reduction directly from a Turing machine accepting an arbitrary $\PSPACE$-complete problem.
    
    Let $M$ be such a deterministic, polynomially space-bounded Turing machine with input alphabet $\Lambda$, tape alphabet $\Delta$, blank symbol $\blank$, state set $P$, initial state $p_0$ and accepting states $F \subseteq P$. Thus, for any input word of length $n$, all configurations of $M$ are of the form $\blank \Delta^\ell P \Delta^m \blank$ with $\ell + 1 + m = s(n)$ for some polynomial $s$. This makes the word problem of $M$
    \problem
      [the \PSPACE machine $M$]
      {$w \in \Lambda^*$ of length $n$}
      {does $M$ reach a configuration with a state from $F$ from the initial configuration $\blank p_0 w \blank^{s(n) - n - 1} \blank$?}\noindent
    \PSPACE-complete and we will eventually do a co-reduction to the word problem of a \GAut $\mathcal{T} = (Q, \Sigma, \delta)$. In fact, we will not work with the Turing machine $M$ directly but instead use the transition function $\tau: \Gamma^3 \to \Gamma$ for $\Gamma = P \uplus \Delta$ from \cref{fct:turingNormalization}.
    
    Our automaton $\mathcal{T}$ operates in two modes. In the first mode, which we will call the \enquote{TM mode}, it interprets its input word as a sequence of configurations of $M$ and verifies that the configuration sequence constitutes a valid computation. This verification is done by multiple states (where each state is responsible for a different verification part) and the information whether the verification was successful is stored in the state, \textbf{not} by manipulating the input word. So we have \emph{successful states} and \emph{fail states}. Upon reading a special input symbol, the automaton will switch into a second mode, the \enquote{commutator mode}. More precisely, successful states go into a dedicated \enquote{okay} state $r$ and the fail states go into a state which we call $\id$ for reasons that become apparent later. Finally, to extract the information from the states, we use the iterated commutator from \cref{def:balancedCommutator}.
    
    Eventually, we want the alphabet of the automaton to only contain two letters. However, we will first describe how the TM mode of the automaton works for more letters (using an automaton $\mathcal{T}'$) and then how we can encode this automaton over two letters. Finally, we will extend this encoding with the commutator mode part of the automaton to eventually obtain $\mathcal{T}$.
    
    \paragraph{Generalized Check-Marking.}
    The idea for the TM mode is similar to Kozen's approach for showing that the \DecProblem{DFA Intersection Problem} is \PSPACE-com\-ple\-te \cite[Lemma~3.2.3]{kozen77lower}: the input word is interpreted as a sequence of configurations of a \PSPACE Turing machine where each configuration is of length $s(n)$:
    \[
      \gamma_0^{(0)} \gamma_1^{(0)} \gamma_2^{(0)} \dots \gamma_{s(n) - 1}^{(0)} \, \# \, \gamma_0^{(1)} \gamma_1^{(1)} \gamma_2^{(1)} \dots \gamma_{s(n) - 1}^{(1)} \, \# \, \dots
    \]
    In Kozen's proof, there is an acceptor for each position $i$ of the configurations with $0 \leq i < s(n)$ which checks for all $t$ whether the transition from $\gamma_i^{(t)}$ to $\gamma_i^{(t + 1)}$ is valid. In our case, however, the automaton must not depend on the input (or its length $n$) and we have to handle this a bit differently. The first idea is to use a \enquote{check-mark approach}. First, we check all first positions ($\gamma_0^{(0)}, \gamma_0^{(1)}, \dots$) for valid transitions. Then, we put a check-mark on all these first positions, which tells us that we now have to check all second positions ($\gamma_1^{(0)}, \gamma_1^{(1)}, \dots$, i.\,e.\ the first ones without a check-mark). Again, we put a check-mark on all these, continue with checking all third positions and so on (\cref{fig:checkmarking}).
    \begin{figure}\centering%
      \trimbox{0pt 1ex 0pt 0ex}{%
      \begin{tikzpicture}
        \matrix (m) [matrix of math nodes, every node/.style/.append={inner sep=0pt}] {
          \underset{\phantom{\checkmark}}{\gamma_0^{(0)}} & \gamma_1^{(0)} & \gamma_2^{(0)} & \dots & \gamma_{s(n) - 1}^{(0)} & \;\#\;\; & \gamma_0^{(1)} & \gamma_1^{(1)} & \gamma_2^{(1)} & \dots & \gamma_{s(n) - 1}^{(1)} & \# & \dots \\
          \underset{\checkmark}{\gamma_0^{(0)}} & \gamma_1^{(0)} & \gamma_2^{(0)} & \dots & \gamma_{s(n) - 1}^{(0)} & \# & \underset{\checkmark}{\gamma_0^{(1)}} & \gamma_1^{(1)} & \gamma_2^{(1)} & \dots & \gamma_{s(n) - 1}^{(1)} & \# & \dots \\
          \underset{\checkmark}{\gamma_0^{(0)}} & \underset{\checkmark}{\gamma_1^{(0)}} & \gamma_2^{(0)} & \dots & \gamma_{s(n) - 1}^{(0)} & \# & \underset{\checkmark}{\gamma_0^{(1)}} & \underset{\checkmark}{\gamma_1^{(1)}} & \gamma_2^{(1)} & \dots & \gamma_{s(n) - 1}^{(1)} & \# & \dots \\
        };
        \path[->] (m-1-1.west) edge[bend right] node[left] {check} (m-2-1.west)
                  (m-2-1.west) edge[bend right] node[left] {check} (m-3-1.west);
      \end{tikzpicture}}%
      \caption{Illustration of the check-mark approach.}\label{fig:checkmarking}%
    \end{figure}
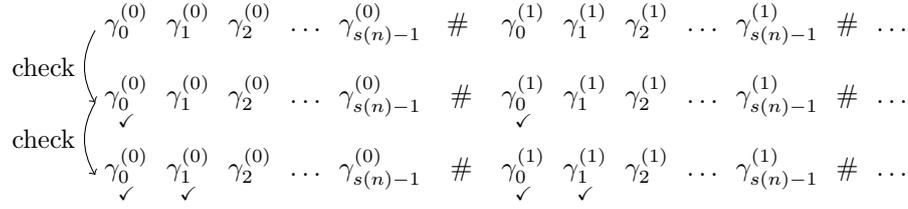%
    
    The problem with this approach is that the check-marking leads to an intrinsically non-invertible automaton (see \cref{fig:addCheckmarkAutomaton}).
    \begin{figure}%
      \centering%
      \trimbox{0pt 1ex 0pt 0ex}{%
      \begin{tikzpicture}[auto, shorten >=1pt, >=latex]
        \node[state] (check) {};
        \node[state, right=of check] (wait) {};
        
        \draw[->] (check) edge[loop left] node {${\gamma \atop \checkmark} / {\gamma \atop \checkmark}$} (check)
                          edge[bend left] node {$\gamma / {\gamma \atop \checkmark}$} (wait)
                  (wait) edge[loop right] node[align=center] {$\gamma / \gamma$\\${\gamma \atop \checkmark} / {\gamma \atop \checkmark}$} (wait)
                         edge[bend left] node {$\# / \#$} (check)
        ;
      \end{tikzpicture}}%
      \caption{Adding a check-mark yields a non-invertible automaton}\label{fig:addCheckmarkAutomaton}%
    \end{figure}
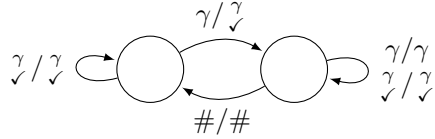
    To circumvent this, we generalize the check-mark approach: in front of each symbol $\gamma_i^{(t)}$ of a configuration, we add a $0^k$ block (of sufficient length $k$). In the spirit of \cref{ex:addingMachine}, we interpret this block as representing a binary number. We consider the symbol following the block as \enquote{unchecked} if the number is zero; for all other numbers, it is considered as \enquote{checked}. Now, checking the next symbol boils down to incrementing each block until we have encountered a block whose value was previously zero (and this can be detected while doing the increment). This idea is depicted in \cref{fig:generalizedCheckmarking}. It would also be possible to have the check-mark block after each symbol instead of before (which might be more intuitive) but it turns out that our ordering has some technical advantages.
    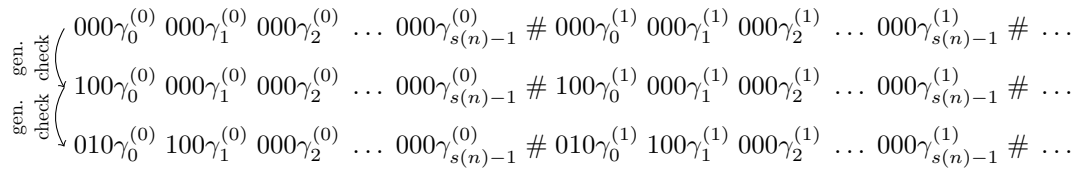
\begin{figure}%
      \centering%
      \resizebox{\linewidth}{!}{%
      \begin{tikzpicture}
        \matrix (m) [matrix of math nodes, column sep=0pt, every node/.style/.append={inner sep=0pt}, row sep=0.5cm] {
          000 \gamma_0^{(0)} \; 000 \gamma_1^{(0)} \; 000 \gamma_2^{(0)} \; \dots \; 000 \gamma_{s(n) - 1}^{(0)} \; \# \; 000 \gamma_0^{(1)} \; 000 \gamma_1^{(1)} \; 000 \gamma_2^{(1)} \; \dots \; 000 \gamma_{s(n) - 1}^{(1)} \; \# \; \dots \\
          100 \gamma_0^{(0)} \; 000 \gamma_1^{(0)} \; 000 \gamma_2^{(0)} \; \dots \; 000 \gamma_{s(n) - 1}^{(0)} \; \# \; 100 \gamma_0^{(1)} \; 000 \gamma_1^{(1)} \; 000 \gamma_2^{(1)} \; \dots \; 000 \gamma_{s(n) - 1}^{(1)} \; \# \; \dots \\
          010 \gamma_0^{(0)} \; 100 \gamma_1^{(0)} \; 000 \gamma_2^{(0)} \; \dots \; 000 \gamma_{s(n) - 1}^{(0)} \; \# \; 010 \gamma_0^{(1)} \; 100 \gamma_1^{(1)} \; 000 \gamma_2^{(1)} \; \dots \; 000 \gamma_{s(n) - 1}^{(1)} \; \# \; \dots \\
        };
        \path[->] (m-1-1.west) edge[bend right] node[right, align=center] {gen.~check-mark} (m-2-1.west)
        (m-2-1.west) edge[bend right] node[right, align=center] {gen.~check-mark} (m-3-1.west);
      \end{tikzpicture}}%
      \caption{The idea of our generalized check-marking approach.}\label{fig:generalizedCheckmarking}
    \end{figure}
    
    \paragraph{Construction of $\mathcal{T}'$.}
    Using the check-mark approach, we can construct the automaton $\mathcal{T}'$, which implements the TM mode over the alphabet $\Sigma' = \{ 0, 1, \#, \$ \} \cup \Gamma$. Our automaton will have two dedicated states $r$ and $\id$. The semantics of them is that $r$ is an \enquote{okay} state while $\id$ is a \enquote{fail} state. Both states will only be entered after reading the first $\$$ and, for now, we let them both operate as the identity but later, when we describe the commutator mode of the automaton, we will replace $r$ with a non-identity state.\footnote{In fact, we will need multiple copies of $\mathcal{T}'$ where we replace $r$ by a different state in each copy.}
    
    The idea is that the input of the automaton is of the form $u \$ v$ where $u$ is in $(\{ 0, 1, \# \} \cup \Gamma)^*$ and describes a computation of $M$ (with digit blocks for the generalized check-marking approach). We will define multiple state sequences where each one checks a different aspect of whether the computation is valid and accepting for some given input word $w \in \Lambda^*$ for the Turing machine. If such a check passes, we basically end in the state $r$ after reading the $\$$ and, if it fails, we end in a state sequence only containing $\id$s. This way, all checking state sequences end in $r$ if and only if $M$ accepts $w$. This information will later be extracted in the commutator mode described below.
    \begin{proposition}\labelx{prop:TMmode}
      There is a \GAut $\mathcal{T}' = (Q', \Sigma', \delta')$ with an identity state $\id$, a dedicated state $r \in Q'$ and alphabet $\Sigma' = \Sigma_1 \cup \{ \$ \}$ for $\Sigma_1 = \{ 0, 1, \# \} \cup \Gamma$ such that, on input of $w \in \Lambda^*$, one can compute state sequences $\bm{p}_{i}$ for $0 \leq i < D$ (for some $D \geq 1$) in logarithmic space so that the following holds:
      
      For every $u \in \Sigma_1^*$ and all $0 \leq i < D$, we have $\bm{p}_i \circ u \$ = u \$$ and
      \begin{enumerate}
          \item if $M$ accepts $w$, then there is some $u \in \Sigma_1^*$ such that, for all $0 \leq i < D$, we have $\bm{p}_{i} \cdot u \$ \in \id^* r \id^*$ and
          \item if $M$ does not accept $w$, then, \emph{for all} $u \in \Sigma_1^*$, there is some $0 \leq i < D$ such that we have $\bm{p}_{i} \cdot u \$ \in \id^*$.
      \end{enumerate}
    \end{proposition}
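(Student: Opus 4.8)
The plan is to turn the informal description preceding the proposition into an explicit automaton. Fix a deterministic machine $M$ with space bound $s(n)$ deciding a \PSPACE-complete problem, with local transition function $\tau\colon \Gamma^3\to\Gamma$ from \autoref{fct:turingNormalization} for $\Gamma = P\uplus\Delta$. We intend a word $u\in\Sigma_1^*$ to encode a finite sequence of length-$s(n)$ configurations of $M$, separated by $\#$, where every tape symbol $\gamma\in\Gamma$ is preceded by a nonempty block over $\{0,1\}$ that we read as a binary counter, least significant bit first as in \autoref{ex:addingMachine}; the block length is not part of $\mathcal{T}'$ but supplied by $u$. A symbol is \emph{checked} if its counter is nonzero. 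The \GAut $\mathcal{T}'$ over $\Sigma' = \Sigma_1\cup\{\$\}$ then consists of: the state $\id$; the state $r$ (for the moment acting as the identity), with both $r$ and $\id$ reachable only after the first $\$$; an adding-machine-style state $a$ realizing the \emph{generalized check-marking} of \autoref{fig:generalizedCheckmarking} -- scanning left to right, in every configuration it increments all checked counters and the first unchecked one and then passes through to the next $\#$, where it restarts; unlike the naive check-marking of \autoref{fig:addCheckmarkAutomaton} this operation is a bijection, so $a$ has an inverse; and a family of \emph{checker} states $t_i$ which scan $u$ from left to right, never change a letter, merely record a Boolean verdict in their state, and on reading $\$$ enter $r$ if the verdict is ``ok'' and $\id$ otherwise. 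Completing and determinizing in the obvious way, every state is bijective, so $\mathcal{T}'$ is indeed a \GAut.

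The key observation is that after $m$ applications of $a$ the first unchecked position of every well-formed configuration is position $m$. So for $0\le i<s(n)$ I let $t_i$ do all verification relevant to position $i$: in each configuration it scans up to the first unchecked block -- that is position $i$ -- records the symbols at positions $i-1$ (the preceding block, or $\blank$ if a $\#$ precedes), $i$ and $i+1$ (or $\blank$ if a $\#$ follows), applies $\tau$ to the three, skips to the next configuration and checks that its first unchecked block carries precisely that symbol; additionally, in the first configuration it checks that position $i$ equals the $i$-th symbol of the initial configuration $\blank p_0 w \blank^{s(n)-n-1}\blank$ (the single symbol $w_i$, or $\blank$, is hard-wired into $t_i$, which is why the $\bm{p}_i$ may depend on $w$), it checks the structural invariant that position $i+1$ exists in a configuration iff it exists in the next one, and, for $i=s(n)-1$, that position $i$ is the last of the first configuration; one further constant-size checker (folded in at a dedicated index) verifies that the last configuration contains a state of $F$. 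Now set
\[
  \bm{p}_i \;=\; a^{-i}\, t_i\, a^{i} \quad (0\le i<s(n)), \qquad \bm{p}_i \;=\; r \quad (s(n)\le i<D),
\]
where $D$ is the least power of two exceeding $s(n)$ (plus the reserved index). All $\bm{p}_i$ are computable from $w$ in logarithmic space: $a^{\pm i}$ is written by a counter loop, $s(n)$ and $D$ are computed from $n$, and $t_i$ is one of a constant number of variants selected from $i$, $n$ and the letter $w_i$.

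It remains to check the two properties. Since $t_i$ never alters a letter, $\bm{p}_i$ acts on any $u\$$ like the group element $a^{-i}a^{i}$, which is trivial; hence $\bm{p}_i\circ u\$ = u\$$ for every $u\in\Sigma_1^*$. Running $\bm{p}_i$ on $u\$$, the rightmost block $a^{i}$ (which acts first) advances the counters $i$ times and, on the $\$$, its $i$ states pass to $\id$; then $t_i$ reads the advanced word, leaves it unchanged, and on the $\$$ passes to $r$ or to $\id$ according to its verdict; then $a^{-i}$ restores the word and its $i$ states also end in $\id$. Thus $\bm{p}_i\cdot u\$$ lies in $\id^{\,i}\{r,\id\}\,\id^{\,i}$, hence in $\id^* r\id^*$ exactly when the position-$i$ verification succeeded (and $\bm{p}_i\cdot u\$ = r$ for the padding indices). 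If $M$ accepts $w$, take $u$ to be the encoding of the (possibly exponentially long) accepting run of $M$ on $w$ with all counter blocks long enough that $s(n)$ increments never overflow; every verification then succeeds, which is item~1. If $M$ does not accept $w$, then any $u$ must fail some verification -- it is not a well-formed, blank-padded configuration sequence starting with the correct initial configuration, or a counter block is too short and overflows (which I arrange so as to taint some later verification), or a transition or the acceptance condition is violated -- so some $\bm{p}_i\cdot u\$$ lies in $\id^*$, which is item~2.

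The step that needs the most care is the construction of $\mathcal{T}'$ itself: it must be a single, $n$-independent, invertible automaton that nonetheless performs a position-by-position verification of an $n$-dependent computation. The generalized check-marking with input-supplied counter blocks, incremented by an adding-machine-style bijection, is precisely what makes invertibility compatible with the check-marking idea; the conjugated shape $a^{-i}t_i a^{i}$ is the complementary trick that keeps the action on $u\$$ trivial while still producing an informative state. Everything else -- writing out the finite control of $a$ and of the checkers, the behaviour at configuration boundaries, at $i\in\{0,s(n)-1\}$, and on overflowing blocks, and verifying that the conjunction of all checks is sound and complete for ``$u$ encodes an accepting run of $M$ on $w$'' -- is routine bookkeeping.
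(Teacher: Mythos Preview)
Your approach mirrors the paper's closely: generalized check-marking via an invertible adding-machine state, and conjugated checker sequences $a^{-i}\,t_i\,a^{i}$ that leave the input unchanged while recording a pass/fail verdict in the state. The paper organises the checks somewhat differently~-- it has a dedicated format check $s$, a dedicated acceptance check $f$, a separate ``all positions checked'' test $\bm{c}'$, and keeps the transition checkers $\bm{q}_i = \checksub{\id}^{-i} q_{\gamma'_i}\,\checksub{\id}^{\,i}$ disjoint from the length/overflow tests~-- but that is largely cosmetic.

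The one place where your sketch has a real gap is the handling of counter overflow. You write that an overflowing block is something you ``arrange so as to taint some later verification'', but this is precisely the crux and it is not automatic. Once some application of $a$ overflows a block, it (in any natural design) drops to $\id$; subsequent applications of $a$ and your checker $t_i$ then read a word in which the pattern of checked/unchecked blocks no longer encodes ``exactly the first $i$ positions are checked''. Since $t_i$ is a fixed finite automaton it cannot count, so it has no way of detecting that the first unchecked block it finds is not actually position $i$; if the symbols there happen to satisfy the local constraints, $t_i$ spuriously passes. Nothing in your list of checks rules this out. The paper closes this gap by having \emph{two} variants of the check-marking state, $\checksub{\id}$ and $\checksub{r}$, which behave identically on $\Sigma_1$ but differ on $\$$: $\checksub{r}$ goes to $r$ only if its own increment succeeded without overflow and did find an unchecked block. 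For each $i$ it then includes a separate sequence $\bm{c}_i = \checksub{\id}^{-(i+1)}\checksub{r}\,\checksub{\id}^{\,i}$ whose sole purpose is to certify that the $(i{+}1)$-st increment succeeds; the earliest overflow is thus caught by the corresponding $\bm{c}_i$, and only once these pass can the transition checkers be trusted. You would need either this mechanism or a careful argument that your $t_i$ collectively detect every overflow, and the latter is not supplied.

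A minor secondary point: padding with $\bm{p}_i = r$ satisfies the proposition as stated (since $r$ acts trivially in $\mathcal{T}'$), but in the later use of the proposition $r$ is replaced by a nontrivial state, and then such a $\bm{p}_i$ would act nontrivially on $u\$$ itself, breaking the stacked cross diagrams. The paper pads by repeating one of the earlier $\bm{p}_j$ instead.
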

    \begin{proof}
      The automaton $\mathcal{T}'$ is the union of several simpler automata. We will use $0$ and $1$ for the generalized check-mark approach, $\#$ is used to separate individual configurations and $\$$ acts as an \enquote{end-of-computation} symbol switching the automaton to one of the special states $r$ or $\id$. We call the first part of an input word form $\Sigma_1^*$ the \emph{TM part} (because we consider it to encode a computation of $M$) and everything after the first $\$$ the \emph{commutator part} (because we will only use it later).
      
      The first part of the automaton contains the two special states $r$ and $\id$, which we both let act as the identity\footnote{\dots so they cannot be distinguished algebraically at the moment. We will later replace $r$ with another state to distinguish them.} but it helps to think of $\id$ as a \enquote{fail} state and $r$ as an \enquote{okay} state (like the final states in the proof of \cref{thm:uniformPSPACE}):
      \begin{center}
        \begin{tikzpicture}[auto, shorten >=1pt, >=latex, node distance=2cm, baseline=(id.base)]
          \node[state] (r) {$r$};
          \node[state, right=of r] (id) {$\id$};
          
          \draw[->] (r) edge[loop left] node {$\id_{\Sigma'}$} (r)
                    (id) edge[loop right] node {$\id_{\Sigma'}$} (r)
          ;
        \end{tikzpicture}
      \end{center}
      Here, we have introduced a convention: we use arrows labeled by $\id_A$ for some $A \subseteq \Sigma'$ to indicate that we have an $a / a$-transition for all $a \in A$.
      
      Next, we add a part to our automaton to check that the TM part of the input is of the form\footnote{We do not check that the digit blocks for the check-marking are non-empty here. This is handled implicitly by other states below (namely by $\checksub{r}$ but independently also by $c$).} $(0^* \Gamma)^+ \left( \# (0^* \Gamma)^+ \right)^*$:
      \begin{center}
        \begin{tikzpicture}[auto, shorten >=1pt, >=latex, node distance=2cm, baseline=(sigma.base)]
          \node[state] (r) {$z$};
          \node[state, right=of r] (s) {};
          \node[state, dotted, right=of s] (sigma) {$r$};
          
          \draw[->] (r) edge[loop below] node {$0/0$} (r)
                        edge[bend left] node[swap] {$\id_\Gamma$} (s)
                    (s) edge[loop below] node {$\id_\Gamma$} (s)
                        edge[bend left] node[align=center] {$\# / \#$\\$0 / 0$} (r)
                        edge node {$\$ / \$$} (sigma)
          ;
        \end{tikzpicture},
      \end{center}
      where we have introduced another convention: whenever a transition is missing for some $a \in \Sigma'$, there is an implicit $a / a$-transition to the state $\id$ (as defined above). Additionally, dotted states refer to the corresponding states defined above.
      
      Note that we do not check that the factors in $(0^* \Gamma)^+$ correspond to well-formed configurations for the Turing machine here. This will be done implicitly by checking that the input word belongs to a valid computation of the Turing machine, which we describe below.
      
      We also need a part which checks whether the TM part of the input word contains a final state (if this is not the case, we want to \enquote{reject} the word):
      \begin{center}
        \begin{tikzpicture}[auto, shorten >=1pt, >=latex, node distance=2cm, baseline=(sigma.base)]
          \node[state, dotted] (1) {$\id$};
          \node[state, right=of 1] (f) {$f$};
          \node[state, right=of f] (g) {};
          \node[state, dotted, right=of g] (sigma) {$r$};
          
          \draw[->] (f) edge node[swap] {$\$ / \$$} (1)
                        edge[loop above] node {$\id_{\{ \#, 0, 1 \} \cup \Gamma \setminus F}$} (f)
                        edge node {$\id_F$} (g)
                    (g) edge[loop above] node {$\id_{\{ \#, 0, 1 \} \cup \Gamma}$} (g)
                        edge node {$\$ / \$$} (sigma)
          ;
        \end{tikzpicture}
      \end{center}

      Finally, we come to the more complicated parts of $\mathcal{T}'$. The first one is for the generalized check-marking as described above and is depicted in \cref{fig:checkmarkingAutomaton}, which we actually need twice: once for $g = r$ and once for $g = \id$. Notice, however, that, during the TM mode (i.\,e.\ before the first $\$$), both versions of $\checksub{g}$ behave exactly the same way; the only difference is after switching to the commutator mode: while $\checksub{\id}$ always goes to $\id$, $\checksub{r}$ goes to $r$ (if the check-marking was successful and to $\id$, otherwise).
      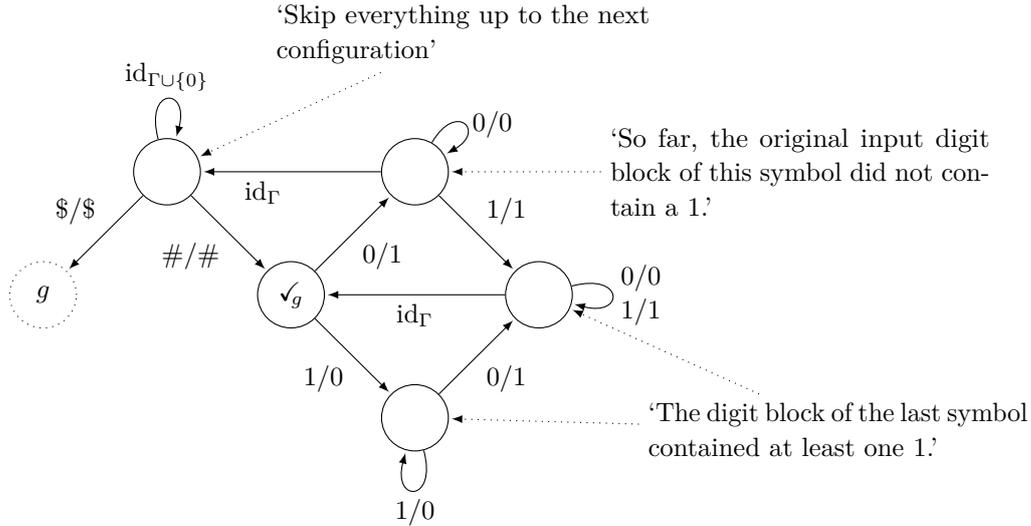
\begin{figure}%
        \centering
        \begin{tikzpicture}[auto, shorten >=1pt, >=latex]
          \node[state] (checkmark) {$\checksub{g}$};
          \node[state, above right=of checkmark] (2) {};
          \node[state, below right=of 2] (3) {};
          \node[state, below right=of checkmark] (4) {};
          \node[state, above left=of checkmark] (skip) {};
          \node[state, dotted, below left=of skip] (end) {$g$};
          
          \node[right=1.5cm of 2, align=left] (2annotation) {\begin{varwidth}{4cm}\enquote{So far, the original input digit block of this symbol did not contain a $1$.}\end{varwidth}};
          \node[below right=1cm and 0.25cm of 3] (34annotation) {\begin{varwidth}{4.5cm}\enquote{The digit block of the last symbol contained at least one $1$.}\end{varwidth}};
          \node[above right=of skip] (skipannotation) {\enquote{Skip everything up to the next configuration}};

          \path[->] (checkmark) edge node[below right] {$0/1$} (2)
                                edge node[below left] {$1/0$} (4)
                    (2) edge[loop right, out=60, looseness=7, in=30] node {$0/0$} (2)
                    (2) edge node {$1/1$} (3)
                    (2) edge node[below left] {$\id_\Gamma$} (skip)
                    (3) edge[loop right] node[align=left] {$0/0$\\$1/1$} (3)
                    (3) edge node[below] {$\id_\Gamma$} (1)
                    (4) edge[loop below] node {$1/0$} (4)
                    (4) edge node[below right] {$0/1$} (3)
                    (skip) edge[loop above] node {$\id_{\Gamma \cup \{ 0 \}}$} (skip)
                    (skip) edge node[swap] {$\#/\#$} (checkmark)
                    (skip) edge node[swap] {$\$/\$$} (end)
                    ;
          \path[->] (2annotation) edge[dotted] (2)
                    (34annotation) edge[dotted] (3)
                    (34annotation) edge[dotted] (4)
                    (skipannotation) edge[dotted] (skip);
        \end{tikzpicture}%
        \vspace*{-\baselineskip}
        \caption{The automaton part used for generalized check-marking}\label{fig:checkmarkingAutomaton}%
      \end{figure}%
      
      Additionally, we also need an automaton part verifying that every configuration symbol has been check-marked (in the generalized sense):
      \begin{center}
        \begin{tikzpicture}[auto, shorten >=1pt, >=latex, node distance=2cm, baseline=(sigma.base)]
          \node[state] (c) {$c$};
          \node[state, right=of c] (ok) {};
          \node[state, right=of ok] (g) {};
          \node[state, dotted, right=of g] (r) {$r$};
          
          \draw[->] (c) edge[loop above] node {$0/0$} (c)
                        edge node {$1/1$} (ok)
                    (ok) edge[loop above, out=150, looseness=7, in=120] node[align=center]  {$0/0$\\$1/1$} (ok)
                         edge[bend left] node {$\id_{\Gamma}$} (g)
                    (g) edge node {$1/1$} (ok)
                        edge[bend left] node[below right, pos=0.2, align=center] {$\#/\#$\\$0/0$} (c)
                        edge node {$\$/\$$} (r)
          ;
        \end{tikzpicture}
      \end{center}
      
      The last part is for checking the validity of the transitions at all first so-far unchecked positions. While it is not really difficult, this part is a bit technical. Intuitively, for checking the transition from time step $t - 1$ to time step $t$ at position $i$, we need to compute $\gamma_i^{(t)} = \tau(\gamma_{i - 1}^{(t - 1)}, \gamma_i^{(t - 1)}, \gamma_{i + 1}^{(t - 1)})$ from the configuration symbol at positions $i - 1$, $i$ and $i + 1$ for time step $t - 1$. We store $\gamma_i^{(t)}$ in the state (to compare it to the actual value). Additionally, we need to store the last two symbols of configuration $t$ we have encountered so far (for computing what we expect in the next time step later on) and whether we have seen a $1$ or only $0$s in the check-mark digit block.
      For all this, we use the states
      \begin{center}
        \begin{tikzpicture}[auto, shorten >=1pt, >=latex]
          \node[state, ellipse, align=center, inner sep=0pt] (0) {\circledZero \begin{tabular}{l}
             $\gamma_{0}$, \\
             $\gamma_{-1}$
            \end{tabular}};

          \node[state, ellipse, align=center, right=0.5cm of 0, inner sep=0pt] (1) {\circledOne \begin{tabular}{l}
            $\gamma_{0}$, \\
            $\gamma_{-1}$
            \end{tabular}};

          \node[state, ellipse, align=center, right=0.5cm of 1, inner sep=0pt] (b) {\begin{tabular}{ll}
            $\gamma_{-1}$,  & $\gamma_0$
            \end{tabular}};
          
          \node[state, ellipse, align=center, right=0.5cm of b, inner sep=0pt] (skip) {\begin{tabular}{l}
            $\gamma'_{0}$
            \end{tabular}};
        \end{tikzpicture}
      \end{center}
      with $\gamma_{-1}, \gamma_0, \gamma_0' \in \Gamma$. The idea is the following. In the $\circledZero$ and $\circledOne$ states, we store the value we expect for the first unchecked symbol ($\gamma_0$) and the last symbol we have seen in the current configuration ($\gamma_{-1}$). We are in the $\circledZero$ state if we have not seen any $1$ in the digit block yet and in the $\circledOne$ if we did. The two states on the right are used to skip the rest of the current configuration and to compute the symbol we expect for the first unchecked position in the next configuration ($\gamma_0'$).

      We use these states in the transitions schematically depicted in \cref{fig:checkingTransitions}. Here, the dashed transitions exist for all $\gamma_{-1}'$ and $\gamma_1$ in $\Gamma$ but go to different states, respectively, and the dotted states correspond to the respective non-dotted states with different values for $\gamma_{0}$ and $\gamma_{-1}$ (with the exception of $r$, which corresponds to the state defined above). We also define $q_{\gamma'}$ as the state on the bottom right (for $\gamma' \in \Gamma$).
      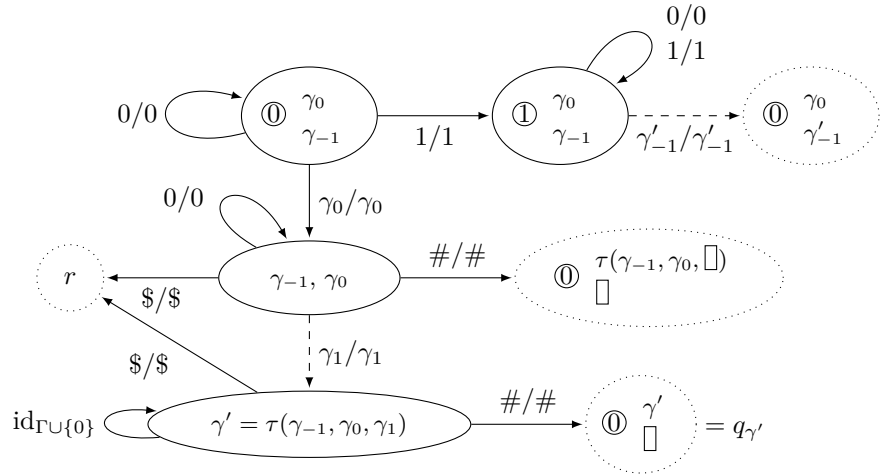
\begin{figure}[h]
        \begin{center}
          \begin{tikzpicture}[auto, shorten >=1pt, >=latex, node distance=1cm and 1.5cm]
            \node[state, ellipse, align=center, inner sep=0pt] (0) {\circledZero \begin{tabular}{l}
              $\gamma_{0}$ \\
              $\gamma_{-1}$
              \end{tabular}};
            
            \node[state, ellipse, align=center, below=of 0] (b) {\begin{tabular}{ll}
              $\gamma_{-1}$, $\gamma_{0}$
              \end{tabular}};
            
            \node[state, ellipse, align=center, below=of b, inner sep=0pt] (skip) {\begin{tabular}{l}
              $\gamma' = \tau(\gamma_{-1}, \gamma_0, \gamma_1)$
              \end{tabular}};
            
            \node[state, ellipse, align=center, right=of 0, inner sep=0pt] (1) {\circledOne \begin{tabular}{l}
              $\gamma_{0}$ \\
              $\gamma_{-1}$
              \end{tabular}};

            \node[state, ellipse, align=center, dotted, right=of 1, inner sep=0pt] (0') {\circledZero  \begin{tabular}{l}
              $\gamma_{0}$ \\
              $\gamma_{-1}'$
              \end{tabular}};
            
            \node[state, ellipse, align=center, dotted, right=of b, inner sep=0pt] (0nb) {\circledZero \begin{tabular}{l}
              $\tau(\gamma_{-1}, \gamma_{0}, \blank)$ \\
              $\blank$
              \end{tabular}};
            
            \node[state, ellipse, align=center, dotted, right=of skip, inner sep=0pt] (0n) {\circledZero \begin{tabular}{l}
              $\gamma'$ \\
              $\blank$
              \end{tabular}};
            \node[anchor=west, inner sep=0pt] at (0n.base -| 0n.east) {${} = q_{\gamma'}$};

            \node[state, dotted, anchor=center, left=of b] (sigma) {$r$};
  
            \path[->] (0) edge[loop left] node {$0/0$} (0)
                          edge node[swap] {$1/1$} (1)
                          edge node {$\gamma_0/\gamma_0$} (b)
                      (1) edge[loop, out=60, looseness=7, in=30] node[right, xshift=0.5em, align=center] {$0/0$\\$1/1$} (1)
                          edge[dashed] node[swap] {$\gamma_{-1}'/\gamma_{-1}'$} (0')
                      (b) edge[loop, out=150, in=120, looseness=7] node[left, xshift=-0.5em] {$0/0$} (b)
                          edge node {$\#/\#$} (0nb)
                          edge node {$\$/\$$} (sigma)
                          edge[dashed] node {$\gamma_1/\gamma_1$} (skip)
                      (skip) edge[loop left, out=185, looseness=7, in=175] node {$\id_{\Gamma \cup \{ 0 \}}$} (skip)
                             edge node {$\$/\$$} (sigma)
                             edge node {$\#/\#$} (0n)
            ;
          \end{tikzpicture}
        \end{center}
        \caption{Schematic representation of the transitions used for checking Turing machine transitions and definition of $q_{\gamma'}$; the dashed transitions exist for all $\gamma_{-1}'$ and $\gamma_1$ in $\Gamma$ but go to different states, respectively}\label{fig:checkingTransitions}
      \end{figure}
      
      The automaton parts depicted in \cref{fig:checkmarkingAutomaton} and \cref{fig:checkingTransitions} are best understood with an example. Consider the input word
      \[
        100 \gamma_0^{(0)} \; 000 \gamma_1^{(0)} \; 000 \gamma_2^{(0)} \; \# \; 100 \gamma_0^{(1)} \; 000 \gamma_1^{(1)} \; 000 \gamma_2^{(1)} \; \$
      \]
      where we consider the $\gamma_i^{(t)}$ to form a valid computation. If we start in state $q_{\gamma_1^{(0)}}$ and read the above word, we immediately take the $1/1$-transition and go into the corresponding $\circledOne$ state where we skip the rest of the digit block. Using the dashed transition, the next symbol $\gamma_0^{(0)}$ takes us back into a $\circledZero$-state where the upper entry is still $\gamma_1^{(0)}$ but the lower entry is now $\gamma_0^{(0)}$ (i.\,e.\ the last configuration symbol we just read). We loop at this state while reading the next three $0$s and, since the next symbol $\gamma_1^{(0)}$ matches with the one stored in the state, we get into the state with entries $\gamma_0^{(0)}, \gamma_1^{(0)}$ where we skip the next three $0$s again. Reading $\gamma_2^{(0)}$ now gets us into the state with entry $\gamma_1^{(1)}$ since we have $\tau(\gamma_0^{(0)}, \gamma_1^{(0)}, \gamma_2^{(0)}) = \gamma_1^{(1)}$ by assumption that the $\gamma_i^{(t)}$ form a valid computation. Here, we read $\#/\#$ and the process repeats for the second configuration, this time starting in $q_{\smash{\gamma_1^{(1)}}}$. When reading the final $\$$, we are in the state with entry $\tau(\gamma_0^{(1)}, \gamma_1^{(1)}, \gamma_2^{(1)})$ and finally go to $r$. Notice that during the whole process, we have not changed the input word at all!
      
      If we now start reading the input word again in state $\checksub{r}$ (see \cref{fig:checkmarkingAutomaton} and also refer to \cref{fig:generalizedCheckmarking}), we turn the first $1$ into a $0$, go to the state at the bottom, turn the next $0$ into a $1$ and go to the state on the right, where we ignore the next $0$. When reading $\gamma_0^{(0)}$, we go back to $\checksub{r}$. Next, we take the upper exit and turn the next $0$ into a $1$. The remaining $0$s are ignored and we remain in the state at the top right until we read $\gamma_1^{(0)}$ and go to the state at the top left. Here, we ignore everything up to $\#$, which gets us back into $\checksub{r}$. The second part works in the same way with the difference that we go to $r$ at the end since we encounter the $\$$ instead of $\#$. The output word, thus, is
      \[
        010 \gamma_0^{(0)} \; 100 \gamma_1^{(0)} \; 000 \gamma_2^{(0)} \; \# \; 010 \gamma_0^{(1)} \; 100 \gamma_1^{(1)} \; 000 \gamma_2^{(1)} \; \$
      \]
      and we have check-marked the next position in both configurations.
      
      This concludes the definition of the automaton and the reader may verify that $\mathcal{T}'$ is indeed a \GAut since all individual parts are \GAuta. Furthermore, apart from the check-marking,
      no state has a non-identity transitions. Also note that $\$$ is not modified by any state.
      
      \paragraph{Definition of the State Sequences.}
      It remains to define the state sequences $\bm{p}_{i}$ that satisfy the conditions from the proposition. First, all state sequences $\bm{p}_i$ need to act trivially on words from $\Sigma_1^*$ and, after reading the first $\$$, we will either be in a state sequence from $\id^* r \id^*$ (this is the \enquote{successful} case) or from $\id^*$ (this is the \enquote{fail} case). Second, we need to satisfy the two implications. The idea is that we have some $u \in \Sigma_1^*$ which describes a computation of the Turing machine $M$ on input $w$. Each $\bm{p}_{i}$ verifies a certain aspect of the computation. If the Turing machine accepts (first implication), there is some $u$ encoding the valid and accepting computation and all verifications will pass (i.\,e.\ we end up in a state sequence from $\id^* r \id^*$). If the Turing machine does not accept the input $w$ (second implication), then no $u \in \Sigma_1^*$ can describe a valid and accepting computation and at least one verification will fail for all such $u$ (i.\,e.\ we end up in a state sequence from $\id^*$).
      
      We simply use the state $\bm{p}_{0} = z$ to verify that $u$ is from
      $(0^* \Gamma)^+ \left( \# (0^* \Gamma)^+ \right)^*$
      . Thus, we only need to consider the case that $u$ is of the form 
      \begin{equation}\label{eqn:formOfU}
        0^{\ell_0^{(0)}} \gamma_0^{(0)} \; 0^{\ell_1^{(0)}} \gamma_1^{(0)} \; \dots \; 0^{\ell_{L_0 - 1}^{(0)}} \gamma_{L_0 - 1}^{(0)} \; \# \dots \# \; 0^{\ell_0^{(T)}} \gamma_0^{(T)} \; 0^{\ell_1^{(T)}} \gamma_1^{(T)} \; \dots \; 0^{\ell_{L_T - 1}^{(T)}} \gamma_{L_T - 1}^{(T)}\tag{$\dagger$}
      \end{equation}
      with $\gamma_i^{(t)} \in \Gamma$ any further. Also observe that $\bm{p}_{0}$ acts trivially on all words from $\Sigma_1^* \$$ by construction.
      
      To test that $u$ encodes a valid and accepting computation, we need to verify that, for every $0 \leq i < s(n)$, we can check-mark the first $i$ positions. For this, we let
      \[
        \bm{p}_{1 + i} = \bm{c}_{i} = \checksub{\id}^{-(i + 1)} \checksub{r} \checksub{\id}^{i}
      \]
      as we have the cross diagram
      \begin{center}
        \resizebox{\linewidth}{!}{%
        \begin{tikzpicture}
          \matrix[matrix of math nodes,
            column sep=0pt, nodes={inner xsep=0pt, outer xsep=0pt},
            column 1/.style={nodes={outer xsep=0.5ex}},
            column 15/.style={nodes={outer xsep=0.5ex}},
            text height=1.75ex, text depth=0.25ex,
            ampersand replacement=\&] (m) {
              \& \revbin(0) \& \gamma_0^{(t)} \; \& \dots \; \& \revbin(0) \& \gamma_{i - 1}^{(t)} \; \& \revbin(0) \& \gamma_{i}^{(t)} \; \& \revbin(0) \& \gamma_{i + 1}^{(t)} \; \& \dots \; \& \revbin(0) \& \gamma_{L_t - 1}^{(t)} \; \& \# \textcolor{gray}{/\$} \& \\
            \checksub{\id}^{i} \&\&\&\&\&\&\&\&\&\&\&\&\&\& \checksub{\id}^{i} \textcolor{gray}{/\id^{i}} \\
              \& \revbin(i) \& \gamma_0^{(t)} \; \& \dots \; \& \revbin(1) \& \gamma_{i - 1}^{(t)} \; \& \revbin(0) \& \gamma_{i}^{(t)} \; \& \revbin(0) \& \gamma_{i + 1}^{(t)} \; \& \dots \; \& \revbin(0) \& \gamma_{L_t - 1}^{(t)} \; \& \# \textcolor{gray}{/\$} \& \\
            \checksub{r} \&\&\&\&\&\&\&\&\&\&\&\&\&\& \checksub{r} \textcolor{gray}{/r} \\
              \& \revbin(i + 1) \& \gamma_0^{(t)} \; \& \dots \; \& \revbin(2) \& \gamma_{i - 1}^{(t)} \; \& \revbin(1) \& \gamma_{i}^{(t)} \; \& \revbin(0) \& \gamma_{i + 1}^{(t)} \; \& \dots \; \& \revbin(0) \& \gamma_{L_t - 1}^{(t)} \; \& \# \textcolor{gray}{/\$} \& \\
            \checksub{\id}^{- 1} \&\&\&\&\&\&\&\&\&\&\&\&\&\& \checksub{\id}^{- 1} \textcolor{gray}{/\id^{- 1}} \\
              \& \revbin(i) \& \gamma_0^{(t)} \; \& \dots \; \& \revbin(1) \& \gamma_{i - 1}^{(t)} \; \& \revbin(0) \& \gamma_{i}^{(t)} \; \& \revbin(0) \& \gamma_{i + 1}^{(t)} \; \& \dots \; \& \revbin(0) \& \gamma_{L_t - 1}^{(t)} \; \& \# \textcolor{gray}{/\$} \& \\
            \checksub{\id}^{-i} \&\&\&\&\&\&\&\&\&\&\&\&\&\& \checksub{\id}^{-i} \textcolor{gray}{/\id^{-i}} \\
              \& \revbin(0) \& \gamma_0^{(t)} \; \& \dots \; \& \revbin(0) \& \gamma_{i - 1}^{(t)} \; \& \revbin(0) \& \gamma_{i}^{(t)} \; \& \revbin(0) \& \gamma_{i + 1}^{(t)} \; \& \dots \; \& \revbin(0) \& \gamma_{L_t - 1}^{(t)} \; \& \# \textcolor{gray}{/\$} \& \\
          };
          
          \foreach \i in {2,4,6,8} {
            \draw[->] (m-\i-1) -> (m-\i-15);
            
            \draw[->] let
              \n1 = {int(\i-1)},
              \n2 = {int(1+\i)}
            in
              (m-\n1-8) -> (m-\n2-8);
          };
        \end{tikzpicture}}
      \end{center}
      where $\revbin(z)$ denotes the reverse/least significant bit first binary representation of $z$ (of sufficient length). In particular, $\bm{c}_{i}$ acts trivially on all words $u \in \Sigma_1^*$ since $\checksub{r}$ acts in the same way as $\checksub{\id}$ on such words (i.\,e.\ any change made is reverted later). Here, it is useful to observe that, if the $0$ block for $\gamma_j^{(t)}$ with $j \leq i$ is not long enough to count to its required value (including the case that it is empty), then we will always end up in $\id$ after reading a $\$$. The same happens if $L_t < i + 1$ (i.\,e.\ if one of the configurations is \enquote{too short}). So this guarantees, $L_t \geq s(n)$ for all $t$.
      
      On the other hand, we use
      \[
        \bm{p}_{1 + s(n)} = \bm{c}' = \checksub{\id}^{-s(n)} c\, \checksub{\id}^{s(n)}
      \]
      to ensure that, after check-marking the first $s(n)$ positions in every configurations, all symbols have been check-marked (i.\,e.\ that no configuration is \enquote{too long}), which guarantees $L_t = s(n)$ for all $t$. Again, $\bm{c}'$ does not change words from $\Sigma_1^*$.
      
      Now that we have ensured that the word is of the correct form and we can count high enough for our check-marking, we need to actually verify that the $\gamma_i^{(t)}$ constitute a valid computation of the Turing machine with the initial configuration $\gamma_0' \dots \gamma_{s(n) - 1}' = p_0 w \blank^{s(n) - n - 1}$ for the input word $w$. To do this, we define
      \[
        \bm{p}_{2 + s(n) + i} = \bm{q}_{i} = \checksub{\id}^{-i} q_{\gamma_i'} \checksub{\id}^{i}
      \]
      for every $0 \leq i < s(n)$ as we have the cross diagram
      \begin{center}
        \resizebox{\linewidth}{!}{%
        \begin{tikzpicture}
          \matrix[matrix of math nodes, 
            column sep=0pt, nodes={inner xsep=0pt, outer xsep=0pt}, 
            column 1/.style={nodes={outer xsep=0.5ex}},
            column 15/.style={nodes={outer xsep=0.5ex}},
            text height=1.75ex, text depth=0.25ex,
            ampersand replacement=\&] (m) {
              \& \revbin(0) \& \gamma_0^{(t)} \; \& \dots \; \& \revbin(0) \& \gamma_{i - 1}^{(t)} \; \& \revbin(0) \& \gamma_{i}^{(t)} \; \& \revbin(0) \& \gamma_{i + 1}^{(t)} \; \& \dots \; \& \revbin(0) \& \gamma_{L_t - 1}^{(t)} \; \& \# \textcolor{gray}{/\$} \& \\
            \checksub{\id}^{i} \&\&\&\&\&\&\&\&\&\&\&\&\&\& \checksub{\id}^{i} \textcolor{gray}{/\id^{i}} \\
              \& \revbin(i) \& \gamma_0^{(t)} \; \& \dots \; \& \revbin(1) \& \gamma_{i - 1}^{(t)} \; \& \revbin(0) \& \gamma_{i}^{(t)} \; \& \revbin(0) \& \gamma_{i + 1}^{(t)} \; \& \dots \; \& \revbin(0) \& \gamma_{L_t - 1}^{(t)} \; \& \# \textcolor{gray}{/\$} \& \\
            q_{\gamma_i'} \&\&\&\&\&\&\&\&\&\&\&\&\&\& q_{\tau(\gamma_{i - 1}^{(t)}, \gamma_{i}^{(t)}, \gamma_{i + 1}^{(t)})} \textcolor{gray}{/r} \\
              \& \revbin(i) \& \gamma_0^{(t)} \; \& \dots \; \& \revbin(1) \& \gamma_{i - 1}^{(t)} \; \& \revbin(0) \& \gamma_{i}^{(t)} \; \& \revbin(0) \& \gamma_{i + 1}^{(t)} \; \& \dots \; \& \revbin(0) \& \gamma_{L_t - 1}^{(t)} \; \& \# \textcolor{gray}{/\$} \& \\
            \checksub{\id}^{-i} \&\&\&\&\&\&\&\&\&\&\&\&\&\& \checksub{\id}^{-i} \textcolor{gray}{/\id^{-i}} \\
              \& \revbin(0) \& \gamma_0^{(t)} \; \& \dots \; \& \revbin(0) \& \gamma_{i - 1}^{(t)} \; \& \revbin(0) \& \gamma_{i}^{(t)} \; \& \revbin(0) \& \gamma_{i + 1}^{(t)} \; \& \dots \; \& \revbin(0) \& \gamma_{L_t - 1}^{(t)} \; \& \# \textcolor{gray}{/\$} \& \\
          };
          
          \foreach \i in {2,4,6} {
            \draw[->] (m-\i-1) -> (m-\i-15);
            
            \draw[->] let
              \n1 = {int(\i-1)},
              \n2 = {int(1+\i)}
            in
              (m-\n1-8) -> (m-\n2-8);
          };
        \end{tikzpicture}}%
      \end{center}
      if $\gamma_i^{(t)}$ is the expected $\gamma_i'$. Otherwise (if $\gamma_i^{(t)} \neq \gamma_i'$), we always end in the state $\id$ after reading the first $\$$. Finally, to ensure that the computation is not only valid but also accepting, we use the state $\bm{p}_{2 + 2s(n)} = f$. Both, $\bm{q}_{i}$ and $f$ do not change words from $\Sigma_1^*$.
      
      Finally, we observe that we can compute all $\bm{p}_{i}$ with $0 \leq i < D = 3 + 2s(n)$ in logarithmic space on input $w \in \Lambda^*$.
      
      \paragraph{The Two Implications.}
      For the first implication, we assume that the Turing machine $M$ accepts on the initial configuration $\blank p_0 w \blank^{s(n) - n - 1} \blank$. Let
      \[
        \gamma_0^{(0)} \dots \gamma_{s(n) - 1}^{(0)} \vdash \gamma_0^{(1)} \dots \gamma_{s(n) - 1}^{(1)} \vdash \dots \vdash \gamma_0^{(T)} \dots \gamma_{s(n) - 1}^{(T)}
      \]
      be the corresponding computation with $\gamma_0^{(0)} = p_0$, $\gamma_1^{(0)} \dots \gamma_{n}^{(0)} = w$ and $\gamma_i^{(T)} \in F$ for some $0 \leq i < s(n)$. We choose $\ell = \lceil \log(s(n)) \rceil + 1$ and define
      \[
        u = 0^\ell \gamma_0^{(0)} \dots 0^\ell \gamma_{s(n) - 1}^{(0)} \# 0^\ell \gamma_0^{(1)} \dots 0^\ell \gamma_{s(n) - 1}^{(1)} \# \dots \# 0^\ell \gamma_0^{(T)} \dots 0^\ell \gamma_{s(n) - 1}^{(T)} \in \Sigma_1^* \text{.}
      \]
      The reader may verify that we have the cross diagram depicted in \cref{fig:acceptingCrossDiagram} for this choice of $u$ (we only have to combine the cross diagrams given above for the individual $\bm{p}_{i}$). This shows the first implication.
      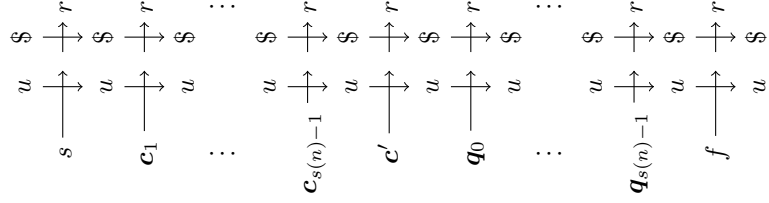
\begin{figure}[b]
        \centering%
        \begin{sideways}%
          \begin{tikzpicture}
            \matrix[matrix of math nodes, ampersand replacement=\&,
                    text height=1.75ex, text depth=0.25ex] (m) {
                            \& u \&    \& \$ \&        \\
                          z \&   \& {} \&    \& r \\
                            \& u \&    \& \$ \&        \\
                 \bm{c}_{0} \&   \& {} \&    \& r \\
                            \& u \&    \& \$ \&        \\
                     \vdots \&   \&    \&    \& \vdots \\
                            \& u \&    \& \$ \&        \\
          \bm{c}_{s(n) - 1} \&   \& {} \&    \& r \\
                            \& u \&    \& \$ \&        \\
                    \bm{c}' \&   \& {} \&    \& r \\
                            \& u \&    \& \$ \&        \\
                 \bm{q}_{0} \&   \& {} \&    \& r \\
                            \& u \&    \& \$ \&        \\
                     \vdots \&   \&    \&    \& \vdots \\
                            \& u \&    \& \$ \&        \\
           \bm{q}_{s(n)- 1} \&   \& {} \&    \& r \\
                            \& u \&    \& \$ \&        \\
                          f \&   \& {} \&    \& r \\
                            \& u \&    \& \$ \&        \\
            };
            
            \foreach \j in {1,3,7,9,11,15,17} {
              \foreach \i in {1,3} {
                \draw[->] let
                  \n1 = {int(2+\i)},
                  \n2 = {int(1+\j)}
                in
                  (m-\n2-\i) -> (m-\n2-\n1);
                \draw[->] let
                  \n1 = {int(1+\i)},
                  \n2 = {int(2+\j)}
                in
                  (m-\j-\n1) -> (m-\n2-\n1);
              };
            };
          \end{tikzpicture}%
        \end{sideways}%
        \caption{Cross diagram for the $\bm{p}_{i, r}$}\label{fig:acceptingCrossDiagram}
      \end{figure}
      
      For the second implication, assume that no valid computation of $M$ on the initial configuration $\blank p_0 w \blank^{s(n) - n - 1} \blank$ contains an accepting state from $F$ and consider an arbitrary word $u \in \Sigma_1^*$. If $u$ is not of the form
      $(0^* \Gamma)^+ \left( \# (0^* \Gamma)^+ \right)^*$
      , we have the cross diagram
      \begin{center}
        \begin{tikzpicture}[baseline=(m-2-1.base)]
          \matrix[matrix of math nodes, ampersand replacement=\&,
                  text height=1.75ex, text depth=0.25ex] (m) {
                \& u \&    \& \$ \&     \\
            \bm{p}_{0} = z \&   \& {} \&    \& \id \\
                \& u \&    \& \$ \&     \\
          };
          \foreach \j in {1} {
            \foreach \i in {1,3} {
              \draw[->] let
                \n1 = {int(2+\i)},
                \n2 = {int(1+\j)}
              in
                (m-\n2-\i) -> (m-\n2-\n1);
              \draw[->] let
                \n1 = {int(1+\i)},
                \n2 = {int(2+\j)}
              in
                (m-\j-\n1) -> (m-\n2-\n1);
            };
          };
        \end{tikzpicture}
      \end{center}
      and, thus, satisfied the implication (with $i = 0$).
      
      Therefore, we assume $u$ to be of the form mentioned in \cref{eqn:formOfU} and use a similar argumentation for the remaining cases. If $u$ does not contain a state from $F$, then we end up in the state $\id$ after reading $\$$ for $f$. As $w$ is not accepted by the machine, this includes in particular all valid computations on the initial configuration $\blank p_0 w \blank^{s(n) - n - 1} \blank$. If one of the $0$ blocks in $u$ is too short to count to a value required for the check-marking (i.\,e.\ one $\ell_i^{(t)}$ is too small), then the corresponding $\bm{c}_{i}$ will go to a state sequence from $\id^*$. This is also true if one configuration is too short (i.\,e.\ $L_t < s(n)$ for some $t$). If one configuration is too long (i.\,e.\ $L_t > s(n)$), then this will be detected by $\bm{c}'$ as not all positions will be check-marked after check-marking all first $s(n)$ positions in every configuration. Finally, $\bm{q}_{i}$ yields a state sequence form $\id^*$ if $\gamma_i^{(0)}$ is not the correct symbol from the initial configuration or if we have $\gamma_i^{(t + 1)} \neq \tau(\gamma_{i - 1}^{(t)}, \gamma_i^{(t)}, \gamma_{i + 1}^{(t)})$ for some $t$ (where we let $\gamma_{-1}^{(t)} = \blank = \gamma_{s(n)}^{(t)}$).\qed
    \end{proof}
    \begin{remark}\labelx{rmk:sizeOfTPrime}
      The constructed automaton $\mathcal{T}'$ has $3 |\Gamma|^2 + |\Gamma| + 19$ states (including the special states $r$ and $\id$ and the additional $5$ states belonging to $\checksub{\id}$ in \cref{fig:checkmarkingAutomaton}) where $|\Gamma|$ is the sum of the number of states and the number of tape symbols for a Turing machine for a $\PSPACE$-complete problem.
    \end{remark}
    
    \paragraph{Encoding over Two Letters.}
    Eventually, the automaton $\mathcal{T}$ should operate over a binary alphabet $\Sigma$. We will achieve this by using an automaton group with a binary alphabet where we still can implement a $D$-ary logical conjunction using nested commutators. However, for now, we will keep things a bit more general and also consider larger alphabets. This will allow us to generally describe our approach for various groups.\footnote{including, in particular, $A_5$ from \cref{ex:conjunctionInA5}, which requires an alphabet of size of five.}
    
    Assume that $\Sigma$ contains at least two distinct symbols $\boxedZero$ and $\boxedOne$. We will use these two symbols to encode the computations of the Turing machine $M$. We let $\tilde{\Sigma} = \Sigma \setminus \boxedAlph$ and, without loss of generality, assume that $|\Gamma| = 2^L$ is a power of two and $\Gamma = \{ \gamma_0, \dots, \gamma_{|\Gamma| - 1} \}$. We interpret the symbols $\{ 0, 1, \#, \$ \} \cup \Gamma$ in the alphabet of $\mathcal{T}'$ (from \cref{prop:TMmode}) as the words
    \begin{align*}
      0 &= \boxedOne \boxedZero \boxedZero, & \# &= \boxedOne \boxedOne \boxedZero, \\
      1 &= \boxedOne \boxedZero \boxedOne, & \$ &= \boxedOne \boxedOne \boxedOne \quad\text{ and} & \gamma_i &= \boxedZero \inbox{\operatorname{bin}_L i}
    \end{align*}
    over $\boxedAlph$ where $\inbox{\operatorname{bin}_L i}$ is the binary representation of $i$ with length $L$, $\boxedZero$ as the zero digit and $\boxedOne$ as the one digit. Furthermore, we let $Y = \{ 0, 1, \# \} \cup \Gamma \subseteq \boxedAlph^*$ and $X = Y \cup \{ \$ \} \subseteq \boxedAlph^*$.
    \begin{remark}
      Note that every word in $X^* \subseteq \boxedAlph^*$ can be uniquely decomposed into a product of words from $X$, which means that $X$ is a \emph{code} (in the definition of \cite{berstel2010codes}). Since no word in $X$ is a prefix of another one, we obtain that $X$ is even a \emph{prefix code}.
    \end{remark}
    
    Not every word in $\Sigma^*$ is in $X^*$; not even every word in $\boxedAlph^*$. However, we have that every word in $\boxedAlph^*$ is a prefix of a word in $X^*$. This can be proven by using the following fact.
    \begin{fact}\labelx{fct:encodingIsSurjective}
      For our special choice of $0, 1, \#, \$$ and $\Gamma$, we have 
      \[
        (\PPre X) \boxedAlph \subseteq (\PPre X) \cup X \text{.}
      \]
    \end{fact}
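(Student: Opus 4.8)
The plan is to pin down the set $\PPre X$ of proper prefixes of $X$ completely and then check the inclusion one appended letter at a time. Under the chosen encoding, the four letters $0, 1, \#, \$$ are precisely the length-three words over $\boxedAlph$ beginning with $\boxedOne$, namely $\boxedOne\boxedZero\boxedZero, \boxedOne\boxedZero\boxedOne, \boxedOne\boxedOne\boxedZero, \boxedOne\boxedOne\boxedOne$, while the codeword of $\gamma_i$ is $\boxedZero$ followed by the length-$L$ binary representation of $i$. Since $|\Gamma| = 2^L$, the map $i \mapsto \operatorname{bin}_L i$ is onto the set of all binary words of length $L$, so the codewords of $\Gamma$ are exactly \emph{all} length-$(L+1)$ words over $\boxedAlph$ beginning with $\boxedZero$. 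Reading off the proper prefixes of these two families (and using $L \ge 1$, which holds because $\Gamma$ has more than one element), one obtains
\[
  \PPre X = \{ \varepsilon \} \;\cup\; \{ \boxedOne,\; \boxedOne\boxedZero,\; \boxedOne\boxedOne \} \;\cup\; \{ \boxedZero w \mid w \in \boxedAlph^{\leq L - 1} \} \text{,}
\]
where $\boxedAlph^{\leq L - 1}$ denotes the words over $\boxedAlph$ of length at most $L - 1$.

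With this description in hand, I would fix $p \in \PPre X$ and $a \in \boxedAlph$ and go through the cases. If $p = \varepsilon$, then $pa = a$, and both $\boxedZero$ and $\boxedOne$ lie in $\PPre X$. If $p = \boxedOne$, then $pa \in \{ \boxedOne\boxedZero, \boxedOne\boxedOne \} \subseteq \PPre X$. If $p = \boxedOne\boxedZero$, then $pa \in \{ \boxedOne\boxedZero\boxedZero, \boxedOne\boxedZero\boxedOne \} = \{ 0, 1 \} \subseteq X$, and if $p = \boxedOne\boxedOne$, then $pa \in \{ \boxedOne\boxedOne\boxedZero, \boxedOne\boxedOne\boxedOne \} = \{ \#, \$ \} \subseteq X$. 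Finally, if $p = \boxedZero w$ with $|w| \leq L - 1$, then $pa = \boxedZero w a$ with $|wa| \leq L$: if $|wa| \leq L - 1$, this word is again of the form $\boxedZero w'$ with $w' \in \boxedAlph^{\leq L - 1}$, hence in $\PPre X$; and if $|wa| = L$, then $wa$ is a length-$L$ binary word, so $pa$ is the codeword of some $\gamma_i$ and thus lies in $X$. These cases are exhaustive, which proves the claim.

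The whole argument is elementary and amounts to bookkeeping once $\PPre X$ has been described; I do not expect a genuine obstacle. The only step that relies on something slightly nontrivial is the last case, where one needs the encoding of $\Gamma$ to be surjective onto the set of length-$(L+1)$ words starting with $\boxedZero$ — this is exactly the reason for the earlier normalization $|\Gamma| = 2^L$ (which is harmless, since $\Gamma$ can be padded with dummy symbols).
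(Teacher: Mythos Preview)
Your proof is correct. The paper does not actually give a proof of this fact; it is stated and left to the reader, with the explicit description $\PPre X = \{ \varepsilon, \boxedOne, \boxedOne\boxedZero, \boxedOne\boxedOne \} \cup \boxedZero \boxedAlph^{< L}$ appearing a few lines later in the text, matching exactly what you derived. Your case analysis is the natural way to fill in the details, and you correctly identify that the only substantive point is the surjectivity of the $\Gamma$-encoding onto $\boxedZero\boxedAlph^L$, which is precisely why the normalization $|\Gamma| = 2^L$ was made.
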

    
    \pagebreak
    If a word is not a prefix of $Y^* \$ \Sigma^*$, it must contain a symbol from $\tilde{\Sigma}$:
    \begin{lemma}\labelx{lem:malformedWords}
      If $u \in \Sigma^*$ is not a prefix of a word in $Y^* \$ \Sigma^*$, then $u$ is in $Y^* (\PPre Y) \tilde{\Sigma} \Sigma^*$ (where $\tilde{\Sigma} = \Sigma \setminus \boxedAlph$).
    \end{lemma}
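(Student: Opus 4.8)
The plan is to run a greedy left-to-right parse of $u$ into code words and locate the first position where it breaks. I would first record the trivial half of a prefix characterization: every word in $Y^*(\PPre Y)$ \emph{is} a prefix of a word in $Y^*\$\Sigma^*$ --- writing it as $y_1 \cdots y_k\, p$ with $y_i \in Y$ and $p$ a proper prefix of some $y \in Y$, it is a prefix of $y_1 \cdots y_k\, y\, \$ \in Y^*\$\Sigma^*$ --- and any word in $Y^*\$\Sigma^*$ is a prefix of itself. Contrapositively, the hypothesis on $u$ hands us both $u \notin Y^*\$\Sigma^*$ and $u \notin Y^*(\PPre Y)$.

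Since $\varepsilon \in \PPre Y$ (it is a proper prefix of $0 = \boxedOne\boxedZero\boxedZero$), we have $Y^* \subseteq Y^*(\PPre Y)$ and in particular $\varepsilon \in Y^*(\PPre Y)$, so among the finitely many prefixes of $u$ there is a unique longest one, $v$, lying in $Y^*(\PPre Y)$; write $v = y_1 \cdots y_k\, p$ with $y_i \in Y$ and $p \in \PPre Y$. As $u \notin Y^*(\PPre Y)$ we have $v \ne u$, so $u = v\,a\,u'$ for a single letter $a \in \Sigma$ and some $u' \in \Sigma^*$, and by maximality of $v$ the longer prefix $va = y_1 \cdots y_k\, pa$ of $u$ is not in $Y^*(\PPre Y)$.

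The heart of the argument is to show $a \notin \boxedAlph$. Assume otherwise, $a \in \boxedAlph$. Since $\$ = \boxedOne\boxedOne\boxedOne$ has all of its proper prefixes ($\varepsilon$, $\boxedOne$, $\boxedOne\boxedOne$) already among the proper prefixes of $0$ and $\#$, we get $\PPre X = \PPre Y$ for $X = Y \cup \{\$\}$; hence $p \in \PPre X$ and \autoref{fct:encodingIsSurjective} gives $pa \in (\PPre X)\boxedAlph \subseteq (\PPre X) \cup X = (\PPre Y) \cup Y \cup \{\$\}$. If $pa \in \PPre Y$, then $va = y_1 \cdots y_k\,(pa) \in Y^*(\PPre Y)$, contradicting maximality of $v$. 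If $pa \in Y$, then $va = y_1 \cdots y_k\,(pa) \in Y^{k+1} \subseteq Y^* \subseteq Y^*(\PPre Y)$, again a contradiction. If $pa = \$$, then $u = v\,a\,u' = y_1 \cdots y_k\,\$\,u' \in Y^*\$\Sigma^*$, contradicting the hypothesis. Hence $a \in \tilde{\Sigma}$, and therefore $u = v\,a\,u' = y_1 \cdots y_k\, p\, a\, u' \in Y^*(\PPre Y)\,\tilde{\Sigma}\,\Sigma^*$, as claimed.

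I expect the only places needing care to be the bookkeeping that makes the greedy parse well-defined (namely $\varepsilon \in \PPre Y$, so $Y^* \subseteq Y^*(\PPre Y)$) and the small observation $\PPre X = \PPre Y$ that is required to invoke \autoref{fct:encodingIsSurjective} in the form used here; once that fact is applied, the three-way case distinction it produces closes the proof directly, so I do not anticipate a genuine obstacle.
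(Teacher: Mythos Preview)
Your proof is correct and follows essentially the same approach as the paper's. The paper factorizes $u = u_1 u_2 a u_3$ with $u_1$ maximal in $Y^*$ and then $u_2$ maximal in $\PPre Y$, whereas you take the single longest prefix $v$ in $Y^*(\PPre Y)$; in both cases the heart of the argument is the identical three-way case split on $pa$ obtained from \autoref{fct:encodingIsSurjective} together with $\PPre X = \PPre Y$, and your version is arguably slightly more careful in explicitly justifying why the next letter $a$ exists.
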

    \begin{proof}
      We factorize $u = u_1 u_2 a u_3$ with $u_1$ maximal in $Y^*$, $u_2$ maximal in $\PPre Y$ (both possibly empty), $a \in \Sigma$ and $u_3 \in \Sigma^*$. We show $a \not\in \boxedAlph$ by contradiction. So, assume $a \in \boxedAlph$. We have $u_2 a \in (\PPre Y) \boxedAlph \subseteq (\PPre X) \boxedAlph \subseteq (\PPre X) \cup X$ (where the last inclusion follows by \cref{fct:encodingIsSurjective}). Since we have $\PPre \$ = \{ \varepsilon, \boxedOne, \boxedOne\boxedOne \} = \PPre \#$, we obtain $\PPre X = \PPre Y$ and, combining this with the last statement, $u_2 a \in (\PPre Y) \cup X$. We cannot have $u_2 a \in \PPre Y$ since $u_2$ was chosen maximal with this property. Similarly, we cannot have $u_2 a \in Y$ since $u_1$ was chosen maximal. The only remaining case $u_2 a = \$$ is not possible either, however, since, then, $u$ was in $Y^* \$ \Sigma^*$, which violates the hypothesis.\qed
    \end{proof}
    
    We can now describe how we can encode the automaton $\mathcal{T}'$ from \cref{prop:TMmode} over $\Sigma$. For this, it is important that all transitions of $\mathcal{T}'$ are of a special form. Namely, the symbols $\#, \$$ and $\gamma_i \in \Gamma$ are not changed by any transition and $0$ and $1$ are either also not changed or they are toggled (i.\,e.\ we either have the transitions $\trans{p}{0}{0}{p \cdot 0}$ and $\trans{p}{1}{1}{p \cdot 1}$ or the transitions $\trans{p}{0}{1}{p \cdot 0}$ and $\trans{p}{1}{0}{p \cdot 1}$).
    
    The general idea to obtain the encoded automaton $\mathcal{T}_2 = (Q_2, \Sigma, \delta_2)$ from $\mathcal{T}' = (Q', \{ 0, 1, \#,\allowbreak \$ \} \cup \Gamma, \delta')$ is not to read $0, 1, \#, \$$ and the elements of $\Gamma$ as single symbols but to use a tree to read the prefixes of their encodings as words over $\boxedAlph$. When we know which encoding we have read, we move to the corresponding state (and reset the prefix of the current encoding); additionally, we possibly also toggle $0$ and $1$ if the corresponding state in the original automaton did this. Here comes the special encoding we have chosen above into play: we could not simply toggle $\gamma_0$ and $\gamma_{2^{L} - 1}$, for example, because it is not clear how to do this in a prefix-compatible way.
    
    To formalize this general idea, we consider the set $\PPre X = \{ \varepsilon, \boxedOne, \boxedOne \boxedZero,\allowbreak \boxedOne \boxedOne \, \}\allowbreak \cup \boxedZero \boxedAlph^{< L}$ and let $P' = Q' \setminus \{ r, \id \}$ and
    \begin{align*}
      Q_2 ={}& \{ (p', x) \mid p' \in P', x \in \PPre X \} \cup \{ \id, r \} \text{ and} \\
      \delta'' ={}& \{ \trans{\id}{a}{a}{\id} \mid a \in \Sigma \} \cup \{ \trans{r}{a}{a}{r} \mid a \in \Sigma \} \\
      {}\cup{}& \left\{ \trans{(p', x)}{\boxedZero}{\boxedZero}{(p', x\boxedZero)} \mid p' \in P', x\boxedZero \in \PPre X \right\} \\
      {}\cup{}& \left\{ \trans{(p', x)}{\boxedOne}{\boxedOne}{(p', x\boxedOne)} \mid p' \in P', x\boxedOne \in \PPre X \right\} \\
      {}\cup{}& \left\{ \trans{(p', \boxedOne\boxedZero)}{\boxedZero}{\inbox{p' \circ 0}}{(p' \cdot 0, \varepsilon)}, \trans{(p', \boxedOne\boxedZero)}{\boxedOne}{\inbox{p' \circ 1}}{(p' \cdot 1, \varepsilon)} \mid p' \in P' \right\} \\
      {}\cup{}& \left\{ \trans{(p', \boxedOne \boxedOne)}{\boxedZero}{\boxedZero}{(p' \cdot \#, \varepsilon)}, \trans{(p', \boxedOne \boxedOne)}{\boxedOne}{\boxedOne}{(p' \cdot \$, \varepsilon)} \mid p' \in P' \right\} \\
      {}\cup{}& \left\{ \trans{(p', x)}{\boxedZero}{\boxedZero}{(p' \cdot \gamma_i, \varepsilon)} \mid p' \in P', x\boxedZero = \gamma_i \right\} \\
      {}\cup{}& \left\{ \trans{(p', x)}{\boxedOne}{\boxedOne}{(p' \cdot \gamma_i, \varepsilon)} \mid p' \in P', x\boxedOne = \gamma_i \right\}
    \end{align*}
    where we use the convention $(r, \varepsilon) = r$ and $(\id, \varepsilon) = \id$. The first line in the definition of $\delta''$ extends $r$ and $\id$ into identities over $\Sigma$. The second and third line handle the case $x \boxedAlph \subseteq \PPre X$. The fourth and fifth line are for the case $x \boxedAlph \subseteq \{ 0, 1, \#, \$ \} \subseteq X$ and the last two lines are for $x \boxedAlph \subseteq \Gamma \subseteq X$. By \cref{fct:encodingIsSurjective}, this covers all cases and we have an outgoing transition with input $\boxedZero$ and one with input $\boxedOne$ for every $q \in Q_2$ (since $\mathcal{T}'$ must be a complete automaton over $\{ 0, 1, \#, \$ \} \cup \Gamma$). Therefore, to make $\mathcal{T}_2$ complete, we only have to handle the letters in $\tilde{\Sigma} = \Sigma \setminus \boxedAlph$ and, to this end, we let
    \[
      \delta_2 = \delta'' \cup \{ \trans{q}{a}{a}{\id} \mid q \in Q_2, a \in \tilde{\Sigma} \} \text{.}
    \]
    Observe that this turns $\mathcal{T}_2$ into a \GAut over $\Sigma$.
    
    \begin{example}
      Using the just described encoding method, a schematic part
      \begin{center}
        \begin{tikzpicture}[auto, shorten >=1pt, >=latex, baseline=(p.base)]
          
          \node[state] (qg0) {$q_{\gamma_0}$};
          \node[right=0.5cm of qg0] (dots) {$\dots$};
          \node[state, ellipse, right=0.5cm of dots] (qgn) {$q_{\gamma_{2^L - 1}}$};
          
          \node[state, right=of qgn] (q0) {$q_0$};
          \node[state, right=of q0] (q1) {$q_1$};
          \node[state, right=of q1] (qs) {$q_{\#}$};
          \node[state, right=of qs] (qd) {$q_{\$}$};
          
          \coordinate (pc) at ($(qg0.west)!0.5!(qd.east)$);
          \node[state, above=of pc] (p) {$p$};
          
          \path[->] (p) edge node[swap, pos=0.8] {$0/1$} (q0)
                        edge node[swap, pos=0.8] {$1/0$} (q1)
                        edge[bend right] node[swap] {$\gamma_0/\gamma_0$} (qg0)
                        edge[bend right] node[swap, pos=0.9] {$\gamma_{2^L-1}/\gamma_{2^L-1}$} (qgn)
                        edge node[pos=0.8] {$\#/\#$} (qs)
                        edge[bend left] node {$\$/\$$} (qd)
          ;
        \end{tikzpicture}
      \end{center}
      of the automaton $\mathcal{T}'$ yields the part
      \begin{center}
        \resizebox{\linewidth}{!}{
        \begin{tikzpicture}[auto, shorten >=1pt, >=latex]
          \node[state, ellipse] (qg0) {$(q_{\gamma_0}, \varepsilon)$};
          \node[state, ellipse, right=0.25cm of qg0] (qg1) {$(q_{\gamma_1}, \varepsilon)$};
          \node[right=0cm of qg1] (dots) {$\dots$};
          \node[state, ellipse, right=0cm of dots] (qgn) {$(q_{\gamma_{2^L - 1}}, \varepsilon)$};
          \node[state, ellipse, right=0.25cm of qgn] (q0) {$(q_0, \varepsilon)$};
          \node[state, ellipse, right=0.25cm of q0] (q1) {$(q_1, \varepsilon)$};
          \node[state, ellipse, right=0.25cm of q1] (qs) {$(q_{\#}, \varepsilon)$};
          \node[state, ellipse, right=0.25cm of qs] (qd) {$(q_{\$}, \varepsilon)$};
          
          \node[state, ellipse] at ($(qg0.center)!0.5!(qg1.center)+(0pt,1.75cm)$) (p0L) {$(p, \boxedZero\boxedZero^{L - 1})$};
          \node[state, ellipse] at ($(dots.center)!0.5!(qgn.center)+(0pt,1.75cm)$) (p1L) {$(p, \boxedZero\boxedOne^{L - 1})$};
          \path[->] (p0L) edge node[swap, pos=0.75] {$\boxedZero/\boxedZero$} (qg0)
                          edge node[pos=0.75] {$\boxedOne/\boxedOne$} (qg1)
                    (p1L) edge (dots)
                          edge node[pos=0.75] {$\boxedOne/\boxedOne$} (qgn)
          ;
          
          \node[above=0pt of p0L] (lvdots) {$\vdots$};
          \node[above=0pt of p1L] (rvdots) {$\vdots$};
          \node[state, ellipse, anchor=south] at ($(lvdots.north)!0.5!(rvdots.north)+(0pt,0pt)$) (p0) {$(p, \boxedZero)$};
          \path[->] (p0) edge node[swap] {$\boxedZero/\boxedZero$} (lvdots)
                         edge node {$\boxedOne/\boxedOne$} (rvdots)
          ;
          
          \node[state, ellipse] at ($(q0.center)!0.5!(q1.center)+(0pt,1.75cm)$) (p10) {$(p, \boxedOne\boxedZero)$};
          \node[state, ellipse] at ($(qs.center)!0.5!(qd.center)+(0pt,1.75cm)$) (p11) {$(p, \boxedOne\boxedOne)$};
          \path[->] (p10) edge node[swap, pos=0.75] {$\boxedZero/\boxedOne$} (q0)
                          edge node[pos=0.75] {$\boxedOne/\boxedZero$} (q1)
                    (p11) edge node[swap, pos=0.75] {$\boxedZero/\boxedZero$} (qs)
                          edge node[pos=0.75] {$\boxedOne/\boxedOne$} (qd)
          ;
          
          \coordinate (c) at ($(p10.center)!0.5!(p11.center)$);
          \node[state, ellipse] at (c|-p0) (p1) {$(p, \boxedOne)$};
          \path[->] (p1) edge node[swap] {$\boxedZero/\boxedZero$} (p10)
                         edge node {$\boxedOne/\boxedOne$} (p11)
          ;
          
          \node[state, ellipse] at ($(p0.center)!0.5!(p1.center)+(0pt,0.75cm)$) (p) {$(p, \varepsilon)$};
          \path[->] (p) edge node[swap] {$\boxedZero/\boxedZero$} (p0)
                        edge node {$\boxedOne/\boxedOne$} (p1)
          ;
        \end{tikzpicture}
        }
      \end{center}
      of the automaton $\mathcal{T}_2$ (where the possibly missing transitions all go to $\id$).
    \end{example}
    
    \begin{remark}\labelx{rmk:crossDiagarmsInEncoding}
      An important point to note on this encoding is that we have all cross diagrams from $\mathcal{T}'$ also in $\mathcal{T}_2$ (if we identify $p' \in P'$ with $(p', \varepsilon)$). In particular, we still have the statements from \cref{prop:TMmode} about the words $u \in \Sigma_1^*$ also for $\mathcal{T}_2$ and the words from $Y^*$.
    \end{remark}

    Additionally, the encoding ensures that we always end in $\id$ after reading a word from $Y^* (\PPre Y) \tilde{\Sigma}$. Since no $\bm{p}_i$ from \cref{prop:TMmode} changes a word from $u \in \Sigma_1^*$ in $\mathcal{T}'$, this shows the following fact.
    \begin{fact}\labelx{fct:malformedWords}
      \makebox[0pt][l]{We have}\hspace*{\fill}
        \begin{tikzpicture}[baseline=(m-1-2.base)]
          \matrix[matrix of math nodes, ampersand replacement=\&,
                  text height=1.75ex, text depth=0.25ex] (m) {
                       \& u \&    \& a \&     \\
            \bm{p}_{i} \&   \& {} \&   \& \id^{|\bm{p}_i|} \\
                       \& u \&    \& a \&     \\
          };
          \foreach \j in {1} {
            \foreach \i in {1,3} {
              \draw[->] let
                \n1 = {int(2+\i)},
                \n2 = {int(1+\j)}
              in
                (m-\n2-\i) -> (m-\n2-\n1);
              \draw[->] let
                \n1 = {int(1+\i)},
                \n2 = {int(2+\j)}
              in
                (m-\j-\n1) -> (m-\n2-\n1);
            };
          };
        \end{tikzpicture}
      \hspace*{\fill}\\
      in $\mathcal{T}_2$ for all $u \in Y^* \PPre Y$, $a \in \tilde{\Sigma}$ and $0 \leq i < D$.
    \end{fact}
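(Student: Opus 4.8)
The plan is to reduce the statement to a claim about a single state of $\mathcal{T}_2$ and then exploit the prefix-tree structure of the encoding. First I would factor $u \in Y^* \PPre Y$ as $u = u_1 u_2$ with $u_1 \in Y^*$ and $u_2 \in \PPre Y$ (such a factorization exists by definition of the set, and uniqueness is not needed since $Y \subseteq X$ is a prefix code). Every coordinate of $\bm{p}_i$, viewed over $Q_2^{\pm}$, is of the form $(p', \varepsilon)$ or $(p', \varepsilon)^{-1}$ with $p' \in P'$, since all the building blocks $s$, $c$, $f$, $\checksub{r}$, $\checksub{\id}$, $q_{\gamma'}$ lie in $P'$. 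Reading $u_1 \in Y^*$ from such a coordinate: by \autoref{rmk:crossDiagarmsInEncoding} the cross diagrams of $\mathcal{T}'$ carry over to $\mathcal{T}_2$, so (because $\bm{p}_i$ fixes $\Sigma_1^* = Y^*$ by \autoref{prop:TMmode}) the input $u_1$ is reproduced unchanged, and, by a straightforward induction on the number of $Y$-blocks of $u_1$ — using that a root $(p', \varepsilon)$ (resp.\ its inverse) reading the full encoding of a letter $z \in \{0,1,\#\} \cup \Gamma$ lands on $(p' \cdot z, \varepsilon)$ (resp.\ its inverse), or drops to $\id$ via a ``missing'' transition — each coordinate of $\bm{p}_i \cdot u_1$ is again of the form $(p', \varepsilon)$ with $p' \in P'$, or equals $\id$.

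The decisive observation is that no coordinate of $\bm{p}_i \cdot u_1$ equals $r$. In $\mathcal{T}'$ every transition entering $r$ reads the symbol $\$$, so in $\mathcal{T}_2$ the state $r$ is reached only after processing the entire encoding $\boxedOne\boxedOne\boxedOne$ of $\$$; but $u_1$ is a concatenation of encodings of letters from $Y = \{0,1,\#\} \cup \Gamma$, none of which is $\$$, so $r$ is never visited while reading $u_1$. Next I would read $u_2 \in \PPre Y \subseteq \PPre X$ from each coordinate: starting from a root $(p', \varepsilon)$, every proper prefix $v$ of $u_2$ (including $u_2$ itself) lies in $\PPre X \setminus X$, so only the input-preserving tree transitions $\trans{(p', v)}{b}{b}{(p', vb)}$ of $\delta''$ are taken, no encoding block is ever completed, and we end up in the state $(p', u_2) \in Q_2$ without having changed the input and without reaching $r$ or $\id$; a coordinate that already equals $\id$ simply loops at $\id$ through all of $u_2$, again leaving the input untouched.

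Finally, when reading $a \in \tilde{\Sigma} = \Sigma \setminus \boxedAlph$: the transitions $\trans{q}{a}{a}{\id}$ added in the definition of $\delta_2$ (together with $\trans{\id}{a}{a}{\id} \in \delta''$) send every state of the form $(p', u_2)$, and likewise $\id$, to $\id$ while leaving $a$ unchanged. Composing these three phases along the whole sequence $\bm{p}_i = q_k \cdots q_1$ (the rightmost $q_1$ acting first, and each $q_j$ passing the unchanged word $ua$ on to $q_{j+1}$), we obtain $\bm{p}_i \circ u a = u a$ and $\bm{p}_i \cdot u a \in \id^{\abs{\bm{p}_i}}$, which is exactly the asserted cross diagram.

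I expect the only genuine obstacle to be the bookkeeping needed to rule out that reading $u$ drives some coordinate of $\bm{p}_i$ into $r$ — the unique state of $\mathcal{T}_2$ that does not collapse to $\id$ upon reading a letter of $\tilde{\Sigma}$. This is precisely where the facts that $r$ is entered in $\mathcal{T}'$ only on reading $\$$, and that a word in $Y^* \PPre Y$ contains no complete encoding of $\$$, are used. Everything else is a routine unwinding of the definition of $\delta_2$ combined with \autoref{prop:TMmode} and \autoref{rmk:crossDiagarmsInEncoding}.
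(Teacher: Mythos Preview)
Your argument is correct and follows exactly the line the paper sketches in the sentence preceding the fact (the paper's own proof is really just that sentence: the encoding forces every state to collapse to $\id$ upon reading a letter from $\tilde{\Sigma}$, and $\bm{p}_i$ fixes words over $\Sigma_1$). You have simply unpacked this into the three phases $u_1$, $u_2$, $a$ and verified each one.

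One phrasing in your final paragraph is inaccurate, though it does not break the argument: it is \emph{not} true that ``each $q_j$ pass[es] the unchanged word $ua$ on to $q_{j+1}$''. Individual coordinates such as $\checksub{\id}$ genuinely alter the $u_1$ part (they increment digit blocks). What is true is that each coordinate maps a word of the form $v\,u_2\,a$ with $v \in Y^*$ to another word of the same form $v'\,u_2\,a$ with $v' \in Y^*$, while ending in $\id$; this is exactly what your phase analysis shows, and it is enough to propagate the conclusion through all rows of the cross diagram. That the $Y^*$ part is restored to $u_1$ only at the \emph{end} is the content of \autoref{prop:TMmode}, which you correctly invoke for the whole sequence $\bm{p}_i$ rather than for single coordinates. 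With this one sentence corrected, your write-up is a faithful (and considerably more detailed) version of the paper's argument.
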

    
    \begin{remark}\labelx{rmk:sizeOfT2}
      The encoded automaton $\mathcal{T}_2$ has $(|\mathcal{T}'| - 2)(|\Gamma| + 3) + 2$ many states where $|\Gamma|$ is the sum of the number of states and the number of tape symbols for a Turing machine for a $\PSPACE$-complete problem (which we assume to be a power of two) and $|\mathcal{T}'|$ is the size of $\mathcal{T}'$ from \cref{rmk:sizeOfTPrime}. This yields $3 |\Gamma|^3 + 10 |\Gamma|^2 + 20 |\Gamma| + 53$ many states in total, where $2$ states are $\id$ and $r$ and $5 \cdot (|\Gamma| + 3) = 5 |\Gamma| + 15$ additional states belong to (the encoding of) $\checksub{\id}$.
    \end{remark}

    \paragraph{The Commutator Mode.}
    For the commutator mode, we fix an arbitrary \GAut $\mathcal{R} = (R, \Sigma, \rho)$ which admits \DLINSPACE-computable functions $\alpha, \beta: \mathbb{N} \to R^{\pm *}$ (where the input is given in binary) and a \LOGSPACE-computable function $b$
    \function
      {a number $D$ in unary with $D = 2^d$ for some $d$\newline
       (i.\,e.\ the string $1^D$) and\newline
       a number $0 \leq i < D$ in binary}
      {$b(D, i) \in R^{\pm *}$}\noindent
    with
    \[
      B_{\beta, \alpha}[ b(D, D - 1), \dots, b(D, 0) ] \neq \idGrp \text{ in } \mathscr{G}(\mathcal{R})
    \]
    for all (positive) powers $D$ of two. Note that the choice of $\mathcal{R}$ implies $|\Sigma| \geq 2$.
    
    Similar to $A_5$ in \cref{ex:conjunctionInA5}, we use this group to simulate a logical conjunction. For state sequences $\bm{p}_0, \dots, \bm{p}\sub{D - 1} \in R^{\pm *}$ with $\bm{p}_i = \idGrp$ or $\bm{p}_i = b(D, i)$ in $\mathscr{G}(\mathcal{R})$ for all $0 \leq i < D$, we have $B[ \bm{p}\sub{D - 1}, \dots, \bm{p}_0 ] \neq \idGrp$ in $\mathscr{G}(\mathcal{R})$ if and only if we have $\bm{p}_i \neq \idGrp$ in $\mathscr{G}(\mathcal{R})$ for all $0 \leq i < D$.
    
    \begin{example}\labelx{ex:commutatorInA5}
      One possible choice for $\mathcal{R}$ is to continue using the group $A_5$ from \cref{ex:conjunctionInA5} where we have shown $B_{\beta, \alpha}[\sigma, \dots, \sigma] = \sigma \neq \idGrp$ for the special elements $\alpha, \beta, \sigma \in A_5$. The group $A_5$ is an automaton group as it is generated, e.\,g., by the \GAut with states $A_5$, alphabet $\{ 1, 2, \dots, 5 \}$ and transitions
      \[
        \left\{ \trans{\pi}{i}{\pi(i)}{\id} \mid \pi \in A_5 \right\}
      \]
      and we can simply let $b(D, i) = \sigma$ for all $0 \leq i < D$, which obviously make $b$ \LOGSPACE-computable. Of course, the two constant functions $\alpha$ and $\beta$ are \DLINSPACE-computable. Thus, we can choose $A_5$ for $\mathscr{G}(\mathcal{R})$.
    \end{example}
    \begin{example}\labelx{ex:freeGroup}
      Another possibility for $\mathcal{R}$ is the first Aleshin automaton
      \begin{center}\enlargethispage{1.5\baselineskip}
        \begin{tikzpicture}[auto, shorten >=1pt, >=latex, baseline=(c.base)]
          \node[state] (b) {$b$};
          \node[state, above right=0.5cm and 2cm of b] (a) {$a$};
          \node[state, below right=0.5cm and 2cm of b] (c) {$c$};
          
          \draw[->] (b) edge[loop left] node {$0/1$} (b)
                        edge node[swap] {$1/0$} (c)
                    (a) edge[bend right] node[swap] {$0/1$} (c)
                        edge node[swap] {$1/0$} (b)
                    (c) edge[bend right] node[swap, align=center] {$0/0$\\$1/1$} (a)
          ;
        \end{tikzpicture},
      \end{center}\csname @beginparpenalty\endcsname10000
      which generates the free group $F_3$ in the generators $a, b$ and $c$ with a binary alphabet \cite{Aleshin83,VorobetsV07}.\footnote{For the idea to use the free group for a logical conjunction, see also \cite{Robinson93phd}.}
      
      We claim that we have $B_{\beta, \varepsilon}[b(D, D - 1), \dots, b(D, 0)] \neq \idGrp$ in $F_3$ if we choose
      \[
        \beta(d) = 
          \begin{cases}
            c & \text{for $d$ even,} \\
            b & \text{for $d$ odd}
          \end{cases}
      \]
      and $b(D, i) = b^{-1} a$ for all $0 \leq i < D$. To show the claim, we write $B_3(D)$ for
      \[
        B_3(D) = B_{\beta, \varepsilon}[\smash{\underbrace{b^{-1} a, \dots, b^{-1} a}_{D \text{ times}}}]
      \]
      where $D = 2^d$ and show
      \[
        B_3(2^d) \in \begin{cases}
          b^{-1} \, \{ a, b, c \}^{\pm *} \, a & \text{for $d$ even,}\\
          c^{-1} \, \{ a, b, c \}^{\pm *} \, a & \text{for $d$ odd}
        \end{cases}
      \]
      and that $B_3(2^d)$ is freely reduced in both cases by induction (on $d$). From this, we obtain that all $B_3(2^d)$ are freely reduced but non-empty and, therefore, not the identity in $F_3$.
      
      For $d = 0$ (or, equivalently, $D = 1$), we have
      \[
        B_3(2^0) = B_3(1) = B_{\beta, \varepsilon} [ b^{-1} a ] = b^{-1} a\text{,}
      \]
      which is in $b^{-1} \, \{ a, b, c \}^{\pm *} \, a$ and freely reduced. For the inductive step from $d$ to $d + 1$ (or, equivalently, from $D$ to $2D$), we have:
      \begin{alignat*}{2}
        B_3(2^{d + 1}) &= B_3(2D) = B_{\beta, \varepsilon}[\underbrace{b^{-1} a, \dots, b^{-1} a}_{2D \text{ times}}] &\quad& \text{(definition of $B_3$)}\\
        &= \Big[ B_{\beta, \varepsilon}[\underbrace{b^{-1} a, \dots, b^{-1} a}_{D \text{ times}}]^{\beta(d)}, \\
        &\phantom{{}={} \Big[}
          B_{\beta, \varepsilon}[\underbrace{b^{-1} a, \dots, b^{-1} a}_{D \text{ times}}]^{\varepsilon} \Big] &&\text{(definition of $B_{\beta, \varepsilon}$)}\\
        &= \big[ B_3(2^d)^{\beta(d)}, B_3(2^d) \big] && \text{(definition of $B_3$)}\\
        &= \begin{array}[t]{r@{\;}c@{\;}l@{\;}c}
          \beta(d)^{-1} & B_3(2^d)^{-1} & \beta(d) & B_3(2^d)^{-1} \\
          \beta(d)^{-1} & B_3(2^d)      & \beta(d) & B_3(2^d)
        \end{array}
      \end{alignat*}
      We distinguish two cases. If $d$ is even (and, equivalently, $d + 1$ is odd), we have $\beta(d) = c$ (by definition) and $B_3(2^d) = b^{-1} w a$ freely reduced for some $w$ (by induction). Thus, we obtain
      \[
        B_3(2^{d + 1}) = c^{-1} (a^{-1} w^{-1} b) c \; (a^{-1} w^{-1} b) \; c^{-1} (b^{-1} w a) c \; (b^{-1} w a) \text{,}
      \]
      which is freely reduced and in $c^{-1} \, \{ a, b, c \}^{\pm *} \, a$ (as desired since $d + 1$ is odd). The other case is that $d$ is odd (and $d + 1$ is even). Here, we have $\beta(d) = b$ and $B_3(2^d) = c^{-1} w a$ freely reduced for some $w$. This yields
      \[
        B_3(2^{d + 1}) = b^{-1} (a^{-1} w^{-1} c) b \; (a^{-1} w^{-1} c) \; b^{-1} (c^{-1} w a) b \; (c^{-1} w a) \text{,}
      \]
      which is again freely reduced and in $b^{-1} \, \{ a, b, c \}^{\pm *} \, a$ (as desired since $d + 1$ is even).
      
      On a side note, we point out that this argument only requires that all $b(D, i)$ are from $b^{-1} \{ a, b, c \}^{\pm *} a$, not that they are all equal or even all equal to $b^{-1} a$ (as we have chosen them here). This will become more important later on in \cref{sec:compressedWP} because it allows us to also use $b(D, i)$ which are nested commutators of the from $B_{\beta, \varepsilon}$ themselves.
    \end{example}
    
    \begin{remark}
      Instead of using $B_{\beta, \alpha}$ for \DLINSPACE-computable functions $\alpha$ and $\beta$, we could restrict ourselves to $B_{\varepsilon, \varepsilon}$. The idea here is that we can move the conjugation to the leaves of the tree representing the nested commutators.
      
      A schematic representation of such a tree can be found in \cref{fig:commutatorTree} for $D = 2^3$,
      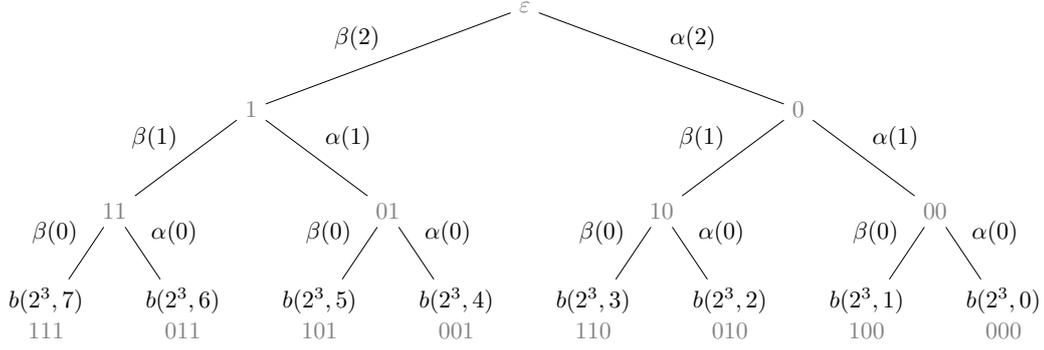
\begin{figure}\centering
        \resizebox{\linewidth}{!}{
          \begin{tikzpicture}[level distance=1.5cm, auto,
              level 1/.style={sibling distance=8cm},
              level 2/.style={sibling distance=4cm},
              level 3/.style={sibling distance=2cm},
              level 4/.style={sibling distance=1cm}]
            \node {\textcolor{gray}{$\varepsilon$}}
              child {node {\textcolor{gray}{1}}
                child {node {\textcolor{gray}{11}}
                  child {
                    node[align=center] {$b(2^3, 7)$\\\textcolor{gray}{111}}
                    edge from parent node[swap] {$\beta(0)$}
                  }
                  child {
                    node[align=center] {$b(2^3, 6)$\\\textcolor{gray}{011}}
                    edge from parent node {$\alpha(0)$}
                  }
                  edge from parent node[swap] {$\beta(1)$}
                }
                child {node {\textcolor{gray}{01}}
                  child {
                    node[align=center] {$b(2^3, 5)$\\\textcolor{gray}{101}}
                    edge from parent node[swap] {$\beta(0)$}
                  }
                  child {
                    node[align=center] {$b(2^3, 4)$\\\textcolor{gray}{001}}
                    edge from parent node {$\alpha(0)$}
                  }
                  edge from parent node {$\alpha(1)$}
                }
                edge from parent node[swap] {$\beta(2)$}
              }
             child {node {\textcolor{gray}{0}}
               child {node {\textcolor{gray}{10}}
                 child {
                   node[align=center] {$b(2^3, 3)$\\\textcolor{gray}{110}}
                   edge from parent node[swap] {$\beta(0)$}
                  }
                  child {
                    node[align=center] {$b(2^3, 2)$\\\textcolor{gray}{010}}
                    edge from parent node {$\alpha(0)$}
                  }
                  edge from parent node[swap] {$\beta(1)$}
                }
                child {node {\textcolor{gray}{00}}
                  child {
                    node[align=center] {$b(2^3, 1)$\\\textcolor{gray}{100}}
                    edge from parent node[swap] {$\beta(0)$}
                  }
                  child {
                    node[align=center] {$b(2^3, 0)$\\\textcolor{gray}{000}}
                    edge from parent node {$\alpha(0)$}
                  }
                  edge from parent node {$\alpha(1)$}
                }
                edge from parent node {$\alpha(2)$}
              };
          \end{tikzpicture}
        }%
        \caption{Schematic representation of $B_{\beta, \alpha}[ b(2^3, 7), \dots, b(2^3, 0) ]$. The labels at the edges indicate a conjugation.}\label{fig:commutatorTree}
      \end{figure}
      where the labels at the edges indicate a conjugation. From the picture, it becomes apparent that the conjugating element for $b(D, i)$ with $D = 2^d$ is coupled to the reverse/least significant bit first binary representation of $i$ with length $d$.
      
      Formally, we can define $w(D, i) = f(\revbin_d(i))$ for $D = 2^d$ for the function $f$ that works as follows: the first letter of $\revbin_d(i)$ is replaced with $\alpha(0)$ if it is a zero digit or with $\beta(0)$ if it is a one digit, the second letter is replaced with $\alpha(1)$ (if it is a zero digit) or with $\beta(1)$ (if it is a one digit), and so on. The counter for the positions in $\revbin_d (i)$, which is the argument to $\alpha$ and $\beta$, needs to count up to $d$ and, thus, requires space $\log \log D$. Therefore, $w(D, i)$ can certainly be computed in \LOGSPACE (in particular, if $D$ is given in unary).
      
      For the group $A_5$ (from \cref{ex:commutatorInA5}), we can replace all zero digits by $\alpha$ and all one digits by $\beta$ and, for $F_3$ (from \cref{ex:freeGroup}), the function $f$ is implemented by the automaton\footnote{Technically, the automaton is not synchronous because we use the empty word in some outputs. However, we can emit a special symbol $e$ instead and then map all $e$s to $\varepsilon$ using a homomorphism.}
      \begin{center}
        \begin{tikzpicture}[auto, shorten >=1pt, >=latex, baseline=(p.base)]
          \node[state] (p) {$f$};
          \node[state, right=2cm of p] (q) {};
          
          \draw[->] (p) edge[bend left] node[align=center] {$1/c$, $0/\varepsilon$} (q)
                    (q) edge[bend left] node[align=center] {$1/b$, $0/\varepsilon$} (p)
          ;
        \end{tikzpicture}.
      \end{center}
      
      Now, we define $b'(D, i) = b(D, i)^{w(D, i)}$ (which clearly is still \LOGSPACE-computable) and show
      \[
        B_{\varepsilon, \varepsilon} \big[ b'(D, D - 1), \dots, b'(D, 0) \big] = 
          B_{\beta, \alpha} \big[ b(D, D - 1), \dots, b(D, 0) \big]
      \]
      for all $D = 2^d$. In fact, we have
      \[
        B_{\varepsilon, \varepsilon} \big[ \bm{p}\sub{D - 1}^{w(D, D - 1)}, \dots, \bm{p}_0^{w(D, 0)} \big] = 
          B_{\beta, \alpha} \big[\bm{p}\sub{D - 1}, \dots, \bm{p}_0 \big]
      \]
      for all $\bm{p}_0, \dots, \bm{p}\sub{D - 1} \in R^{\pm *}$, which we show by induction on $d$. For $d = 0$ (or, equivalently, $D = 1$), we have $B_{\varepsilon, \varepsilon} [ \bm{p}_0^{w(1, 0)} ] = \bm{p}_0^\varepsilon = \bm{p}_0 = B_{\beta, \alpha} [ \bm{p}_0 ]$. For the inductive step from $d$ to $d + 1$ (or, equivalently, form $D$ to $2D$), we observe
%
      \begin{alignat*}{2}
        w(2D, i) &= f\left( \revbin_{d + 1}(i) \right) = f\left( \revbin_{d}(i) \, 0 \right) = w(D, i) \alpha(d)\\
      \intertext{for all $0 \leq i < D$ and}
        w(2D, i) &= f\left( \revbin_{d + 1}(i) \right) = f\left( \revbin_{d}(i - D) \, 1 \right) = w(D, i - D) \beta(d)
      \end{alignat*}
      for all $D \leq i < 2D$.
      Thus, we have in $\mathscr{G}(\mathcal{R})$:
      \begin{multline*}
        B_{\varepsilon, \varepsilon} \big[ \bm{p}_{2D - 1}^{w(2D, 2D - 1)}, \dots, \bm{p}_0^{w(2D, 0)} \big] \\
          \begin{alignedat}[t]{2}
            &= \Big[
              B_{\varepsilon, \varepsilon} \big[ \bm{p}_{2D - 1}^{w(2D, 2D - 1)}, \dots, \bm{p}\sub{D}^{w(2D, D)} \big], \\
            &\phantom{{}={}\Big[}
              B_{\varepsilon, \varepsilon} \big[ \bm{p}\sub{D - 1}^{w(2D, D - 1)}, \dots, \bm{p}_0^{w(2D, 0)} \big] \Big]
              &\quad& \text{(by definition)} \\
            &= \Big[
              B_{\varepsilon, \varepsilon} \big[ \bm{p}_{2D - 1}^{w(D, D - 1) \beta(d)}, \dots, \bm{p}\sub{D}^{w(D, D) \beta(d)} \big], \\
            &\phantom{{}= \Big[}
              B_{\varepsilon, \varepsilon} \big[ \bm{p}\sub{D - 1}^{w(D, D - 1) \alpha(d)}, \dots, \bm{p}_0^{w(D, 0) \alpha(d)} \big] \Big]
              && \text{(by the above)} \\
            &= \Big[
              B_{\varepsilon, \varepsilon} \big[ \bm{p}_{2D - 1}^{w(D, D - 1)}, \dots, \bm{p}\sub{D}^{w(D, D)} \big]^{\beta(d)}, \\
            &\phantom{{}={}\Big[}
              B_{\varepsilon, \varepsilon} \big[ \bm{p}\sub{D - 1}^{w(D, D - 1)}, \dots, \bm{p}_0^{w(D, 0)} \big]^{\alpha(d)} \Big]
              && \text{(by \cref{fct:conjugatedCommutator})} \\
            &= \Big[
              B_{\beta, \alpha} \big[ \bm{p}_{2D - 1}, \dots, \bm{p}\sub{D} \big]^{\beta(d)}, \,
              B_{\beta, \alpha} \big[ \bm{p}\sub{D - 1}, \dots, \bm{p}_0 \big]^{\alpha(d)} \Big]
              && \text{(by induction)} \\
            &= B_{\beta, \alpha} \big[ \bm{p}_{2D - 1}, \dots, \bm{p}_0 \big]
              && \text{(by definition)}
          \end{alignedat}
      \end{multline*}
    \end{remark}
    \begin{example}\labelx{ex:SENSandGrigorchuk}
      In fact, we can choose any automaton group that satisfies the uniformly SENS property of \cite{BartholdiFLW20} for $\mathscr{G}(\mathcal{R})$. These include, for example, the Grigorchuk group generated by the automaton
      \begin{center}
        \begin{tikzpicture}[auto, shorten >=1pt, >=latex, baseline=(b.base)]
          \node[state] (b) {$b$};
          \node[state, above right=of b] (a) {$a$};
          \node[state, below right=of b] (d) {$d$};
          \node[state, below right=of a] (c) {$c$};
          \node[state, right=of c] (id) {$\textnormal{id}$};
          
          \draw[->] (a) edge[bend left] node[align=center, pos=0.7] {$0/1$\\$1/0$} (id)
                    (b) edge node {$0/0$} (a)
                    (b) edge node[swap] {$1/1$} (c)
                    (c) edge node {$0/0$} (a)
                    (c) edge node {$1/1$} (d)
                    (d) edge[bend right] node[swap] {$0/0$} (id)
                    (d) edge node {$1/1$} (b)
                    (id) edge[loop right] node[align=center] {$0/0$\\$1/1$} (id)
          ;
        \end{tikzpicture}.
      \end{center}
      The definition of a uniformly SENS group is very similar but slightly stronger than what we require for our group $\mathscr{G}(\mathcal{R})$ for the commutator mode:

      A group $G$ generated by a finite set $R$ is called \emph{uniformly strongly efficiently non-solvable} \emph{(uniformly SENS)}\footnote{Our definition of uniformly SENS is not exactly the one of \cite{BartholdiFLW20}: we have changed the ordering of some indices here because it is more convenient for our other definitions.} if there is a constant $\mu \in \mathbb{N}$ and words $\bm{r}_{d,v} \in R^{\pm*}$ for all $d \in \mathbb{N}$, $v \in \{ 0,1 \}^{\leq d}$ such that
      \begin{enumerate}[label=(\alph*), ref=(\alph*)]
        \item\label{SENSa} $|\bm{r}_{d,v}| \leq 2^{\mu d}$ for all $v \in \{ 0,1 \}^{d}$,
        \item $\bm{r}_{d,v} = \bigl[ \bm{r}_{d, 1v},\, \bm{r}_{d, 0v} \bigr]$ for all $v \in \{ 0,1 \}^{< d}$ (here we take the commutator of words)\footnote{Compare this to the tree in \cref{fig:commutatorTree}.},
        \item $\bm{r}_{d, \varepsilon} \neq \idGrp$ in $G$ and
        \item\label{SENSu} given $v \in \{0,1\}^d$, a positive integer $i$ encoded in binary with $\mu d$ bits, and $a \in R^\pm$ one can decide in $\DLINTIME$ whether the $i^\textnormal{th}$ letter of $\bm{r}_{d,v}$ is $a$. Here, \DLINTIME is the class of problems decidable in linear time on a random access Turing machine.
      \end{enumerate}
      
      Essentially, a group is uniformly SENS if there are non-trivial balanced iterated commutators of arbitrary depth and these balanced iterated commutators can be computed efficiently.
      
      To define the elements $b(D, i)$ for a uniformly SENS automaton group, we let $b(D, i) = \bm{r}_{d, \revbin_d(i)}$ where $D = 2^d$ and $\revbin_d(i)$ is the reverse/least significant bit first binary representation of $i$ with length $d$. For this choice, we have $B[ b(D, D - 1), \dots, B(D, 0) ] \neq \idGrp$ in the group (and, thus, choose $\alpha = \beta = \varepsilon$, which is clearly \DLINSPACE-computable).
      
      It remains to show that $b$ is \LOGSPACE-computable (where $D$ is given in unary). Observe that the last condition \ref{SENSu} requires that each letter of $\bm{r}_{d, v}$ can be computed in time $\mu d$ on a random access Turing machine. Thus, it can, in particular, be computed in space $\mu d$ on a normal (non-random access) Turing machine. Since, by the first condition \ref{SENSa}, its length is at most $2^{\mu d}$, we only need a counter of size $\mu d$ to compute $\bm{r}_{d, v}$ entirely. Thus, we can compute $b(D, i)$ in space $\mu \log D$, which is logarithmic in the input as $D$ is given in unary (i.\,e.\ as a string $1^D$).
    \end{example}
    
    \begin{proof}[Proof of \cref{thm:nonuniformPSPACE}]
      Since the uniform word problem for automaton groups is in \PSPACE (see \cref{thm:uniformPSPACE}), so is the word problem of any (fixed) automaton group. Therefore, we only have to show the hardness part of the result.
      
      As already stated at the beginning of this section, we reduce the \PSPACE-complete word problem for $M$ to the (complement of the) word problem for a \GAut $\mathcal{T}$ with state set $Q$. For this reduction, it remains to finally construct $\mathcal{T}$ and to map an input $w \in \Lambda^*$ for the Turing machine $M$ to a state sequence $\bm{q} \in Q^{\pm *}$ in such a way that $\bm{q}$ can be computed from $w$ in logarithmic space and that we have $\bm{q} = \idGrp$ in $\mathscr{G}(\mathcal{T})$ if and only if the Turing machine does \textbf{not} accept $w$.
      
      We first define the \GAut $\mathcal{T} = (Q, \Sigma, \delta)$, which is composed of multiple parts. Its first part is the automaton $\mathcal{R} = (R, \Sigma, \rho)$. If we want to show \cref{thm:nonuniformPSPACE} for $|\Sigma| = 2$, we have to choose a suitable $\mathcal{R}$ over the binary alphabet $\Sigma = \boxedAlph$ (see \cref{ex:freeGroup} and \cref{ex:SENSandGrigorchuk} for such choices). However, we will continue the proof without this assumption and also allow other groups (such as $A_5$ from \cref{ex:commutatorInA5}).
      
      \begin{figure}\centering%
        \begin{tikzpicture}[auto, shorten >=1pt, >=latex]
          \node[state] (T21) {};
          \node[state, above right=0.25cm and 2cm of T21, dashed] (T21r) {$r$};
          \node[state, right=6cm of T21] (r1) {$r_1$};
          \draw[->] (T21) edge[dashed] node[sloped] {$a/b$} (T21r)
                          edge node[swap, pos=0.4] {$a/b$} (r1)
                    (T21r) edge[loop right, dashed] node (T21rlabel) {$\id_\Sigma$} (T21r)
          ;
          \draw[decorate, decoration=pencilline, thick, opacity=0.5]
            ($(T21r.south west)-(0.1cm, 0.1cm)$) -- ($(T21r.north east)+(0.1cm, 0.1cm)$)
            ($(T21r.north west)-(0.1cm, -0.1cm)$) -- ($(T21r.south east)+(0.1cm, -0.1cm)$);
          \coordinate (T21southeast) at ($(T21.south -| T21rlabel.east)+(0.25cm, -0.25cm)$);
          \coordinate (T21northwest) at ($(T21.west |- T21r.north)+(-0.25cm, 0.25cm)$);
          \draw (T21northwest) rectangle (T21southeast);
          \node[anchor=north west, inner sep=7pt] (T21label) at (T21northwest) {$\mathcal{T}_2$};

          \node[state, below=2cm of T21] (T2R) {};
          \node[state, above right=0.25cm and 2cm of T2R, dashed] (T2Rr) {$r$};
          \node[state, right=6cm of T2R] (rR) {$r_{|R|}$};
          \draw[->] (T2R) edge[dashed] node[sloped] {$a'/b'$} (T2Rr)
                          edge node[swap, pos=0.4] {$a'/b'$} (rR)
                    (T2Rr) edge[loop right, dashed] node (T2Rrlabel) {$\id_\Sigma$} (T2Rr)
          ;
          \draw[decorate, decoration=pencilline, thick, opacity=0.5]
            ($(T2Rr.south west)-(0.1cm, 0.1cm)$) -- ($(T2Rr.north east)+(0.1cm, 0.1cm)$)
            ($(T2Rr.north west)-(0.1cm, -0.1cm)$) -- ($(T2Rr.south east)+(0.1cm, -0.1cm)$);
          \coordinate (T2Rsoutheast) at ($(T2R.south -| T2Rrlabel.east)+(0.25cm, -0.25cm)$);
          \coordinate (T2Rnorthwest) at ($(T2R.west |- T2Rr.north)+(-0.25cm, 0.25cm)$);
          \draw (T2Rnorthwest) rectangle (T2Rsoutheast);
          \node[anchor=north west, inner sep=7pt] (T2Rlabel) at (T2Rnorthwest) {$\mathcal{T}_2$};

          \coordinate (T21south) at ($(T21.south west)!0.5!(T21southeast)$);
          \node[anchor=center, yshift=3pt] (dotsT2) at ($(T21southeast)!0.5!(T2Rnorthwest)$) {$\vdots$};
          \node[anchor=center, yshift=3pt] (dotsR) at (dotsT2.center -| r1.center) {$\vdots$};
          
          \coordinate (Rnorthwest) at ($(T21label.north west -| r1.west)+(-0.25cm, 0cm)$);
          \node[anchor=north west, inner sep=7pt] (Rlabel) at (Rnorthwest) {$\mathcal{R}$};
          \draw (Rnorthwest) rectangle ($(rR.south -| rR.east)+(0.25cm, -0.25cm)$);
        \end{tikzpicture}
        \caption{The dedicated state $r$ gets replaced by an actual state of $\mathcal{R}$ in each copy $\mathcal{T}_{2, r}$ of $\mathcal{T}_2$.}\label{fig:T2r}
      \end{figure}
      
      Next, we take the automaton $\mathcal{T}'$ from \cref{prop:TMmode} and encode it into the automaton $\mathcal{T}_2$ over $\Sigma$. Then, for every $r \in R$, we take a disjoint copy $\mathcal{T}_{2, r}$ of $\mathcal{T}_2$. Each copy $\mathcal{T}_{2, r}$ contains the place-holder state $r$ (mentioned in \cref{prop:TMmode}) and we replace it with the actual state $r$ from $\mathcal{R}$ which belongs to the respective copy (see \cref{fig:T2r}). Thus, in general, the action of $r$ is not the identity anymore. By $\bm{p}_{i, r}$, we denote the corresponding $\bm{p}_i$ from \cref{prop:TMmode} for $\mathcal{T}_{2, r}$.
      
      Finally, we consider the \GAut $\mathcal{R}_0$
      \begin{center}
        \begin{tikzpicture}[auto, shorten >=1pt, >=latex, baseline=(r.base)]
          \node[state] (r0) {$r_0$};
          \node[state, right=of r0] (r) {$r$};
          
          \draw[->] (r0) edge[loop left] node {$\id_{ \{ 0, 1, \# \} \cup \Gamma }$} (r0)
                         edge node {$\$/\$$} (r)
                    (r) edge[gray, loop right] node {$\id_{\Sigma'}$} (r)
          ;
        \end{tikzpicture}
      \end{center}
      over the alphabet $\Sigma' = \{ 0, 1, \#, \$ \} \cup \Gamma$ (where we use the same convention about edges labeled by $\id_{A}$ as in the proof of \cref{prop:TMmode}). We encode $\mathcal{R}_0$ over $\Sigma$ (similar to the encoding $\mathcal{T}_2$ of $\mathcal{T}'$) by using the automaton
      \begin{center}
        \resizebox{\linewidth}{!}{%
        \begin{tikzpicture}[auto, shorten >=1pt, >=latex, baseline=(r0.base)]
          \node[state] (L) {};
          \node[right=of L] (dots) {$\dots$};
          \node[state, right=of dots] (0) {};
          \node[state, right=of 0] (r0) {$r_0$};
          \node[state, right=of r0] (1) {};
          \node[state, right=of 1] (2) {};
          \node[state, above=of 1] (3) {};
          \node[state, right=of 2, dotted] (r) {$r$};
          
          \path[->] (L) edge[bend left] node[align=center] {$\boxedZero/\boxedZero$\\$\boxedOne/\boxedOne$} (r0)
                    (dots) edge node[align=center] {$\boxedZero/\boxedZero$\\$\boxedOne/\boxedOne$} (L)
                    (0) edge node[align=center] {$\boxedZero/\boxedZero$\\$\boxedOne/\boxedOne$} (dots)
                    (r0) edge node {$\boxedZero/\boxedZero$} (0)
                         edge node {$\boxedOne/\boxedOne$} (1)
                    (1) edge node {$\boxedOne/\boxedOne$} (2)
                        edge node[swap] {$\boxedZero/\boxedZero$} (3)
                    (2) edge node {$\boxedOne/\boxedOne$} (r)
                        edge[bend left=50] node {$\boxedZero/\boxedZero$} (r0)
                    (3) edge[bend right] node[swap, align=center] {$\boxedZero/\boxedZero$\\$\boxedOne/\boxedOne$} (r0)
          ;
          
          \draw[decorate, decoration={brace}] ($(0.east)+(0pt,-1cm)$) -- node[yshift=-0.5ex] {$L = \log |\Gamma|$ many} ($(L.west)+(0pt,-1cm)$);
        \end{tikzpicture}}
      \end{center}\pagebreak[1]
      and, again, take a disjoint copy $\mathcal{R}_{0, r}$ for every $r$ where we replace the place-holder $r$ by the actual state from $\mathcal{R}$. These parts of $\mathcal{T}$ will be used to implement the conjugation with $\alpha(d)$ and $\beta(d)$ in $B_{\beta, \alpha}$ (just like in the proof of \cref{thm:uniformPSPACE}). For this, we define the functions $\alpha_0, \beta_0: \mathbb{N} \to Q^{\pm *}$. We let $\alpha_0(d)$ (respectively: $\beta_0(d)$) be the same as $\alpha(d)$ (respectively: $\beta(d)$) with the only difference being that we replace every $r \in R$ by the corresponding state $r_0$ from the appropriate copy of $\mathcal{R}_{0, r}$. Clearly, $\alpha_0$ and $\beta_0$ are \DLINSPACE-computable (since $\alpha$ and $\beta$ are). As an abbreviation, we write $B$ for $B_{\beta, \alpha}$ and $B_0$ for $B_{\beta_0, \alpha_0}$ in the rest of this proof.
      
      This completes the definition of $\mathcal{T}$ and it remains to define the state sequence $\bm{q}$ depending on $w \in \Lambda^*$. For this, we first compute (in logarithmic space) all state sequences $\bm{p}_{i, r}$ with $0 \leq i < D$ and $r \in R$ from \cref{prop:TMmode} (with respect to the appropriate copy $\mathcal{T}_{2, r}$) using $w$ as the input. Here, we may assume that $D = 2^d$ is a power of two (otherwise, we repeat $\bm{p}_{D - 1, r}$ as a new $\bm{p}_{D, r}$ for all $r \in R$ until we reach a power of two, which is possible in logarithmic space). Then, we compute in logarithmic space the elements $\bm{b}_0 = b(D, 0), \dots, \bm{b}\sub{D - 1} = b(D, D - 1) \in R^{\pm *}$ such that $B[\bm{b}\sub{D - 1}, \dots, \bm{b}_0] \neq \idGrp$ in $\mathscr{G}(\mathcal{R})$ (which is possible by the choice of $\mathcal{R}$ above). Note that $\mathscr{G}(\mathcal{R})$ is a subgroup of $\mathscr{G}(\mathcal{T})$ since $\mathcal{R}$ is a sub-automaton of $\mathcal{T}$. Thus, we have $B[\bm{b}\sub{D - 1}, \dots, \bm{b}_0] \neq \idGrp$ in $\mathscr{G}(\mathcal{T})$. Now, for every $0 \leq i < D$, we can compute $\bm{b}_i' = \bm{p}_{i, r_{\ell}} \ldots \bm{p}_{i, r_{1}}$ in logarithmic space from $\bm{b}_i = r_{\ell} \dots r_{1}$ with $r_{1}, \dots, r_{\ell} \in R^{\pm 1}$. Finally, we choose $\bm{q} = B_0[\bm{b}\sub{D - 1}', \dots, \bm{b}_0']$ (which can also be computed in logarithmic space by \cref{fct:BIsLogspaceComputable}).
      
      We need to show $\bm{q} = \idGrp$ in $\mathscr{G}(\mathcal{T})$ if and only if $M$ does \textbf{not} accept $w$. First, assume that $M$ accepts $w$ and consider an arbitrary $0 \leq i < D$. We have $\bm{b}_i = r_{\ell} \dots r_{1}$ for some $r_{1}, \dots, r_{\ell} \in R^{\pm 1}$ and $\bm{b}_i' = \bm{p}_{i, r_{\ell}} \ldots \bm{p}_{i, r_{1}}$. By \cref{prop:TMmode} (first implication, applied to the appropriate copies of $\mathcal{T}'$) and by \cref{rmk:crossDiagarmsInEncoding}, there is some $u \in Y^*$ such that we have the cross diagram\footnote{Strictly speaking, we do not have the states $r_i$ on the right but rather state sequences from $\id^* r_i \id^*$. However, we omit these $\id$ states from the cross diagram to keep it more readable.}
      \begin{center}
        \begin{tikzpicture}[baseline=(m-6-1.base), auto]
          \matrix[matrix of math nodes, ampersand replacement=\&,
                  text height=1.75ex, text depth=0.25ex] (m) {
                            \& u \&    \& \$ \&     \\
            \bm{p}_{i, r_1} \&   \& {} \&    \& r_1 \\
                            \& u \&    \& \$ \&     \\
                     \vdots \&   \&    \&    \& \vdots \\
                            \& u \&    \& \$ \&     \\
            \bm{p}_{i, r_\ell} \&   \& {} \&    \& r_\ell \\
                            \& u \&    \& \$ \&     \\
          };
          \foreach \j in {1,5} {
            \foreach \i in {1,3} {
              \draw[->] let
                \n1 = {int(2+\i)},
                \n2 = {int(1+\j)}
              in
                (m-\n2-\i) -> (m-\n2-\n1);
              \draw[->] let
                \n1 = {int(1+\i)},
                \n2 = {int(2+\j)}
              in
                (m-\j-\n1) -> (m-\n2-\n1);
            };
          };
          
          \draw[decorate, decoration={brace}] (m-6-1.south west) -- node{$\bm{b}_i' ={}$} (m-2-1.north west);
          \draw[decorate, decoration={brace}] (m-2-5.north east) -- node{${} = \bm{b}_i$} (m-6-5.south east);
        \end{tikzpicture}.
      \end{center}
      
      Combining these cross diagrams (for all $0 \leq i < D$), we obtain the black part of the cross diagram
      \begin{center}
        \begin{tikzpicture}[baseline=(m-6-1.base), auto]
          \matrix[matrix of math nodes, ampersand replacement=\&,
                  text height=1.75ex, text depth=0.25ex] (m) {
                            \& u \&    \& \$ \&     \\
                \bm{b}_{0}' \&   \& {} \&    \& \bm{b}_0 \\
                            \& u \&    \& \$ \&     \\
                     \vdots \&   \&    \&    \& \vdots \\
                            \& u \&    \& \$ \&     \\
            \bm{b}_{D - 1}' \&   \& {} \&    \& \bm{b}_{D - 1} \\
                            \& u \&    \& \$ \&     \\
          };
          \foreach \j in {1,5} {
            \foreach \i in {1,3} {
              \draw[->] let
                \n1 = {int(2+\i)},
                \n2 = {int(1+\j)}
              in
                (m-\n2-\i) -> (m-\n2-\n1);
              \draw[->] let
                \n1 = {int(1+\i)},
                \n2 = {int(2+\j)}
              in
                (m-\j-\n1) -> (m-\n2-\n1);
            };
          };
          
          \node[gray, rotate=90, below=0pt of m-6-1.south, anchor=east, inner sep=0pt] (B0) {$B_0[$};
          \node[gray, rotate=90, above=0pt of m-2-1.north, anchor=east, inner sep=0pt] (B0c) {$]$};
          \foreach \j in {2,4} {
            \path let
              \n1 = {int(2+\j)}
            in
              node[gray, rotate=90, anchor=base] at ($(m-\j-1)!0.5!(m-\n1-1)$ |- B0.base) {$,$};
          };
          \node[gray, rotate=90, below=0pt of m-6-5.south, anchor=east, inner sep=0pt] (B) {$B[$};
          \node[gray, rotate=90, above=0pt of m-2-5.north, anchor=east, inner sep=0pt] {$]$};
          \foreach \j in {2,4} {
            \path let
              \n1 = {int(2+\j)}
            in
              node[gray, rotate=90, anchor=base] at ($(m-\j-5)!0.5!(m-\n1-5)$ |- B.base) {$,$};
          };
          
          \draw[gray, decorate, decoration={brace}] ($(B0.north west)+(-8pt,0pt)$) -- node{$\bm{q} ={}$} ($(B0c.north east)+(-8pt,0pt)$);
        \end{tikzpicture}.
      \end{center}
      Since we have $r_0 \cdot u \$ = r$ and $r_0 \circ u \$ = u \$$ by the construction of $\mathcal{R}_0$, we also have the cross diagrams
      \begin{center}
        \begin{tikzpicture}[baseline=(m-6-1.base), auto]
          \matrix[matrix of math nodes, ampersand replacement=\&,
                  text height=1.75ex, text depth=0.25ex] (m) {
                        \& u \&    \& \$ \&          \\
            \alpha_0(d) \&   \& {} \&    \& \alpha(d)\\
                        \& u \&    \& \$ \&          \\
          };
          \foreach \j in {1} {
            \foreach \i in {1,3} {
              \draw[->] let
                \n1 = {int(2+\i)},
                \n2 = {int(1+\j)}
              in
                (m-\n2-\i) -> (m-\n2-\n1);
              \draw[->] let
                \n1 = {int(1+\i)},
                \n2 = {int(2+\j)}
              in
                (m-\j-\n1) -> (m-\n2-\n1);
            };
          };
        \end{tikzpicture}
      \end{center}
      for all $d$ and can add the commutators to the above cross diagram by \cref{fct:commutatorInCrossDiagrams} to obtain the gray additions. As we have $B[\bm{b}\sub{D - 1}, \dots, \bm{b}_0] \neq \idGrp$ in $\mathscr{G}(\mathcal{T})$, there must be some $v \in \Sigma^*$ such that
      \[
        \bm{q} \circ u \$ v = B_0[\bm{b}\sub{D - 1}', \dots, \bm{b}_0'] \circ u \$ v = u \$ (B[\bm{b}\sub{D - 1}, \dots, \bm{b}_0] \circ v) \neq u \$ v \text{,}
      \]
      which concludes this direction.
      
      For the other direction, assume that $M$ does not accept the input $w$. We have to show $\bm{q} \circ u' = B_0[\bm{b}\sub{D - 1}', \dots, \bm{b}_0'] \circ u' = u'$ for all $u' \in \Sigma^*$. First, we show this for all $u' \in Y^* \$ \Sigma^*$ (and, thus, also for all prefixes of such $u'$) where we let $u' = u \$ v$ with $u \in Y^*$ and $v \in \Sigma^*$. Our approach is similar to the one given above for the case that $M$ accepts $w$. Again, consider an arbitrary $0 \leq i < D$ with $\bm{b}_i = r_{\ell} \dots r_{1}$ for some $r_{1}, \dots, r_{\ell} \in R^{\pm 1}$ and, thus, $\bm{b}_i' = \bm{p}_{i, r_{\ell}} \ldots \bm{p}_{i, r_{1}}$. We have the cross diagram
      \begin{center}
        \begin{tikzpicture}[baseline=(m-6-1.base), auto]
          \matrix[matrix of math nodes, ampersand replacement=\&,
                  text height=1.75ex, text depth=0.25ex] (m) {
                            \& u \&    \& \$ \&     \\
            \bm{p}_{i, r_1} \&   \& {} \&    \& \bm{p}_{i, r_1} \cdot u \$ \\
                            \& u \&    \& \$ \&     \\
                     \vdots \&   \&    \&    \& \vdots \\
                            \& u \&    \& \$ \&     \\
            \bm{p}_{i, r_\ell} \&   \& {} \&    \& \bm{p}_{i, r_\ell} \cdot u \$ \\
                            \& u \&    \& \$ \&     \\
          };
          \foreach \j in {1,5} {
            \foreach \i in {1,3} {
              \draw[->] let
                \n1 = {int(2+\i)},
                \n2 = {int(1+\j)}
              in
                (m-\n2-\i) -> (m-\n2-\n1);
              \draw[->] let
                \n1 = {int(1+\i)},
                \n2 = {int(2+\j)}
              in
                (m-\j-\n1) -> (m-\n2-\n1);
            };
          };
          
          \draw[decorate, decoration={brace}] (m-6-1.south west) -- node{$\bm{b}_i' ={}$} (m-2-1.north west);
          \draw[decorate, decoration={brace}] (m-2-5.north east) -- node{${} = \bm{b}_i' \cdot u \$$} (m-6-5.south east);
        \end{tikzpicture}
      \end{center}
      for this $i$ by \cref{prop:TMmode}. However, this time, there is some, $0 \leq i < D$ with $\bm{b}_i' \cdot u \$ \in \id^*$ and, thus, equal to $\idGrp$ in $\mathscr{{G}(\mathcal{T})}$ (this follows from applying the second implication of \cref{prop:TMmode} to all $r_1, \dots, r_\ell$).\footnote{In fact, by the construction of $\mathcal{T}'$, we actually either have $\bm{b}_i' \cdot u \$ = \bm{b}_i$ or $\bm{b}_i' \cdot u \$ = \idGrp$ in $\mathscr{G}(\mathcal{T})$ for all $0 \leq i < D$.}
      
      Again, we can combine these cross diagrams to obtain the black part of\csname @beginparpenalty\endcsname10000
      \begin{center}
        \begin{tikzpicture}[baseline=(m-6-1.base), auto]
          \matrix[matrix of math nodes, ampersand replacement=\&,
                  text height=1.75ex, text depth=0.25ex] (m) {
                        \& u \&    \& \$ \&     \\
            \bm{b}_{0}' \&   \& {} \&    \& \bm{b}_{0}' \cdot u \$ \\
                        \& u \&    \& \$ \&     \\
            \vdots \&   \&    \&    \& \vdots \\
                        \& u \&    \& \$ \&     \\
            \bm{b}\sub{D - 1}' \&   \& {} \&    \& \bm{b}\sub{D - 1}' \cdot u \$ \\
                        \& u \&    \& \$ \&     \\
          };
          \foreach \j in {1,5} {
            \foreach \i in {1,3} {
              \draw[->] let
                \n1 = {int(2+\i)},
                \n2 = {int(1+\j)}
              in
                (m-\n2-\i) -> (m-\n2-\n1);
              \draw[->] let
                \n1 = {int(1+\i)},
                \n2 = {int(2+\j)}
              in
                (m-\j-\n1) -> (m-\n2-\n1);
            };
          };
          
          \node[gray, rotate=90, below=0pt of m-6-1.south, anchor=east, inner sep=0pt] (B0) {$B_0[$};
          \node[gray, rotate=90, above=0pt of m-2-1.north, anchor=east, inner sep=0pt] (B0c) {$]$};
          \foreach \j in {2,4} {
            \path let
              \n1 = {int(2+\j)}
            in
              node[gray, rotate=90, anchor=base] at ($(m-\j-1)!0.5!(m-\n1-1)$ |- B0.base) {$,$};
          };
          \node[gray, rotate=90, below=0pt of m-6-5.south, anchor=east, inner sep=0pt] (B) {$B\Big[$};
          \node[gray, rotate=90, above=2pt of m-2-5.north, anchor=east, inner sep=0pt] {$\Big]$};
          \foreach \j in {2,4} {
            \path let
              \n1 = {int(2+\j)}
            in
              node[gray, rotate=90, anchor=base] at ($(m-\j-5)!0.5!(m-\n1-5)$ |- B.base) {$,$};
          };
          
          \draw[gray, decorate, decoration={brace}] ($(B0.north west)+(-8pt,0pt)$) -- node{$\bm{q} ={}$} ($(B0c.north east)+(-8pt,0pt)$);
        \end{tikzpicture}
      \end{center}
      and, then, add the gray part using \cref{fct:commutatorInCrossDiagrams}. Since we have $\bm{b}_{i}' \cdot u \$ = \idGrp$ in $\mathscr{G}(\mathcal{T})$ for some $i$, we obtain (for the state sequence on the right) $B[\bm{b}\sub{D - 1}' \cdot u \$, \dots, \bm{b}_{0}' \cdot u \$] = \idGrp$ in $\mathscr{G}(\mathcal{T})$ by \cref{fct:commutatorCollapses}. Thus, we have $\bm{q} \circ u \$ v = B_0[\bm{b}\sub{D - 1}', \dots, \bm{b}_{0}'] \circ u \$ v = u \$ v$.
      
      Now, assume that $u'$ is not a prefix of a word in $Y^* \$ \Sigma^*$. Then, it is in $Y^* (\PPre Y) \tilde{\Sigma} \Sigma^*$ by \cref{lem:malformedWords} and we can factorize $u' = u a v$ for some $u \in Y^* (\PPre Y)$, $a \in \tilde{\Sigma}$ and $v \in \Sigma^*$. By \cref{fct:malformedWords}, we obtain\footnote{Again, we actually have state sequences from $\id^*$, rather, but omit them for readability.}
      \begin{center}
        \begin{tikzpicture}[baseline=(m-2-1.base), auto]
          \matrix[matrix of math nodes, ampersand replacement=\&,
                  text height=1.75ex, text depth=0.25ex] (m) {
                            \& u \&    \& a \&        \& v \& \\
            \bm{p}_{i, r_1} \&   \& {} \&   \& \varepsilon \&   \& \varepsilon \\
                            \& u \&    \& a \&        \& v \& \\
                     \vdots \&   \&    \&   \& \vdots \&   \& \vdots \\
                            \& u \&    \& a \&        \& v \& \\
            \bm{p}_{i, r_\ell} \&   \& {} \&    \& \varepsilon \& \& \varepsilon \\
                            \& u \&    \& a \&        \& v \& \\
          };
          \foreach \j in {1,5} {
            \foreach \i in {1,3,5} {
              \draw[->] let
                \n1 = {int(2+\i)},
                \n2 = {int(1+\j)}
              in
                (m-\n2-\i) -> (m-\n2-\n1);
              \draw[->] let
                \n1 = {int(1+\i)},
                \n2 = {int(2+\j)}
              in
                (m-\j-\n1) -> (m-\n2-\n1);
            };
          };
          
          \draw[decorate, decoration={brace}] (m-6-1.south west) -- node{$\bm{b}_i' ={}$} (m-2-1.north west);
        \end{tikzpicture}
      \end{center}
      for $0 \leq i < D$ with $\bm{b}_i = r_\ell \dots r_1$ (where $r_1, \dots, r_\ell \in R^{\pm 1}$). Using the same argumentation, we also obtain $r_0 \circ ua = ua$ and $r_0 \cdot u a = \idGrp$ in $\mathscr{G}(\mathcal{T})$ for all $r_0$ in the copies of $\mathcal{R}_{0, 2}$ and, thus, for all $d$, the cross diagrams
      \begin{center}
        \begin{tikzpicture}[baseline=(m-6-1.base), auto, baseline=(m-2-1.base)]
          \matrix[matrix of math nodes, ampersand replacement=\&,
                  text height=1.75ex, text depth=0.25ex] (m) {
                        \& u \&    \& a \&        \& v \& \\
            \alpha_0(d) \&   \& {} \&   \& \varepsilon \&   \& \varepsilon \\
                        \& u \&    \& a \&        \& v \& \\
          };
          \foreach \j in {1} {
            \foreach \i in {1,3,5} {
              \draw[->] let
                \n1 = {int(2+\i)},
                \n2 = {int(1+\j)}
              in
                (m-\n2-\i) -> (m-\n2-\n1);
              \draw[->] let
                \n1 = {int(1+\i)},
                \n2 = {int(2+\j)}
              in
                (m-\j-\n1) -> (m-\n2-\n1);
            };
          };
        \end{tikzpicture}.
      \end{center}
      
      We can combine this into the cross diagram
      \begin{center}
        \begin{tikzpicture}[baseline=(m-6-1.base), auto]
          \matrix[matrix of math nodes, ampersand replacement=\&,
                  text height=1.75ex, text depth=0.25ex] (m) {
                        \& u \&    \& a \&        \& v \& \\
            \bm{b}_{0}' \&   \& {} \&   \& \varepsilon \&   \& \varepsilon \\
                        \& u \&    \& a \&        \& v \& \\
                 \vdots \&   \&    \&   \& \vdots \&   \& \vdots \\
                        \& u \&    \& a \&        \& v \& \\
            \bm{b}\sub{D - 1}' \&   \& {} \&    \& \varepsilon \&   \& \varepsilon \\
                        \& u \&    \& a \&        \& v \& \\
          };
          \foreach \j in {1,5} {
            \foreach \i in {1,3,5} {
              \draw[->] let
                \n1 = {int(2+\i)},
                \n2 = {int(1+\j)}
              in
                (m-\n2-\i) -> (m-\n2-\n1);
              \draw[->] let
                \n1 = {int(1+\i)},
                \n2 = {int(2+\j)}
              in
                (m-\j-\n1) -> (m-\n2-\n1);
            };
          };
          
          \node[gray, rotate=90, below=0pt of m-6-1.south, anchor=east, inner sep=0pt] (B0) {$B_0[$};
          \node[gray, rotate=90, above=0pt of m-2-1.north, anchor=east, inner sep=0pt] (B0c) {$]$};
          \foreach \j in {2,4} {
            \path let
              \n1 = {int(2+\j)}
            in
              node[gray, rotate=90, anchor=base] at ($(m-\j-1)!0.5!(m-\n1-1)$ |- B0.base) {$,$};
          };
          \node[gray, rotate=90, below=0pt of m-6-5.south, anchor=east, inner sep=0pt] (B) {$B_{\varepsilon, \varepsilon}[$};
          \node[gray, rotate=90, above=0pt of m-2-5.north, anchor=east, inner sep=0pt] {$]$};
          \foreach \j in {2,4} {
            \path let
              \n1 = {int(2+\j)}
            in
              node[gray, rotate=90, anchor=base] at ($(m-\j-5)!0.5!(m-\n1-5)$ |- B.base) {$,$};
          };
          
          \node[gray, rotate=90, below=0pt of m-6-7.south, anchor=east, inner sep=0pt] (B2) {$B_{\varepsilon, \varepsilon}[$};
          \node[gray, rotate=90, above=0pt of m-2-7.north, anchor=east, inner sep=0pt] {$]$};
          \foreach \j in {2,4} {
            \path let
              \n1 = {int(2+\j)}
            in
              node[gray, rotate=90, anchor=base] at ($(m-\j-7)!0.5!(m-\n1-7)$ |- B2.base) {$,$};
          };
          
          \draw[gray, decorate, decoration={brace}] ($(B0.north west)+(-8pt,0pt)$) -- node{$\bm{q} ={}$} ($(B0c.north east)+(-8pt,0pt)$);
        \end{tikzpicture}
      \end{center}
      (where we get the gray part by \cref{fct:commutatorInCrossDiagrams}) and obtain
      \[
        \bm{q} \circ u a v = B_0[\bm{b}\sub{D - 1}', \dots, \bm{b}_{0}'] \circ u a v = u a v \text{.}\tag*{\hbox{\qed}}
      \]
    \end{proof}
    
    \begin{remark}
      To calculate the number of states of $\mathcal{T}$, we assume that all parts of the automaton share the same $\id$ state and the same part for $\checksub{\id}$; the $r$ states are shared anyway. This yields
      \begin{alignat*}{2}
             & 1 &\quad& \text{($\id$ state)} \\
        {}+{}& |R| && \text{(for $\mathcal{R}$)} \\
        {}+{}& 5 |\Gamma| + 15 && \text{(for the encoding of $\checksub{\id}$)} \\
        {}+{}& |R| \cdot ( |\mathcal{T}_2| - 2 - 5 |\Gamma| - 15 ) && \text{(for the $|R|$ many copies of $\mathcal{T}_2$)} \\
        {}+{}& |R| \cdot ( \log |\Gamma| + 4 ) && \text{(for the $|R|$ many encoded $r_0$ automata)} \\
        {}={}& \rlap{$5 |\Gamma| + 16 + |R| \cdot (3 |\Gamma|^3 + 10 |\Gamma|^2 + 15 |\Gamma| + \log |\Gamma| + 41)$}
      \end{alignat*}
      many states, where $|R|$ is the number of states of $\mathcal{R}$ (for example, $|R| = 3$ for the Aleshin automaton from \cref{ex:freeGroup}), $|\Gamma|$ is the sum of the number of states and the number of tape symbols for a Turing machine for a $\PSPACE$-complete problem (which we assume to be a power of two) and where we have used the size of $\mathcal{T}_2$ from \cref{rmk:sizeOfT2}.
    \end{remark}
  \end{section}

  \begin{section}{Compressed Word Problem}\label{sec:compressedWP}
    In this section, we re-apply our previous construction to show that there is an automaton group with an \EXPSPACE-complete compressed word problem. The \emph{compressed word problem} of a group is similar to the normal word problem. However, the input element (to be compared to the neutral element) is not given directly but as a straight-line program. A \emph{straight-line program} is a context-free grammar which generates exactly one word.
    \begin{theorem}\labelx{thm:compressedEXPSPACE}
      There is an automaton group with an \EXPSPACE-complete compressed word problem:
      \problem
        [a \GAut $\mathcal{T} = (Q, \Sigma, \delta)$ with $|\Sigma| = 2$]
        {a straight-line program generating a state sequence $\bm{q} \in Q^{\pm *}$}
        {is $\bm{q} = \idGrp$ in $\mathscr{G}(\mathcal{T})$?}
    \end{theorem}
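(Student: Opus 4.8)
For any \GAut $\mathcal{T}$, a straight-line program of size $m$ generates a state sequence of length at most $2^{m}$; uncompressing it and running the non-deterministic guess-and-check algorithm for the ordinary word problem (which works in space linear in the length of the state sequence, cf.\ the \PSPACE\ upper bound in \autoref{thm:uniformPSPACE}) uses space $2^{O(m)}$, so --- using Savitch's theorem to determinise --- the compressed word problem of every automaton group lies in \EXPSPACE. It thus remains to prove \EXPSPACE-hardness.

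\textbf{Reusing the construction.}
Fix a deterministic Turing machine $M$ deciding some fixed \EXPSPACE-complete problem, so that on inputs of length $n$ all configurations of $M$ have length $N = 2^{s(n)}$ for a polynomial $s$. For the \GAut $\mathcal{T}$ I would take \emph{literally} the automaton built for $M$ in the proof of \autoref{thm:nonuniformPSPACE} (the group automaton $\mathcal{R}$ over $\boxedAlph$ from \autoref{ex:freeGroup} or \autoref{ex:SENSandGrigorchuk}, together with the copies $\mathcal{T}_{2,r}$ of the encoded TM-mode automaton and the copies $\mathcal{R}_{0,r}$): that construction never refers to the space bound of $M$. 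Similarly \autoref{prop:TMmode} applies to $M$ verbatim and yields, on input $w$, state sequences $\bm{p}_i$ for $0 \le i < D$ where now $D = 2^{d}$ with $d = s(n)+O(1)$: the format check $s$; the generalised check-marking sequences $\bm{c}_i = \checksub{\id}^{-(i+1)}\checksub{r}\,\checksub{\id}^{i}$ for $0 \le i < N$; the length-upper-bound check $\bm{c}' = \checksub{\id}^{-N} c\, \checksub{\id}^{N}$; the transition-verifying sequences $\bm{q}_i = \checksub{\id}^{-i} q_{\gamma_i'}\,\checksub{\id}^{i}$ for $0 \le i < N$ (with $\gamma_i'$ the $i$-th symbol of the initial configuration $p_0 w \blank^{N-n-1}$); the acceptance check $f$; and padding copies up to the power of two $D$. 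I would then define $\bm{q} = B_0[\bm{b}_{D-1}', \dots, \bm{b}_0']$ exactly as in that proof, with $\bm{b}_i'$ obtained from $\bm{b}_i = b(D,i) \in R^{\pm *}$ by the replacement $r \mapsto \bm{p}_{i,r}$. The argument that $\bm{q} = \idGrp$ in $\mathscr{G}(\mathcal{T})$ if and only if $M$ does \textbf{not} accept $w$ then carries over \emph{without any change}, since it uses only \autoref{prop:TMmode}, \autoref{fct:commutatorInCrossDiagrams}, \autoref{fct:commutatorCollapses} and the fact that $B_{\beta,\alpha}[b(D,D-1),\dots,b(D,0)] \neq \idGrp$ in $\mathscr{G}(\mathcal{R})$, which --- by the uniformly SENS property, or by the free-group computation in \autoref{ex:freeGroup} --- holds for \emph{every} power of two $D$, not merely polynomially bounded ones.

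\textbf{The new ingredient: a polynomial-size straight-line program for $\bm{q}$.}
As $D$ is now exponential, $\bm{q}$ has exponential length, and the real content is to output $\bm{q}$ as a straight-line program of size $\mathrm{poly}(n)$ in polynomial time (enough for \EXPSPACE-hardness). Here one exploits that $\bm{q}$ is highly structured. First, $B_0$ --- equivalently $B_{\varepsilon,\varepsilon}$ after pushing the conjugations $\alpha_0,\beta_0$ into the leaves, as discussed around \autoref{fig:commutatorTree} --- is a balanced commutator along a \emph{complete} binary tree of depth $d = s(n)+O(1)$, so its commutator skeleton has a straight-line program with $\mathrm{poly}(n)$ rules. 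Second, all but $O(n)$ of the leaves are uniform up to a conjugation coupled to the (bit-reversed) binary representation of the leaf index: indeed $\bm{c}_i = (\checksub{\id}^{-1}\checksub{r})^{\checksub{\id}^{i}}$, $\bm{q}_i = (q_{\blank})^{\checksub{\id}^{i}}$ for all $i > n$ (only $\gamma_0',\dots,\gamma_n'$ can differ from $\blank$), and $\bm{c}' = c^{\checksub{\id}^{N}}$, while $\checksub{\id}^{2^{k}}$ has a straight-line program of size $O(k)$ by repeated squaring. Over a dyadic block $[a2^{j},(a+1)2^{j})$ of indices the corresponding leaves read $(\bm{x}^{\checksub{\id}^{a2^{j}}})^{\checksub{\id}^{k}}$ for $k = 0,\dots,2^{j}-1$, and because $[x^{g},y^{g}] = [x,y]^{g}$ needs no commutation hypothesis, the balanced commutator over such a block obeys the recursion $E_{k+1} = \big[E_{k}^{\,\checksub{\id}^{2^{k}}},\, E_{k}\big]$ and hence has a $\mathrm{poly}(n)$-size straight-line program. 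I would therefore split the leaves of $\bm{q}$ into (i) one sub-commutator over the $O(n)$ input-dependent checks together with $s$, $\bm{c}'$ and $f$ --- having only polynomially many leaves, it has a $\mathrm{poly}(n)$-size program outright --- and (ii) for the check-marking checks and the uniform transition checks, one $E_k$-style sub-commutator for each of the $O(\log N)$ dyadic blocks covering $[0,N)$, respectively $[n+1,N)$; combining these $\mathrm{poly}(n)$ pieces through the commutator skeleton gives the required program for $\bm{q}$, computable from $w$ in polynomial time.

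\textbf{Main obstacle.}
The delicate point is exactly this last step: reconciling the complete-binary-tree shape forced by the (uniformly SENS) group $\mathcal{R}$ with the non-complete, dyadically decomposed grouping of the $\Theta(2^{s(n)})$ atomic checks, while keeping \emph{every} conjugating element --- the $\checksub{\id}^{2^{k}}$ arising from leaf indices, the $\alpha_0(k),\beta_0(k)$ from the commutator, and the $\mathcal{R}$-dressing $\bm{b}_i'$ --- compressed throughout, and correctly matching leaf indices to positions in the SENS tree. Once a polynomial-size straight-line program for $\bm{q}$ is in place, the equivalence ``$\bm{q} = \idGrp$ $\Leftrightarrow$ $M$ rejects $w$'' is inherited verbatim from \autoref{thm:nonuniformPSPACE}.
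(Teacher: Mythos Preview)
Your high-level outline is right, and your recursion $E_{k+1} = [E_k^{\,\checksub{\id}^{2^k}}, E_k]$ is precisely the device the paper isolates under the name ``twisted balanced iterated commutator'' $B_{\beta,\alpha}^\gamma(\bm{p},D)$ (see \autoref{lem:twistingIsConjugaction} and \autoref{lem:straightLineForTwistedCommutator}). However, there is a genuine gap in how you combine the pieces. You propose to pass from $B_0 = B_{\beta_0,\varepsilon}$ to $B_{\varepsilon,\varepsilon}$ by pushing the $\beta_0$-conjugations to the leaves, and only \emph{afterwards} apply the $E_k$-recursion to the uniform blocks. But after pushing, the $i$-th leaf becomes $\bm{p}_i^{\,w(D,i)}$ with $w(D,i)$ depending on the \emph{bit pattern} of $i$; even over a block where all $\bm{p}_i = x^{\checksub{\id}^{i}}$, the resulting leaves $(x^{\checksub{\id}^{i}})^{w(D,i)}$ are no longer of the form $y^{\checksub{\id}^{k}}$ for a common $y$, so your recursion does not apply. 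The paper avoids this by \emph{not} pushing $\beta_0$ to the leaves: it keeps $B_0$ and instead uses that $\checksub{\id}$ \emph{does} commute with $\beta_0(d)$ in $\mathscr{G}(\mathcal{T})$ (because $\checksub{\id}$ only touches the TM part of the input while the states $b_0,c_0$ underlying $\beta_0$ only touch the commutator part). This is exactly the commutation hypothesis of \autoref{lem:twistingIsConjugaction} that you dismissed as unnecessary.

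There is a second point you underestimate. The paper does \emph{not} compress the flat $B_0[\bm{b}'_{D-1},\dots,\bm{b}'_0]$; it replaces $\bm{q}$ by a \emph{nested} commutator
\[
  B_0\!\left[f,\;B_0[\bm{q}_{s(n)-1},\dots,\bm{q}_{n+1}],\;\bm{q}_n,\dots,\bm{q}_0,\;\bm{c}',\;B_0[\bm{c}_{s(n)-1},\dots,\bm{c}_{n+1}],\;\bm{c}_n,\dots,\bm{c}_0,\;r\right]
\]
whose outer $B_0$ has only $O(n)$ entries, two of which are the (SLP-compressible) inner $B_0$'s. Once you change $\bm{q}$ like this, the claim that correctness ``carries over without any change'' from \autoref{thm:nonuniformPSPACE} is no longer automatic: on an accepting $u\$$ the right-hand state sequence becomes a nested $B[\dots, B_3(2^{2n^e}), \dots]$ with $B_3$ as in \autoref{ex:freeGroup}, and nontriviality must be re-verified. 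The paper does so by fixing $\mathcal{R}$ to be the Aleshin automaton and using the observation (already flagged in \autoref{ex:freeGroup}) that every $B_3(2^d)$ is a freely reduced word in $b^{-1}\{a,b,c\}^{\pm*}a$, hence can itself serve as a leaf in a further $B$. This step genuinely depends on the choice of $\mathcal{R}$ and is not covered by the generic uniformly SENS property you invoke.
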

  
    The hard part of the proof of \cref{thm:compressedEXPSPACE} is (again) that the problem is \EXPSPACE-hard. That it is contained in \EXPSPACE space follows immediately by uncompressing the straight-line program (which yields at most an exponential blow up) and then applying the \PSPACE-algorithm for the normal word problem.

    The outline of the proof for the \EXPSPACE-hardness is the same as for the (normal) word problem: we start with a Turing machine $M$ and construct a \GAut $\mathcal{T} = (Q, \Sigma, \delta)$ from it in the same way as in the proof of \cref{thm:nonuniformPSPACE}. This time, however, the Turing machine accepts an (arbitrary) \EXPSPACE-complete problem and we assume that the length of its configurations is $s(n) = n + 1 + 2^{2n^e}$ for some positive integer $e$ where $n$ is the length of the input for $M$. Additionally, we assume the same normalizations as before.
    
    To keep things a bit simpler here, we do not use an arbitrary \GAut $\mathcal{R} = (R, \Sigma, \rho)$ for the commutators but fix the Aleshin automaton generating the free group $F_3$ (from \cref{ex:freeGroup}) for $\mathcal{R}$ in this section. In fact, it is even a bit more convenient to use the union of the Aleshin automaton $\mathcal{A} = (R_1, \Sigma, \eta)$ (where we also include the inverse states $a^{-1}, b^{-1}$ and $c^{-1}$) with its second power $\mathcal{A}^2$ for $\mathcal{R}$. The second power of an automaton is its composition with itself. Formally, $\mathcal{A}^2$ is the automaton $(R_1^2, \Sigma, \eta^2)$ where the transitions are given by
    \[
      \eta^2 = \{ \trans{r_2 r_1}{a}{c}{r_2' r_1'} \mid \trans{r_1}{a}{b}{r_1'}, \trans{r_2}{b}{c}{r_2'} \in \eta \text{ for some } b \in \Sigma \} \text{.}
    \]
    Note that, with this construction, $r_2 r_1$ acts in the same way when seen as a state of $\mathcal{A}^2$ as when seen as a sequence of states of $\mathcal{A}$.
    With this choice, we have that $b(D, i) = b^{-1} a$ is a state in $R$ and can avoid taking multiple copies of the encoding $\mathcal{T}_2$ over $\Sigma \supseteq \boxedAlph$ of $\mathcal{T}'$ from \cref{prop:TMmode} as we only need the copy for $r = b^{-1} a$.
    
    As in \cref{ex:freeGroup}, we use $\varepsilon$ for $\alpha$ and define $\beta: \mathbb{N} \to R^{\pm *}$ by
    \[
      \beta(d) = 
      \begin{cases}
        c & \text{for $d$ even,} \\
        b & \text{for $d$ odd.}
      \end{cases}
    \]
    This also yields $\alpha_0 = \varepsilon$ and $\beta_0$ (as defined in the proof of \cref{thm:nonuniformPSPACE}). So, effectively, we use $B_{\beta, \varepsilon}$ for $B_{\beta, \alpha}$ and $B_{\beta_0, \varepsilon}$ for $B_{\beta_0, \alpha_0}$ for the commutators. We continue to abbreviate the former by $B$ and the latter by $B_0$.
    
    We want to reduce the word problem of $M$ (which now is \EXPSPACE-complete) to the compressed word problem of $\mathscr{G}(\mathcal{T})$. Here, we cannot simply use the same proof as in the case of the (normal) word problem, however! Recall that we have mapped the input word $w$ of length $n$ to the state sequence
    \[
      \bm{q} = B_0 \big[ \bm{b}\sub{D - 1}', \dots, \bm{b}_0' \big] \text{.}
    \]
    In the case using the free group $F_3$ for $\mathscr{G}(\mathcal{R})$, we have $\bm{b}_i' = \bm{p}_{i, b^{-1}a} = \bm{p}_i$ where $\bm{p}_{i}$ is given by \cref{prop:TMmode}.\footnote{Remember that we only have a single copy of $\mathcal{T}_2$ (resp.\ $\mathcal{T}'$) this time and, thus, can omit the index $r$.} In turn, the $\bm{p}_i$ were given by $f, \bm{q}_{s(n) - 1}, \dots, \bm{q}_0, \bm{c}',\allowbreak \bm{c}_{s(n) - 1}, \dots, \bm{c}_0, z$ and $D$ was (the next power of two after) $3 + 2s(n)$ in the proof of \cref{prop:TMmode}.
    
    This is a problem since we now have exponentially many $\bm{c}_i$ and $\bm{q}_i$ and we, thus, cannot output all of them with a \LOGSPACE\ (or even polynomial time) transducer -- even if we compress every individual $\bm{c}_i$ and $\bm{q}_i$ using a straight-line program. On the positive side, we have that all $\bm{c}_i$ and all except linearly many $\bm{q}_i$ are structurally very similar: we have
    \[
      \bm{c}_i = {\checksub{\id}}^{- i} \, \checksub{\id}^{-1} \checksub{b^{-1}a} \, \checksub{\id}^{i} \quad \text{ and }\quad
      \bm{q}_j = {\checksub{\id}}^{- j} \, q_{\blank} \, \checksub{\id}^{j}
    \]
    for all $0 \leq i < s(n)$ and all $n + 1 \leq j < s(n)$. Due to this structural similarity, we will still be able to output a single straight-line program that generates a word equal to $B_0[\bm{c}_{s(n) - 1}, \dots, \bm{c}_{n + 1}]$ in $\mathscr{G}(\mathcal{T})$ and one generating a word equal to $B_0[\bm{q}_{s(n) - 1}, \dots, \bm{q}_{n + 1}]$ in $\mathscr{G}(\mathcal{T})$.
    
    \paragraph{Twisted Balanced Iterated Commutators.}
    For the construction of these straight-line programs, we use a twisted version of our nested commutators, where the left side has an additional conjugation.
  
    \begin{definition}
      Let $Q$ be some alphabet, $\alpha, \beta: \mathbb{N} \to Q^{\pm *}$ and $\gamma \in Q^{\pm *}$. For $\bm{p} \in {Q}^{\pm *}$, we define $B_{\beta, \alpha}(\bm{p}, D)$ inductively for all $D = 2^d$:
      \begin{align*}
        B_{\beta, \alpha}^\gamma(\bm{p}, 1) &= \bm{p} \text{ and} \\
        B_{\beta, \alpha}^\gamma(\bm{p}, 2D) &= \Big[
          \left( \gamma^{-D}\, B_{\beta, \alpha}^\gamma(\bm{p}, D) \, \gamma^D \right)^{\beta(d)}, \,
          \left( B_{\beta, \alpha}^\gamma(\bm{p}, D)  \right)^{\alpha(d)}
        \Big]
      \end{align*}
    \end{definition}\noindent
  
    The compatibility between commutators and conjugation allows us to use the twisted version to move an iterated conjugation of the commutator entries into the commutator itself. As we have already seen, the check state sequences $\bm{c}_i$ and (most) $\bm{q}_i$ are of this form.
    \begin{lemma}\labelx{lem:twistingIsConjugaction}
      Let $G$ be a group generated by a finite set $Q$, $\alpha, \beta: \mathbb{N} \to Q^{\pm *}$ and $\gamma \in Q^{\pm *}$ such that $\gamma$ commutes in $G$ with $\alpha(d)$ and $\beta(d)$ for all $d$. Furthermore, for some $\bm{p} \in Q^{\pm *}$, let
      \[
        \bm{p}_i = {\gamma}^{-i} \bm{p} \gamma^{i}
      \]
      for $0 \leq i$. Then, we have
      \[
        B_{\beta, \alpha}^\gamma(\bm{p}, D) = B_{\beta, \alpha} \big[ \bm{p}\sub{D - 1}, \dots, \bm{p}_0] \text{ in } G
      \]
      for all $D = 2^d$.
    \end{lemma}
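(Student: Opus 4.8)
The plan is to prove the identity by induction on $d$, where $D = 2^d$. The base case $d = 0$ (equivalently $D = 1$) is immediate from the two definitions: we have $B_{\beta, \alpha}^\gamma(\bm{p}, 1) = \bm{p} = \bm{p}_0 = B_{\beta, \alpha}[\bm{p}_0]$ in $G$.

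For the inductive step from $D$ to $2D$, I would split the list of entries $\bm{p}_{2D - 1}, \dots, \bm{p}_0$ into its right half $\bm{p}_{D - 1}, \dots, \bm{p}_0$ and its left half $\bm{p}_{2D - 1}, \dots, \bm{p}_D$. The right half is handled directly by the induction hypothesis, which gives $B_{\beta, \alpha}[\bm{p}_{D - 1}, \dots, \bm{p}_0] = B_{\beta, \alpha}^\gamma(\bm{p}, D)$ in $G$. For the left half, the key observation is the index shift
\[
  \bm{p}_{D + j} = \gamma^{-(D + j)} \bm{p} \gamma^{D + j} = \gamma^{-D} \bigl( \gamma^{-j} \bm{p} \gamma^{j} \bigr) \gamma^{D} = \bm{p}_j^{\gamma^D}
\]
for all $0 \leq j < D$, so that $B_{\beta, \alpha}[\bm{p}_{2D - 1}, \dots, \bm{p}_D] = B_{\beta, \alpha}[\bm{p}_{D - 1}^{\gamma^D}, \dots, \bm{p}_0^{\gamma^D}]$. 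Since $\gamma$ commutes with $\alpha(d')$ and $\beta(d')$ for all $d'$, so does $\gamma^D$; hence \autoref{fct:conjugatedCommutator} applies with $\gamma^D$ in place of $\gamma$ and yields $B_{\beta, \alpha}[\bm{p}_{D - 1}^{\gamma^D}, \dots, \bm{p}_0^{\gamma^D}] = B_{\beta, \alpha}[\bm{p}_{D - 1}, \dots, \bm{p}_0]^{\gamma^D}$, which by the induction hypothesis equals $\gamma^{-D} B_{\beta, \alpha}^\gamma(\bm{p}, D) \gamma^{D}$ in $G$.

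Substituting both halves into the recursive clause of \autoref{def:balancedCommutator} then gives
\[
  B_{\beta, \alpha}[\bm{p}_{2D - 1}, \dots, \bm{p}_0]
  = \Bigl[ \bigl( \gamma^{-D}\, B_{\beta, \alpha}^\gamma(\bm{p}, D)\, \gamma^{D} \bigr)^{\beta(d)}, \ \bigl( B_{\beta, \alpha}^\gamma(\bm{p}, D) \bigr)^{\alpha(d)} \Bigr] ,
\]
and the right-hand side is exactly $B_{\beta, \alpha}^\gamma(\bm{p}, 2D)$ by definition of the twisted commutator, completing the induction.

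The only slightly delicate point is the bookkeeping of the conjugating powers $\gamma^{\pm D}$ together with the verification that \autoref{fct:conjugatedCommutator} is legitimately invoked with $\gamma^D$ rather than $\gamma$; this causes no trouble because the hypothesis that $\gamma$ commutes with $\alpha(d')$ and $\beta(d')$ is closed under taking powers. Everything else is routine index chasing.
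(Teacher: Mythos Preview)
Your proof is correct and follows essentially the same approach as the paper: induction on $d$, the index-shift identity $\bm{p}_{D+j} = \bm{p}_j^{\gamma^D}$, and an application of \autoref{fct:conjugatedCommutator} with $\gamma^D$ in place of $\gamma$. The only cosmetic difference is direction: the paper starts from $B_{\beta,\alpha}^\gamma(\bm{p},2D)$ and unfolds toward $B_{\beta,\alpha}[\bm{p}_{2D-1},\dots,\bm{p}_0]$, whereas you start from the balanced commutator and work toward the twisted version.
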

    \begin{proof}
      We simply write $B^\gamma$ for $B_{\beta, \alpha}^\gamma$ and $B$ for $B_{\beta, \alpha}$ and prove the statement by induction on $d$. For $d = 0$ (or, equivalently $D = 1$), we have $B^\gamma[\bm{p}, 1] = \bm{p} = \bm{p}_0 = B[\bm{p}_0]$ and, for the inductive step from $d$ to $d + 1$ (or, equivalently, from $D$ to $2D$), we have in $G$:
      \begin{multline*}
        B^\gamma[\bm{p}, 2D] \\
        \begin{alignedat}[b]{2}
          &= \Big[
            \left( \gamma^{-D}\, B^\gamma(\bm{p}, D) \, \gamma^D \right)^{\beta(d)}, \,
            \left( B^\gamma(\bm{p}, D) \right)^{\alpha(d)}
            \Big] \\
          &= \Big[
            \left( \gamma^{-D}\, B \big[ \bm{p}\sub{D - 1}, \dots, \bm{p}_0 \big] \, \gamma^D \right)^{\beta(d)}, \\
          &\phantom{{}= \Big[}
            \left( B \big[ \bm{p}\sub{D - 1}, \dots, \bm{p}_0 \big] \right)^{\alpha(d)}
          \Big] && \text{ (by induction)} \\
          &= \Big[
            \left( B \big[ \bm{p}\sub{D - 1}^{\gamma^D}, \dots, \bm{p}_0^{\gamma^D} \big] \right)^{\beta(d)}, \,
            \left( B \big[ \bm{p}\sub{D - 1}, \dots, \bm{p}_0 \big] \right)^{\alpha(d)}
            \Big] && \text{ (by \cref{fct:conjugatedCommutator})} \\
          &= \Big[
            \left( B \big[ \bm{p}_{2D - 1}, \dots, \bm{p}\sub{D} \big] \right)^{\beta(d)}, \,
            \left( B \big[ \bm{p}\sub{D - 1}, \dots, \bm{p}_0 \big] \right)^{\alpha(d)}
            \Big] && \text{\rlap{ (by definition of $\bm{p}_i$)}} \\
          &= B \big[ \bm{p}_{2D - 1}, \dots, \bm{p}_0 \big] && \text{ (by definition)}
        \end{alignedat}\hfil\hbox{\qed}\hfilneg
      \end{multline*}
    \end{proof}
    
    The connection in \cref{lem:twistingIsConjugaction} allows us to use $B_{\beta, \varepsilon}^\gamma$ for the check sequences $\bm{c}_i$ and $\bm{q}_i$. The advantage of this approach is that the twisted version can efficiently be compressed into a straight-line program (although the corresponding (ordinary) balanced iterated commutator would have too many entries).
    \begin{lemma}\labelx{lem:straightLineForTwistedCommutator}
      If the functions $\alpha$ and $\beta$ are computable in \DLINSPACE (where the input is given in binary), we can compute a straight-line program for $B_{\beta, \alpha}^\gamma(\bm{p}, D)$ with $D = 2^d$ on input of $\bm{p}$ and $D$ in logarithmic space (where $D$ is given in binary).
    \end{lemma}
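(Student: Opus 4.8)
The plan is to build a straight-line program with essentially one nonterminal per \emph{level} of the defining recursion: since $B_{\beta,\alpha}^\gamma(\bm{p}, 2^d)$ is obtained from $B_{\beta,\alpha}^\gamma(\bm{p}, 2^{d-1})$ by a single application of the recursive rule, the recursion has depth only $d = \log D$, even though the word itself is exponentially long in $d$. Concretely, I would introduce nonterminals $A_0, \dots, A_d$ with the intention that $A_j$ derives $B_{\beta,\alpha}^\gamma(\bm{p}, 2^j)$. The base rule is $A_0 \to \bm{p}$ (a block of terminals from $Q^{\pm 1}$), and the rule for $A_j$ with $j \geq 1$ should spell out, as a word, the commutator $[\,(\gamma^{-2^{j-1}} A_{j-1}\, \gamma^{2^{j-1}})^{\beta(j-1)},\ (A_{j-1})^{\alpha(j-1)}\,]$. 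Two things prevent this from being literally a context-free rule over terminals and the $A_j$: the conjugation by $\gamma^{\pm 2^{j-1}}$ uses exponentially long powers of $\gamma$, and writing out a commutator requires the \emph{inverse} word $B_{\beta,\alpha}^\gamma(\bm{p}, 2^{j-1})^{-1}$, which an SLP cannot form from $A_{j-1}$ on its own.

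Both issues are handled by auxiliary nonterminals. For the powers of $\gamma$ I use repeated squaring: $G_0 \to \gamma$, $G_k \to G_{k-1} G_{k-1}$, and likewise $\bar{G}_0 \to \gamma^{-1}$, $\bar{G}_k \to \bar{G}_{k-1}\bar{G}_{k-1}$, so that $G_k$ and $\bar{G}_k$ derive $\gamma^{2^k}$ and $\gamma^{-2^k}$. For the inverse I add nonterminals $\bar{A}_0, \dots, \bar{A}_d$ with $\bar{A}_j$ deriving the formal inverse word (reverse, with each letter inverted) of $B_{\beta,\alpha}^\gamma(\bm{p}, 2^j)$; using $[h,g]^{-1} = [g,h]$, this again satisfies a clean recursion, and the rule for $\bar A_j$ is obtained from the rule for $A_j$ by swapping the two commutator arguments, replacing each occurrence of $A_{j-1}$ by $\bar{A}_{j-1}$ and vice versa, and each $G_{j-1}$ by $\bar{G}_{j-1}$. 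Expanding $[h,g] = h^{-1} g^{-1} h g$ together with the two conjugations, the right-hand side of $A_j$ becomes a fixed template whose slots are filled by the terminal blocks $\beta(j-1), \beta(j-1)^{-1}, \alpha(j-1), \alpha(j-1)^{-1}$ and by the nonterminals $A_{j-1}, \bar{A}_{j-1}, G_{j-1}, \bar{G}_{j-1}$. Since the paper's notion of straight-line program allows rules of arbitrary (here polynomial) length, this is a legal grammar, and correctness is an immediate induction on $d$: the grammar simply transcribes the definition of $B_{\beta,\alpha}^\gamma$.

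It remains to check the resource bound. The grammar has $O(d)$ nonterminals; each rule for $A_j$ or $\bar A_j$ has length $O(|\alpha(j-1)| + |\beta(j-1)|)$, and since $\alpha, \beta$ are \DLINSPACE-computable on binary inputs and their arguments here are at most $d$ (hence of binary length $O(\log d)$), these words have length polynomial in $d$; the base rules have length $|\bm{p}|$. Thus the whole SLP has size polynomial in the input length. To produce it in logarithmic space, one emits it letter by letter: a counter over nonterminals and a counter over positions within a rule each need $O(\log(\text{input length}))$ bits, and the only nontrivial subroutine is computing an individual letter of $\alpha(j-1)$ or $\beta(j-1)$ (or of its inverse), which by the \DLINSPACE bound takes $O(\log d)$ space; determining which template slot a given position falls into just requires knowing $|\alpha(j-1)|$ and $|\beta(j-1)|$, which one obtains by running the same machine and counting. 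I do not expect a genuine obstacle here: the one insight is that the defining recursion has depth $\log D$, so it fits into an SLP with logarithmically many nonterminals; the rest is the (mildly fiddly) bookkeeping of inverses, repeated-squaring blocks, and the logspace letter-by-letter output.
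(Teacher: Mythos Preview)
Your proposal is correct and essentially identical to the paper's proof: the paper also introduces one nonterminal $A_{2^i}$ per level together with repeated-squaring nonterminals $M_{2^i}$ for the $\gamma$-blocks, remarks that an inverse variable $X^{-1}$ is available whenever $X$ is, and checks the logspace bound by noting that one only needs to count to $d$ in binary and invoke the \DLINSPACE computability of $\alpha$ and $\beta$. Your explicit handling of the inverses via $[h,g]^{-1}=[g,h]$ and the separate $\bar A_j,\bar G_j$ nonterminals is slightly more detailed than the paper's one-line remark, but the construction and argument are the same.
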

    \begin{proof}
      The alphabet of the straight-line program is obviously $Q^{\pm 1}$ and we only give the variables implicitly. Clearly, if we can compute the production rules for a variable $X$ generating a state sequence $\bm{q}$, we can also compute the production rules for a variable $X^{-1}$ generating $\bm{q}^{-1}$. Therefore, we only give the positive version for every variable (but always assume that we also have a negative one).
      
      First, we add the production rules
      \[
        M_{2^{d - 1}} \to M_{2^{d - 2}} M_{2^{d - 2}}, \quad \dots, \quad M_{2^1} \to M_{2^0} M_{2^0}, \quad M_{2^0} \to \gamma
      \]
      because we need blocks of $\gamma$ for the recursion of $B_{\beta, \alpha}^\gamma$. Clearly, $M_{2^i}$ generates $\gamma^{2^i}$ for all $0 \leq i < d$. For the actual commutator, we use the variables $A_{2^i}$ for $0 \leq i \leq d$ and add the production rules
      \begin{align*}
        A_{2^0} &\to \bm{p} \quad \text{and} \\
        A_{2^{i + 1}} &\to
          \begin{array}[t]{c@{\;}c@{\;}c@{\;}c}
            \beta(i)^{-1} & {M}_{2^{i}}^{-1} {A}_{2^{i}}^{-1} M_{2^{i}} &\beta(i) &\; \alpha(i)^{-1} {A}_{2^{i}}^{-1} \alpha(i) \; \\
            \beta(i)^{-1} & {M}_{2^{i}}^{-1} A_{2^{i}} M_{2^{i}} & \beta(i) &\; \alpha(i)^{-1} A_{2^{i}} \alpha(i)
          \end{array}
      \end{align*}
      for all $0 \leq i < d$. Using a simple induction it is now easy to see that $A\sub{D}$ generates $B_{\beta, \alpha}^\gamma[\bm{p}, D]$ for $D = 2^d$. Accordingly, we choose $A\sub{D}$ as the start variable.
      
      By assumption, the functions $\alpha$ and $\beta$ can be computed in \DLINSPACE on input of the binary representation of $d$~-- this means, the required space is logarithmic in $d$.
      To compute the productions rules, we obviously only need to count up to $d$ (in binary) and this can clearly be done in logarithmic space with respect to the binary length of $D$ (which is $d$).\qed
    \end{proof}
    
    With these twisted commutators and the corresponding straight-line programs, we have introduced the missing pieces to adapt the proof for $\PSPACE$ and the (normal) word problem from \cref{thm:nonuniformPSPACE} to $\EXPSPACE$ and the compressed word problem.

    \begin{proof}[Proof of \cref{thm:compressedEXPSPACE}]
      As we have already remarked, we only need to show that the problem is \EXPSPACE-hard and construct the \GAut $\mathcal{T} = (Q, \Sigma, \delta)$ from the machine $M$ for an \EXPSPACE-complete problem in the same way as in the proof of \cref{thm:nonuniformPSPACE}.
      
      However, compared to the case of the (normal) word problem, we need to make some changes to the reduction of the (normal) word problem of $M$ to the compressed word problem of $\mathscr{G}(\mathcal{T})$. We map an input word $w$ of length $n$ for $M$ to a straight-line program for a state sequence $\bm{q}$ equal to\footnote{To be absolute precise: we actually need to repeat one of the entries of the outer commutator in order to get a power of two as the number of entries.}
      \[
        B_0\! \left[ f, B_0[\bm{q}_{s(n) - 1}, \dots, \bm{q}_{n + 1}], \bm{q}_{n}, \dots, \bm{q}_{0}, \bm{c}', B_0[\bm{c}_{s(n) - 1}, \dots, \bm{c}_{n + 1}], \bm{c}_{n}, \dots, \bm{c}_0, z \right]
      \]
      in $\mathscr{G}(\mathcal{T})$. If the Turing machine $M$ accepts the input $w$, this means (by \cref{prop:TMmode} and \cref{fct:commutatorInCrossDiagrams}) that there is some $u \in Y^* = (\{ 0, 1, \# \} \cup \Gamma)^* \subseteq \Sigma^*$ such that $\bm{q} \cdot u \$$ is in $\mathscr{G}(\mathcal{T})$ equal to
      \begin{align*}
        &B \Big[
          b^{-1}a, \,
          B[\underbrace{b^{-1}a, \dots, b^{-1}a}_{\mathclap{s(n) - 1 - n = 2^{2n^e} \text{ many}}}], \,
          \underbrace{b^{-1}a, \dots, b^{-1}a}_{\mathclap{\phantom{2^{2n^e}}n + 1 \text{ many}}}, \\
        &\phantom{B \Big[}
          b^{-1}a, \,
          B[\underbrace{b^{-1}a, \dots, b^{-1}a}_{\mathclap{s(n) - 1 - n = 2^{2n^e} \text{ many}}}], \,
          \underbrace{b^{-1}a, \dots, b^{-1}a}_{\mathclap{\phantom{2^{2n^e}}n + 1 \text{ many}}}, \,
          b^{-1}a \Big]
      \end{align*}
      (because we have $\bm{p}_{i} \cdot u \$ = r = b^{-1} a = b(D, i)$ in this case). The inner commutators are both equal to $B_3(2^{2n^e})$ (from \cref{ex:freeGroup}) and, thus, in $b^{-1} \{ a, b, c \}^{\pm *} a$ (and freely reduced). This means that the outer commutator is a non-empty freely reduced word in $F_3$ (as we have already pointed out in \cref{ex:freeGroup}). If the Turing machine does not accept $w$, we have $\bm{p}_i \cdot u \$ = \idGrp$ in $\mathscr{G}(\mathcal{T})$ for some $i$ and the commutators also collapse to the neutral element by \cref{fct:commutatorCollapses}. This shows that we can use the same argument as in the proof of \cref{thm:nonuniformPSPACE}.
      
      Thus, it remains to describe how we can compute a straight-line program for such a $\bm{q}$ in logarithmic space. If we have straight-line programs for the individual entries, we immediately also obtain straight-line programs for their inverses and can combine everything into a straight-line program for the overall nested commutator. This can be done (on the level of the variables) in logarithmic space by \cref{fct:BIsLogspaceComputable}. For $f$, $\bm{q}_{n}, \dots, \bm{q}_{0}$, $\bm{c}_{n}, \dots, \bm{c}_0$ and $z$, we do not even need straight-line programs but can output the words directly (as in the \PSPACE-case).
      
      For the inner commutators, recall that we have
      \begin{align*}
        \begin{array}{c@{\;}r@{\;}c@{\;}l}
          \bm{c}_{n + 1 + i} &= {\checksub{\id}}^{- i} & \left( \checksub{\id}^{- (n + 1)} \, \checksub{\id}^{-1} \kern-0.4ex \checksub{b^{-1}a} \, \checksub{\id}^{n + 1} \right) & \checksub{\id}^{i} \quad \text{and} \\
          \bm{q}_{n + 1 + i} &= {\checksub{\id}}^{- i} & \left( \checksub{\id}^{- (n + 1)} \, q_{\blank} \, \checksub{\id}^{n + 1} \right) & \checksub{\id}^{i}
        \end{array}
      \end{align*}
      for all $0 \leq i < 2^{2n^e}$. Thus, we have
      \begin{align*}
        B_0[\bm{c}_{s(n) - 1}, \dots, \bm{c}_{n + 1}] &= B_0^{\checksub{\id}}(\checksub{\id}^{- (n + 1)} \checksub{\id}^{-1} \kern-0.4ex \checksub{b^{-1}a} \checksub{\id}^{n + 1}, 2^{2n^e}) \quad \text{and} \\
        B_0[\bm{q}_{s(n) - 1}, \dots, \bm{q}_{n + 1}] &= B_0^{\checksub{\id}}(\checksub{\id}^{- (n + 1)} q_{\blank} \checksub{\id}^{n + 1}, 2^{2n^e})
      \end{align*}
      in $\mathscr{G}(\mathcal{T})$ by \cref{lem:twistingIsConjugaction} where we write $B_0^{\checksub{\id}}$ for $B_{\beta_0, \alpha_0}^{\checksub{\id}}$. In order to apply \cref{lem:twistingIsConjugaction}, we need that $\checksub{\id}$ commutes with ($\varepsilon$ and) $\beta_0(d)$ for all $d$. However, this immediately follows from the construction of $\mathcal{T}$ (as $\checksub{\id}$ only manipulates the TM part of the input word while $b_0$ and $c_0$ only manipulate the commutator part).
      
      Finally, we can compute a straight-line program for the two twisted commutators in \LOGSPACE by \cref{lem:straightLineForTwistedCommutator}. Here, it is important that $n$ is the input length and that, thus, $2^{2n^e}$ can be output in binary (since it has length $2n^e$).
      
      The last remaining part is a straight-line program for 
      \[
        \bm{c}' = {\checksub{\id}}^{-s(n)} c \checksub{\id}^{s(n)} = {\checksub{\id}}^{-2^{2n^e}} \, {\checksub{\id}}^{- (n + 1)} c \checksub{\id}^{n + 1} \, \checksub{\id}^{2^{2n^e}} \text{.}
      \]
      The inner part can be output directly and the outer $\checksub{\id}$-blocks of length ${2^{2n^e}}$ can be generated in the same way as in the proof of \cref{lem:straightLineForTwistedCommutator}.\footnote{In fact, we already have the required production rules and variables up to $M_{2^{2n^e - 1}}$.}\qed
    \end{proof}
    
    We can take the disjoint union of the \GAut over $\Sigma$ with a \PSPACE-complete word problem and the \GAut over $\Sigma$ with an \EXPSPACE-complete compressed word problem. In this way, we obtain an automaton group whose (normal) word problem is $\PSPACE$-complete and whose compressed word problem is $\EXPSPACE$-complete and, thus, provably harder (by the space hierarchy theorem \cite[Theorem~6]{stearns1965hierarchies}, see also e.\,g.\ \cite[Theorem~7.2, p.~145]{papadimitriou97computational} or \cite[Theorem~4.8]{arora2009computational}).
    \begin{corollary}
      There is an automaton group with a binary alphabet whose word problem
      \problem
        [a \GAut $\mathcal{T} = (Q, \Sigma, \delta)$ with $|\Sigma| = 2$]
        {a state sequence $\bm{q} \in {Q}^{\pm *}$}
        {is $\bm{q} = \idGrp$ in $\mathscr{G}(\mathcal{T})$?}\noindent
      is $\PSPACE$-complete and whose compressed word problem
      \problem
        [the same \GAut $\mathcal{T} = (Q, \Sigma, \delta)$ with $|\Sigma| = 2$]
        {a straight-line program generating a state sequence $\bm{q} \in Q^{\pm *}$}
        {is $\bm{q} = \idGrp$ in $\mathscr{G}(\mathcal{T})$?}\noindent
      is $\EXPSPACE$-complete.
    \end{corollary}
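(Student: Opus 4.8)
The plan is to form the disjoint union of the two automata already constructed and to observe that, on disjoint generating sets, the word problem and the compressed word problem are simply inherited from the respective summand. Concretely, I would let $\mathcal{T}_1 = (Q_1, \boxedAlph, \delta_1)$ be the \GAut from \autoref{thm:nonuniformPSPACE} with $\PSPACE$-complete word problem and $\mathcal{T}_2 = (Q_2, \boxedAlph, \delta_2)$ be the \GAut from \autoref{thm:compressedEXPSPACE} with $\EXPSPACE$-complete compressed word problem; both operate over the same binary alphabet $\boxedAlph$, and after renaming states we may assume $Q_1 \cap Q_2 = \emptyset$. Setting $\mathcal{T} = (Q_1 \uplus Q_2, \boxedAlph, \delta_1 \uplus \delta_2)$, the automaton $\mathcal{T}$ is again complete, deterministic and invertible, so $\mathscr{G}(\mathcal{T})$ is an automaton group over a binary alphabet with finite generating set $Q_1 \uplus Q_2$.

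The crucial point is that forming the disjoint union introduces no new transitions: a run that starts in a state of $Q_i$ and reads an arbitrary input word stays entirely within $Q_i$. Hence, for every state sequence $\bm{q} \in Q_i^{\pm *}$ and every input $u \in \boxedAlph^*$, the output $\bm{q} \circ u$ computed in $\mathcal{T}$ agrees with the one computed in $\mathcal{T}_i$; in particular, $\bm{q} = \idGrp$ in $\mathscr{G}(\mathcal{T})$ if and only if $\bm{q} = \idGrp$ in $\mathscr{G}(\mathcal{T}_i)$. Therefore the word problem of $\mathscr{G}(\mathcal{T})$ restricted to inputs from $Q_1^{\pm *}$ is exactly the word problem of $\mathscr{G}(\mathcal{T}_1)$, and the compressed word problem of $\mathscr{G}(\mathcal{T})$ restricted to straight-line programs generating state sequences in $Q_2^{\pm *}$ is exactly the compressed word problem of $\mathscr{G}(\mathcal{T}_2)$. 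Consequently, the logarithmic-space reductions underlying \autoref{thm:nonuniformPSPACE} and \autoref{thm:compressedEXPSPACE} carry over verbatim, giving $\PSPACE$-hardness of the word problem and $\EXPSPACE$-hardness of the compressed word problem of $\mathscr{G}(\mathcal{T})$.

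For the matching upper bounds I would invoke the general facts already recalled in the paper: as a fixed automaton group, $\mathscr{G}(\mathcal{T})$ has its word problem in $\PSPACE$ via the guess-and-check algorithm mentioned after \autoref{thm:uniformPSPACE}, and its compressed word problem in $\EXPSPACE$ by first uncompressing the input straight-line program (at most an exponential blow-up in length) and then running that linear-space algorithm, which then uses exponential space. Together with the hardness results this shows that the word problem of $\mathscr{G}(\mathcal{T})$ is $\PSPACE$-complete and its compressed word problem is $\EXPSPACE$-complete; since $\PSPACE \subsetneq \EXPSPACE$ by the space hierarchy theorem, the two problems are of provably different difficulty. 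I do not expect any genuine obstacle in this argument; the only step that needs to be written out with a little care is the observation that a run stays within the summand it started in, which is immediate from the fact that $Q_1$ and $Q_2$ are disjoint and the disjoint union of the transition relations adds no edges between them.
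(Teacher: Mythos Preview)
Your proposal is correct and follows essentially the same approach as the paper: the corollary is deduced by taking the disjoint union of the two \GAuta from \autoref{thm:nonuniformPSPACE} and \autoref{thm:compressedEXPSPACE}, with the hardness results inherited from the respective summands and the upper bounds coming from the general algorithms. The paper's own proof is even terser than yours, simply pointing to the disjoint union and the space hierarchy theorem without spelling out the observation about runs staying within their summand.
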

  \end{section}
  
  \bibliographystyle{spmpsci}
  \bibliography{references}
\end{document}